\def\Comment#1{{\bf \textsl{$\langle\!\langle$#1\/$\rangle\!\rangle$}}}
\newcommand{\headers}[3]{
\newpage\setcounter{page}{1}
\def\@oddhead{$\underline{\hbox to\textwidth{%
\textbf{\rlap{#1}\phantom{hj}\hfill #2 \hfill \llap{#3}}}}$}
\def\@oddfoot{\hfill\thepage\hfill}}
\numberwithin{figure}{section}
\numberwithin{table}{section}
\numberwithin{equation}{section}%
\newcommand{\HLinkShort}[2]{\hyperref[#2]{#1\ref*{#2}}}
\newcommand{\HLink}[2]{\hyperref[#2]{#1~\ref*{#2}}}
\newcommand{\HLinkPage}[2]{\hyperref[#2]{#1~\ref*{#2}%
      $_\text{p\pageref{#2}}$}}
\newcommand{\HLinkPageOnly}[1]{\hyperref[#1]{Page~\refpage*{#1}%
      $_\text{p\pageref{#1}}$}}
\newcommand{\HLinkSuffix}[3]{\hyperref[#2]{#1\ref*{#2}{#3}}}
\newcommand{\HLinkPageSuffix}[3]{\hyperref[#2]{#1\ref*{#2}%
      #3$_\text{p\pageref{#2}}$}}
\providecommand{\eqlab}[1]{}%
\renewcommand{\eqlab}[1]{\label{equation:#1}}
\theoremstyle{plain}%
\newtheorem{theorem}{Theorem}[section]
\newtheorem{lemma}[theorem]{Lemma}
\newtheorem{corollary}[theorem]{Corollary}
\newtheorem{proposition}[theorem]{Proposition}
\newtheorem{prop}[theorem]{Proposition}
\theoremstyle{plain}%
\newtheorem*{remark:unnumbered}[theorem]{Remark}%
\newtheorem{remark}[theorem]{Remark}%
\newtheorem{definition}[theorem]{Definition}
\theoremstyle{nonumberplain}%
\newenvironment{proof}{\noindent{\bf Proof:}}%
{\hspace*{\fill}$\Box$\par \vspace{4mm}}
\newcommand{\myqedsymbol}{$\square$}
\newenvironment{proofof}[1]{\smallskip\noindent{\bf Proof of #1:}}%
{\hspace*{\fill}$\Box$\par \vspace{3mm}}
\def\compactify{\itemsep=0pt \topsep=0pt \partopsep=0pt \parsep=0pt}
\let\latexusecounter=\usecounter
\newcommand{\eps}{\varepsilon}%
\def\script#1{\mathcal{#1}}
\def\opt{\textsc{OPT}}
\def\etal{\text{et al.}\xspace}
\def\sep{\;|\;}
\def\card#1{|#1|}
\def\set#1{\{#1\}}
\def\path{\mathrm{path}}
\newcommand{\pth}[1]{\mleft({#1}\mright)}%
\def\iter{i}
\newcommand{\prob}[1]{\textup{\fontencoding{T1}\fontfamily{ppl}\fontseries{m
}\fontshape{n}\selectfont #1}\xspace}
\def\sC{\script{C}}
\def\eps{\varepsilon}
\def\script#1{\mathcal{#1}}
\def\opt{\textsc{OPT}}
\def\etal{\text{et al.}\xspace}
\def\sep{\;|\;}
\def\mA{\script{A}}
\def\mL{\script{L}}
\def\mLB{\script{L}_{\sf b}}
\def\mLR{\script{L}_{\sf r}}
\def\mP{\script{P}}
\def\mPE{\script{P}_{{\sf elem}}}
\def\mPV{\script{P}_{{\sf vc}}}
\def\mC{\script{C}}
\def\mT{\script{T}}
\def\mF{\script{F}}
\def\mG{\script{G}}
\def\mTB{\mT_{\sf b}}
\def\mTR{\mT_{\sf r}}
\def\card#1{|#1|}
\def\set#1{\{#1\}}
\def\bd(#1){\textup{\fontencoding{T1}\fontfamily{ppl}\fontseries{m}\fontshape{n}\selectfont {bd}}\xspace(#1)}
\def\hC{\hat{C}}
\def\hS{\hat{S}}
\def\hT{\hat{T}}
\def\hX{\hat{X}}
\def\hY{\hat{Y}}
\def\NP{\ensuremath{\mathrm{\mathbf{NP}}}}
\newcommand{\algViolated}{\Algorithm{ViolatedBisets}\xspace}
\newcommand{\algCover}{\Algorithm{Cover}\xspace}
\renewcommand{\Algorithm}[1]{{\AlgorithmI{#1}}}
\def\rE{r_{\sf elem}}
\def\fE{f_{\sf elem}}
\def\rV{r_{\sf v}}
\def\hV{h_{\sf v}}
\begin{document}
\title{Node-Weighted Network Design in Planar\\ and Minor-Closed
  Families of Graphs\thanks{This work was partially supported by NSF
    grant CCF-1016684. This work was mostly done when the second and third
    authors were students at University of Illinois. This paper builds
    upon an earlier version with results on edge-connectivity that appeared
    in {\em Proceedings of ICALP, 2012}.}
}

\author{
Chandra Chekuri\thanks{Dept.\ of Computer Science;
	University of Illinois; Urbana; IL 61801, USA; %
      \href{mailto:chekuri@illinois.edu}{chekuri@illinois.edu}.}
\and
Alina Ene\thanks{
      Dept.\ of Computer Science, Boston University; Boston; MA 02215, USA; %
      \href{mailto:aene@bu.edu}{aene@bu.edu}.}
\and
Ali Vakilian\thanks{
      Dept.\ of Computer Science, University of Wisconsin; Madison; WI
      53706, USA; %
      \href{mailto:vakilian@wisc.edu}{vakilian@wisc.edu}.}%
}
\date{}
\maketitle

\thispagestyle{empty}
\begin{abstract}
  We consider {\em node-weighted} survivable network design
  (\prob{SNDP}) in planar graphs and minor-closed families of
  graphs. The input consists of a node-weighted undirected graph
  $G=(V,E)$ and integer connectivity requirements $r(uv)$ for each
  unordered pair of nodes $uv$. The goal is to find a minimum
  weighted subgraph $H$ of $G$ such that $H$ contains $r(uv)$
  disjoint paths between $u$ and $v$ for each node pair $uv$. Three
  versions of the problem are edge-connectivity \prob{SNDP}
  (\prob{EC-SNDP}), element-connectivity \prob{SNDP}
  (\prob{Elem-SNDP}) and vertex-connectivity \prob{SNDP}
  (\prob{VC-SNDP}) depending on whether the paths are required to be
  edge, element or vertex disjoint respectively. Our main result is an
  $O(k)$-approximation algorithm for \prob{EC-SNDP} and
  \prob{Elem-SNDP} when the input graph is planar or more generally if
  it belongs to a proper minor-closed family of graphs; here
  $k=\max_{uv} r(uv)$ is the maximum connectivity requirement. This
  improves upon the $O(k \log n)$-approximation known for node-weighted
  \prob{EC-SNDP} and \prob{Elem-SNDP} in general graphs
  \cite{Nutov10}.  We also obtain an $O(1)$ approximation for
  node-weighted \prob{VC-SNDP} when the connectivity requirements are
  in $\{0,1,2\}$; for higher connectivity our result for
  \prob{Elem-SNDP} can be used in a black-box fashion to obtain a
  logarithmic factor improvement over currently known general graph
  results. Our results are inspired by, and generalize, the work of
  Demaine, Hajiaghayi and Klein \cite{DemaineHK14} who obtained
  constant factor approximations for node-weighted Steiner tree and
  Steiner forest problems in planar graphs and proper minor-closed
  families of graphs via a primal-dual algorithm.
\end{abstract}

\section{Introduction}
\label{sec:intro}
Network design is an important area of discrete optimization with
several practical applications. Moreover, the clean optimization
problems that underpin the applications have led to fundamental
theoretical advances in combinatorial optimization, algorithms and
mathematical programming. In this paper we consider a class of
problems that can be modeled as follows. Given an undirected graph
$G=(V,E)$ find a subgraph $H$ of {\em minimum weight/cost} such that
$H$ satisfies certain desired {\em connectivity} properties.  A common
cost model is to assign a non-negative weight $w(e)$ to each $e \in E$
and the weight of $H$ is simply the total weight of edges in
it. A number of well-studied problems can be cast as special
cases. Examples include polynomial-time solvable problems such as the
minimum weight spanning tree (\prob{MST}) problem when $H$ is required
to connect all the nodes of $G$, and the \NP-hard \prob{Steiner Tree}
problem where $H$ is required to connect only a given subset $S
\subseteq V$ of terminals. A substantial generalization of these
problems is the {\em survivable network design problem} (\prob{SNDP})
which is defined as follows. The input, in addition to $G$, consists
of an integer requirement function $r(uv)$ for each (unordered) pair
of nodes $uv$ in $G$; the goal is to find a minimum-weight subgraph
$H$ such that it contains, for each pair of nodes $uv$, $r(uv)$
disjoint-paths between $u$ and $v$. We obtain two fundamental
variants: if the $r(uv)$ paths for $uv$ are required to be
edge-disjoint it is called edge-connectivity \prob{SNDP} problem
(\prob{EC-SNDP}), and if they are required to be internally
vertex-disjoint the problem is called vertex-connectivity \prob{SNDP}
(\prob{VC-SNDP}). These problems are relevant in designing
fault-tolerant networks. It is not hard to see that \prob{VC-SNDP} is
a generalization of \prob{EC-SNDP}. Moreover \prob{VC-SNDP} is known
to be strictly harder than \prob{EC-SNDP} from an approximation point
of view. A problem of intermediate complexity is the
element-connectivity \prob{SNDP} problem (\prob{Elem-SNDP}): here the
vertex set $V$ is partitioned into \emph{reliable nodes} $R$ and
\emph{non-reliable nodes} $V\setminus R$. The requirements are only
between terminal nodes $T \subseteq R$. The goal is to find a subgraph
$H$ of minimum weight such that each pair of terminals $uv$ has
$r(uv)$ element-disjoint paths, that is, paths that are disjoint in
edges and non-reliable nodes. The problems mentioned so far arise
naturally in concrete applications.  Algorithmic approaches for these
problems are in fact based on solving a larger class of abstract
network design problems such as covering proper and skew-supermodular
cut-requirement functions that we describe formally later.

\medskip
\noindent {\bf Node weights:} The cost of a network is dependent on
the application. In connectivity problems, as we remarked, a common
model is the edge-weight model. A more general problem is defined
when each node $v$ of $G$ has a weight $w(v)$ and the weight of $H$ is
the total weight of the nodes in $H$\footnote{For many problems of
  interest including \prob{Steiner Tree} and \prob{SNDP} the version
  with weights on both edges and nodes can be reduced to the
  version with only node weights; simply sub-divide an edge $e$ by
  placing a new node $v_e$ and place the weight of $e$ on
  $v_e$.}. Node weights are relevant in several applications, in
particular telecommunication networks, where they can model the cost
of setting up routing and switching infrastructures at a given node.
There have also been several recent applications in wireless network
design \cite{Panigrahi11,Nutov12-latin} where the weight function is closely
related to that of node weights. We refer the reader to
\cite{DemaineHK14} for some additional applications of node weights
to network formation games.

The node-weighted versions of network design problems often turn out
to be strictly harder to approximate than their corresponding
edge-weighted versions.  For instance the \prob{Steiner Tree} problem
admits a $1.39$-approximation for edge-weights \cite{ByrkaGRS10};
however, Klein and Ravi \cite{KleinR95} showed via a simple reduction
from the \prob{Set Cover} problem that the node-weighted \prob{Steiner
  Tree} problem on $n$ nodes is hard to approximate to within an
$\Omega(\log n)$-factor unless ${\bf P}={\bf NP}$.
They also described a $(2\log k)$-approximation where $k$ is the
number of terminals. A more dramatic difference emerges for
\prob{SNDP}. While Jain gave a $2$-approximation for \prob{EC-SNDP}
with edge-weights \cite{Jain01}, the best known approximation
guarantee for \prob{EC-SNDP} with node-weights is $O(k \log n)$
\cite{Nutov10} where $k = \max_{uv} r(uv)$ is the maximum connectivity
requirement. Moreover, Nutov \cite{Nutov10} gives evidence, via a
reduction from the \prob{$k$-Densest-Subgraph} problem, that for the
node-weighted problem a dependence on $k$ in the approximation ratio
is necessary.  Table~\ref{table:approx-ratios} summarizes the known
approximation ratios and hardness results for some of the main
problems considered in this paper. One notices that the node-weighted
version of problems have at least a logarithmic factor worse
approximation than the corresponding edge-weighted problem.

\begin{table}[t]
  \centering
  \resizebox{\textwidth}{!}{%
  \renewcommand{\arraystretch}{1.5}
  \begin{tabular}{llllll}
    \toprule
    \multirow{2}{*}{}& \multicolumn{2}{c}{\bf Edge Weighted Graphs} & & \multicolumn{2}{c}{\bf Node Weighted Graphs} \\
    \cline{2-3}\cline{5-6}
    & {\bf General} & {\bf Planar} & & {\bf General} & {\bf Planar} \\ 
    \midrule
    \prob{Steiner Tree} & 1.39 \cite{ByrkaGRS13} & PTAS
                                                   \cite{BorradaileKM09}
                              & & $O(\log n)$ \cite{KleinR95} &  2.4
                                                              \cite{BermanY12}
    \\ 
    \prob{Steiner Forest} & 2 \cite{AgrawalKR95} & PTAS
                                                   \cite{BateniHM11} &
                                                                       & $O(\log n)$ \cite{KleinR95} & 2.4 \cite{BermanY12} \\ 
    \prob{$\{0,1\}$ Proper Functions} & 2 \cite{GoemansW95} & 2
                                                            \cite{GoemansW95}
                                      & & $O(\log n)$
                                              \cite{KleinR95} & 6 \cite{DemaineHK14} \\ 
    \prob{EC-SNDP} & 2 \cite{Jain01} & 2 \cite{Jain01} & & $O(k \log n)$ \cite{Nutov10} & $10k$ \\ 
    \prob{Eelm-SNDP} & 2 \cite{FleischerJW06} & 2 \cite{FleischerJW06}
                                      & & $O(k \log n)$ \cite{Nutov18} & $10k$ \\ 
    \prob{$\{0,1,2\}$ VC-SNDP} & 2
                                                      \cite{FleischerJW06}
                         & 2 \cite{FleischerJW06} & & $O(\log n)$
                                                    \cite{Nutov10} &  $13$ \\ 
    \prob{VC-SNDP} & $O(k^3 \log n)$ \cite{ChuzhoyK12} & $O(k^3 \log
                                                         n)$
                                                         \cite{ChuzhoyK12}
                                      & & $O(k^4 \log^2
                                        n)$ \cite{ChuzhoyK12} & $O(k^4 \log n)$ \\ 
    \bottomrule
  	\end{tabular}
	}
  \caption{Approximation ratios for \prob{SNDP} and related
    problems. The entries with no citation are from this paper. There
    is an $\Omega(\log n)$-hardness for all the node-weighted problems
    in the table for general graphs.}
  \label{table:approx-ratios}
\end{table}

Demaine, Hajiaghayi and Klein \cite{DemaineHK14} considered the
approximability of the node-weighted \prob{Steiner Tree} problem in
\emph{planar} graphs. They were partly motivated by the goal of
overcoming the $\Omega(\log n)$-hardness that holds in general
graphs. They described a primal-dual algorithm that is adapted
from the well-known algorithm for the edge-weighted case
\cite{AgrawalKR95,GoemansW95}, and showed that it gives a
$6$-approximation in planar graphs. Demaine \etal also showed that
their algorithm works for a more general class of $\{0,1\}$-proper functions (first considered by Goemans and Williamson
\cite{GoemansW95}) that includes several other problems such as the
\prob{Steiner Forest} problem.  Their analysis shows that one obtains
a constant factor approximation (the algorithm is the same) for any
proper minor-closed family of graphs where the constant depends on the
family. In addition to their theoretical values, these results have the
potential to be useful in practice; the algorithm is simple and
efficient to implement, and it is reasonable to assume that
real-work networks that arise in several applications are close to
being planar.

\subsection{Our Results}
In this paper we consider node-weighted network design problems in
planar graphs for \emph{higher connectivity}. In particular we consider
\prob{EC-SNDP}, \prob{Elem-SNDP} and \prob{VC-SNDP} and show that the
basic insight in \cite{DemaineHK14} can be built upon to develop improved
approximation algorithms for these more general problems as
well. Although we follow the high-level outline of
\cite{DemaineHK14}, our results require susbtantial technical work.
Our core result is for \prob{Elem-SNDP} which captures \prob{EC-SNDP}
as a special case and can be used in a black box fashion for
\prob{VC-SNDP}.

\begin{theorem}\label{thm:intro-elem-sndp}
  There is a $10k$-approximation algorithm for node-weighted
  \prob{Elem-SNDP} in planar graphs where $k$ is the maximum
  requirement.  Moreover, an $O(k)$ approximation guarantee also holds
  for graphs from a proper minor-closed family of graphs $\mG$ where
  the constant in the approximation factor only depends on the family
  $\mG$.
\end{theorem}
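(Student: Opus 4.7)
The plan is to follow the well-known augmentation paradigm of Williamson--Goemans--Mihail--Vazirani, combined with a node-weighted primal-dual algorithm in the spirit of \cite{DemaineHK14}. The algorithm proceeds in $k$ phases. At the start of phase $i$ we have a subgraph $H_{i-1}$ such that every demand pair $uv$ has $\min(i-1, r(uv))$ element-disjoint paths in $H_{i-1}$; the goal of phase $i$ is to produce $H_i = H_{i-1} \cup A_i$ where $A_i$ is a set of nodes (together with the induced edges of $G$) of weight $O(w(\opt))$ that raises each pair's element-connectivity to $\min(i, r(uv))$. Summing over the $k$ phases yields the claimed $O(k)$ guarantee, so the whole burden is to prove that a single phase admits an $O(1)$-approximation in any graph from the minor-closed family $\mG$.

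Fix a phase $i$. The residual requirements in phase $i$ can be captured by a \emph{cut-requirement} function $f \colon 2^V \to \{0,1\}$ on bisets (to account for non-reliable nodes); standard theory shows $f$ is uncrossable, so at every intermediate moment the family of minimal violated bisets is laminar, in fact pairwise disjoint on their ``inner parts.'' The primal-dual algorithm mirrors \cite{DemaineHK14}: grow a feasible dual $y$ by raising $y_S$ uniformly over all currently minimal violated bisets $S$ until the node constraint $\sum_{S : v \in \Gamma(S)} y_S \le w(v)$ becomes tight at some non-reliable node $v$ (here $\Gamma(S)$ denotes the ``boundary'' of $S$ in the current partially built solution); add $v$ to $A_i$, update the violated family, and repeat. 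When no biset is violated we run a reverse-delete step that removes nodes whose deletion preserves feasibility.

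The approximation analysis reduces, in the usual primal-dual fashion, to bounding the average boundary-degree of the chosen nodes with respect to the laminar family of minimal violated bisets at each dual-growth moment. Concretely, if at some moment there are $q$ active minimal violated bisets and $\mF$ is the set of nodes already committed to $A_i$ that have not yet been removed, we need
\[
  \sum_{v \in \mF} \bigl|\{\, S \text{ active minimal violated} : v \in \Gamma(S)\,\}\bigr| \;\le\; c \cdot q,
\]
with $c$ depending only on $\mG$. This is where planarity (or exclusion of a fixed minor) enters: contracting each active minimal biset to a super-node in $G[H_{i-1} \cup \mF]$ yields a graph that is still in $\mG$, and the above sum is the total degree of the super-nodes in this minor, which is at most $c \cdot q$ by the edge-density bound for $\mG$ (e.g.\ $c = 6$ for planar graphs, Mader/Thomason for general $\mG$). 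A short laminarity / reverse-delete argument of the type used in \cite{DemaineHK14} is then enough to pass from this moment-wise bound to an overall $O(1)$ factor against $\sum_S y_S$, which in turn lower-bounds $w(\opt)$ for phase $i$.

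The main obstacle is the last density claim in the uncrossable (as opposed to $\{0,1\}$-proper) setting: in \cite{DemaineHK14} the active minimal violated sets are literally the connected components of a contracted graph missing a terminal, so planarity of the contracted graph is immediate and the degree bound is transparent. In the augmentation setting the minimal violated bisets are determined by element-cuts in $H_{i-1} \cup \mF$ and need not come from components, and reverse-delete can in principle leave behind nodes whose boundary contributes to many bisets at once. Handling this cleanly will require (i) proving that after reverse-delete, every remaining node of $\mF$ witnesses a distinct minimal violated biset whose boundary it is needed for, so that super-node degrees are charged only to genuine boundary witnesses, and (ii) verifying that the biset-contraction indeed stays in $\mG$, which needs care because the ``outer'' parts of bisets for element-connectivity overlap at non-reliable nodes. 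Once these two technical points are in place, the $O(k)$ bound, with the explicit $10k$ for planar graphs, falls out of the per-phase constant and the $k$-phase outer loop.
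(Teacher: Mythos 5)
Your outline matches the paper's strategy: augmentation in $k$ phases reducing to covering a $\{0,1\}$-biuncrossable biset function, a node-weighted primal-dual in each phase, and planarity/minor-closure entering only through a sparsity bound on a contracted graph. You are also right to flag the obstacle: the sparsity bound alone gives $|E(K)| \le c\,|V(K)|$, where $K$ is the contracted graph on the super-nodes (one per minimal violated biset) together with the relevant critical nodes of $\mF$; to conclude $|E(K)| = O(q)$ one must first show that the number of critical nodes is $O(q)$, which does \emph{not} follow from minimality of the bisets or from sparsity.

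However, your proposal leaves that crucial step unresolved, and the fix you sketch does not work as stated. You propose ``proving that after reverse-delete, every remaining node of $\mF$ witnesses a distinct minimal violated biset whose boundary it is needed for,'' which would amount to an injective map from critical vertices of a node-minimal cover to minimal violated bisets. No such injection exists in general: a single critical vertex can be adjacent to several minimal violated bisets, and conversely several critical vertices can be needed only because of edges that ultimately serve the same witness biset deep in the laminar structure. The actual bound proved in the paper (Theorem~\ref{thm:main-counting}, giving $4\,|\mC|$ critical vertices) requires a more intricate argument: one fixes an \emph{edge}-minimal subcover $F$, classifies the critical vertices as ``regular'' (incident to a necessary blue edge of $F$) or ``special'' (only incident to red edges), and bounds each population by $2|\mC|$ separately. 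The regular count relies on a laminar family of $F$-witness bisets and two distinct charging rules (parent charge and subtree charge per biset of $\mC$), each using properties such as Lemma~\ref{lem:pair-witness}, Lemma~\ref{lem:pair-non-witness-edge-mapping}, and the fact that no critical vertex lies on the boundary of any violated biset (Lemma~\ref{lem:aug-func-bd}). The special count is a separate tree-degree argument \`a la Jain \etal on a second witness family for a red-edge-minimal subcover $F'$. None of this is a ``short laminarity / reverse-delete argument''; it is the paper's main technical contribution, and your proposal as written does not supply it. Also, the explicit constant you quote for planar graphs ($c = 6$) is not what gives the $10k$ bound: the paper's derivation in Lemma~\ref{lem:elem-counting-lemma} combines the bound $|N_i| \le 4|\mC_i|$ with the bipartite planar bound $|E(K)| \le 2|V(K)|$ to get $\gamma \le 10$.
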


Node-weighted \prob{EC-SNDP} can be reduced easily in an approximation
preserving fashion to node-weighted \prob{Elem-SNDP} by choosing
all nodes in the input graph to be reliable nodes. Thus the preceding
theorem applies to node-weighted \prob{EC-SNDP}.
Chuzhoy and Khanna \cite{ChuzhoyK12} showed a generic reduction of
\prob{VC-SNDP} to \prob{Elem-SNDP} that does not change the underlying graph.
Using Theorem~\ref{thm:intro-elem-sndp} and the reduction in
\cite{ChuzhoyK12} we obtain the following corollary.

\begin{corollary}\label{cor:intro-vc-sndp}
  There is an $O(k^4 \log n)$-approximation for node-weighted \prob{VC-SNDP}
  in proper   minor-closed family of graphs.
\end{corollary}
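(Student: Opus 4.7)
The plan is to combine \thmref{intro-elem-sndp} with the black-box reduction of Chuzhoy and Khanna~\cite{ChuzhoyK12}. First, I would recall how their reduction works: given an $\alpha$-approximation for node-weighted \prob{Elem-SNDP} with maximum requirement $k$, they produce an $O(\alpha \cdot k^3 \log n)$-approximation for node-weighted \prob{VC-SNDP} with maximum requirement $k$. The reduction proceeds by iteratively augmenting connectivity one unit at a time: for $i = 1, \ldots, k-1$, it invokes the \prob{Elem-SNDP} subroutine to boost the current $i$-vertex-connectivity of the chosen subgraph to $(i+1)$-vertex-connectivity. Each augmentation phase is implemented by covering an appropriate skew-supermodular requirement function via several calls to \prob{Elem-SNDP}, where the nodes already selected into the solution play the role of reliable nodes and the remaining vertices of $G$ are the non-reliable elements. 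Crucially, the underlying ground graph for every \prob{Elem-SNDP} subproblem is $G$ itself (possibly with a relabeling of which nodes are reliable).

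Next I would verify that the hypothesis of \thmref{intro-elem-sndp} holds throughout the reduction. Because every \prob{Elem-SNDP} subproblem is defined on the same host graph $G$ (or, at worst, on a subgraph of $G$ obtained by contracting or deleting nodes chosen into the partial solution), and since the class $\mG$ is minor-closed, all these instances lie in $\mG$. Therefore \thmref{intro-elem-sndp} can be applied to each such instance, yielding an $O(k)$-approximation per call with a constant depending only on $\mG$.

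Putting the two pieces together: the Chuzhoy--Khanna reduction issues $O(k^3 \log n)$ calls to the \prob{Elem-SNDP} oracle and loses a factor of $O(k^3 \log n)$ in approximation compared to the oracle's guarantee. Plugging in our $O(k)$-approximation from \thmref{intro-elem-sndp} for each call, the overall approximation factor for node-weighted \prob{VC-SNDP} in $\mG$ becomes $O(k) \cdot O(k^3 \log n) = O(k^4 \log n)$, as claimed.

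The main obstacle, and essentially the only thing to check carefully, is the first step above: that the intermediate \prob{Elem-SNDP} instances constructed by the reduction do not leave the minor-closed family $\mG$. This is immediate once one observes that the reduction never introduces new edges or new vertices outside $G$ and only manipulates the reliable/non-reliable labeling, so every auxiliary graph is a subgraph (in fact, a minor) of $G$ and hence belongs to $\mG$. With this verification in hand, the corollary follows by direct composition of the two results, with no additional technical overhead.
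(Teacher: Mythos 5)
Your proof is correct and follows exactly the route the paper takes: combine Theorem~\ref{thm:intro-elem-sndp} with the black-box Chuzhoy--Khanna reduction of \prob{VC-SNDP} to \prob{Elem-SNDP}, using the crucial observation that the reduction leaves the underlying graph unchanged so every auxiliary \prob{Elem-SNDP} instance stays within the minor-closed family~$\mG$. One small inaccuracy: the reduction in \cite{ChuzhoyK12} is not an iterative connectivity-augmentation scheme but rather a randomized decomposition of the requirement function into $O(k^3\log n)$ \prob{Elem-SNDP} subproblems via random sampling of terminal subsets; since you use it as a black box, this does not affect the correctness of your argument.
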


We also obtain an $O(1)$-approximation algorithm for
node-weighted \prob{VC-SNDP} when the requirements are in the
set $\set{0,1,2}$.

\begin{theorem}\label{thm:0-2-vc-sndp}
  There is a $13$--approximation algorithm for node-weighted
  \prob{VC-SNDP} with $\set{0,1,2}$ connectivity requirements in
  planar graphs and, more generally, an $O(1)$-approximation for graphs from a proper minor-closed
  family of graphs.
\end{theorem}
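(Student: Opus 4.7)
The plan is to combine the primal-dual framework developed for \thmref{intro-elem-sndp} with the special structure of $\{0,1,2\}$ requirements. The central observation is that since every requirement is at most $2$, any minimally violated setpair LP constraint corresponds to a biset $(S, S^+)$ whose boundary $S^+ \setminus S$ has size at most $1$: either the demand pair is entirely disconnected in the current subgraph (boundary size $0$) or it is joined only through a single cut vertex (boundary size $1$). This keeps the biset structure tractable and close to that of the $\{0,1\}$-proper function setting of Demaine-Hajiaghayi-Klein.

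Stage 1 (biset formulation). Model $\{0,1,2\}$-VC-SNDP as covering a bimaximal, skew-bisupermodular biset function $f$. For each pair $uv$ with $r(uv) \ge 1$ and each biset $(S, S^+)$ separating $u$ from $v$ with $|S^+ \setminus S| \le r(uv)-1$, require that the chosen subgraph contain an edge crossing $(S, S^+)$. Standard setpair arguments (as used in the edge-weighted setpair LP of Fleischer-Jain-Williamson) establish the bisupermodularity of $f$, so the abstract framework invoked for \thmref{intro-elem-sndp} applies.

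Stage 2 (primal-dual with planar lemma). Run a DHK-style primal-dual algorithm on the resulting biset LP, followed by a reverse-delete cleanup. Bounding the approximation factor reduces to bounding the maximum aggregate dual load that any single node accumulates from the minimal violated bisets over all iterations. The key step is a planar combinatorial lemma: for any laminar family $\mathcal{P}$ of minimal violated bisets in a planar graph, the total boundary-plus-crossing contribution is $O(|\mathcal{P}|)$. The argument builds an auxiliary graph whose nodes are the contracted inner-set components together with the at-most-one boundary vertex of each biset; planarity of $G$ transfers to this auxiliary graph, and its bounded average degree yields the desired linear bound. Tracking the two regimes (boundary size $0$ versus boundary size $1$) separately and optimizing the charging gives the constant $13$ in the planar case; for a proper minor-closed family, substituting the corresponding average-degree bound yields an $O(1)$-approximation whose constant depends only on $\mathcal{G}$.

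The main obstacle is the planar lemma in Stage 2: unlike the Elem-SNDP case handled by \thmref{intro-elem-sndp}, the boundary vertex of a violated biset here is not a designated non-reliable node but may be any cut vertex of the current subgraph. The resolution is to contract each connected component of the current subgraph into a super-node, after which the unique boundary vertex of each minimal violated biset behaves like a non-reliable node in the contracted graph, and the planarity/minor-closed argument underlying \thmref{intro-elem-sndp} applies almost verbatim. The final accounting across Steiner-forest-like growth phases (for $r=1$ pairs, where boundaries are empty) and boundary-$1$ augmentation phases (for $r=2$ pairs) yields the stated $13$-approximation.
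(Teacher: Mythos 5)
Your high-level two-stage structure (solve node-weighted Steiner forest for the $r\ge 1$ pairs, then augment the $r=2$ pairs by covering a $\{0,1\}$ biset function with a DHK-style primal-dual plus reverse delete) matches the paper, and your observation that minimal violated bisets in the augmentation stage have boundary of size at most one is correct. However, the technical core of Stage~2 is misdiagnosed and the proposed fix is not sound. You treat the ``planar lemma'' as the bottleneck and propose to contract each connected component of the current subgraph into a super-node so that the boundary vertex ``behaves like a non-reliable node,'' allowing the Elem-SNDP machinery to apply verbatim. But the counting argument behind Theorem~\ref{thm:main-counting} and Lemma~\ref{lem:elem-counting-lemma} explicitly \emph{excludes} boundary vertices from the minor it constructs (only the contracted inner parts $C_j$ and the critical vertices $N_i$ appear), so whether the boundary vertex is reliable or a cut vertex of the current subgraph simply does not enter the planar sparsity bound, and no contraction of the host graph is needed or meaningful (the bisets are defined on $V(G)$ and the primal-dual operates on $G$). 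What the analysis actually requires of the residual function $h$ is: (i) $h$ is biuncrossable, (ii) every violated biset has its boundary inside the already-paid-for set $X$ (analogue of Lemma~\ref{lem:aug-func-bd}), and (iii) every minimal violated biset has its outer part inside $X$ and a connected inner part (analogue of Lemma~\ref{lem:aug-func-minimal}). The paper establishes (i)--(iii) directly for $h(\hS)=1$ iff $\rV(\hS)=2$ and $\card{\delta_F(\hS)}+\card{\bd(\hS)}=1$, using only that all terminals lie in $X_1$ and that $G[X_1]$ already $1$-connects them; once they hold, the planar counting and primal-dual analysis of Section~\ref{sec:main-result} carry over unchanged. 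Your proposal never verifies (ii) or (iii), and ``bisupermodularity of $f$'' alone is not enough --- the paper's own integrality-gap example in Section~\ref{subsec:primal-dual} shows the LP is unbounded for general biuncrossable functions, so the structural properties are essential, not a formality.

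The explanation for the constant is also off: $13$ does not come from ``optimizing a charging between boundary-$0$ and boundary-$1$ regimes'' (the analysis does not distinguish those cases). It is purely additive over the two stages: the $3$-approximation for node-weighted planar Steiner forest gives $w(X_1)\le 3\cdot\opt$, and the primal-dual bound $\gamma=10$ from Lemma~\ref{lem:elem-counting-lemma} applied once to the augmentation stage gives weight at most $10\cdot\opt$ for the newly selected nodes, for a total of $13\cdot\opt$.
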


In summary, as mentioned in Table~\ref{table:approx-ratios}, our
results show that provably better approximation guarantees can be
obtained for node-weighted network design in planar graphs
when compared to the case of general graphs.

\subsection{Overview of Technical Ideas and Contribution}
There are two main algorithmic approaches for \prob{SNDP}. The
first approach is the \emph{augmentation} approach pioneered by Williamson
\etal \cite{WilliamsonGMV95}. In this approach the desired network
is built in $k$ phases; at the end of the first $(\ell-1)$ phases the
connectivity of a pair $uv$ is at least $\min\set{r(uv), \ell-1}$. Thus,
the optimization problem of $\ell$-th phase is to increase the connectivity of certain
pairs by \emph{one}; the advantage is that we need to work with a
$0$-$1$ covering function. In the case of \prob{EC-SNDP}, the augmentation problem
is the problem of covering a skew-supermodular function. A
requirement function $f:2^V \rightarrow \mathbb{Z}_+$ is
skew-supermodular\footnote{This class of functions is also referred
by other names such as uncrossable and weakly supermodular.} if for
any $A, B \subseteq V$
\begin{align*}
f(A) + f(B) \le \max\set{f(A\cap B)+f(A\cup B) ,
f(A\setminus B)+f(B\setminus A)}.
\end{align*}
Williamson \etal \cite{WilliamsonGMV95} showed that a primal-dual
algorithm achieves a $2$-approximation for covering edge-weighted
$0$-$1$ skew-supermodular functions. For the node-weighted variant,
Nutov \cite{Nutov10} gave an $O(\log n)$-approximation. These results
for covering $0$-$1$ skew-supermodular functions when combined with
the augmentation give respectively $2k$ and
$O(k \log n)$-approximation for the edge-weighted and node-weighted
\prob{EC-SNDP}\footnote{The approximation for the edge-weighted
  version can be improved to $2H_k$ by doing the augmentation in the
  reverse order \cite{GoemansGPSTW94}.}.  For solving edge-weighted
\prob{Elem-SNDP} in the augmentation framework,~\cite{JainMVW02} works
with {\it skew-bisupermodular} functions which are more involved
although the achieved approximation ratios are similar.  The second
approach for \prob{SNDP} is the powerful \emph{iterative rounding}
technique pioneered by Jain which led to a $2$-approximation for
\prob{EC-SNDP} \cite{Jain01} and \prob{Elem-SNDP}
\cite{FleischerJW06}.  Iterative rounding does not quite apply to
node-weighted problems for a variety of technical reasons as well as
known hardness of approximation results.  For this reason the main
approach for attacking node-weighted \prob{SNDP} problems has been the
augmentation approach.

In this paper, we follow the augmentation approach for node-weighted
\prob{SNDP} problems. Demaine \etal adapted the primal-dual algorithm
for edge-weighted $0$-$1$ proper functions to the node-weighted
case. The novel technical ingredient in their analysis is to
understand the properties of {\em node-minimal} feasible solutions
instead of edge-minimal feasible solutions. The analysis crucially
relies on the average degree of a planar graph being constant.  For
the most part, problems captured by $0$-$1$ proper functions are very
similar to the \prob{Steiner Forest} problem, a canonical problem in
this class.  In this setting it is possible to visualize and
understand node-minimal solutions through connected components and
basic reachability properties. In the augmentation approach for
higher-connectivity, as we remarked, the problem in each phase is no
longer a proper function but belongs to the richer class of
skew-supermodular functions. The primal-dual analysis for this class
of functions is more subtle and abstract and proceeds via uncrossing
arguments and laminar witness families~\cite{WilliamsonGMV95}.

In a previous conference version of this work~\cite{ChekuriEV12}, we
considered node-weighted \prob{EC-SNDP} (and more general problems) in
planar and minor-closed families of graphs. Based on properties of
node-minimal feasible solutions for $0$-$1$ skew-supermodular families
we obtained an $O(1)$-approximation.  In this paper we extend the
ideas to handle \prob{Elem-SNDP} by considering node-minimal feasible
solutions of $0$-$1$-skew-bisupermodular functions (based on bisets)
that arise in the augmentation framework for \prob{Elem-SNDP}
\cite{JainMVW02,FleischerJW06,CheriyanVV06,Nutov12}. An
important and crucial aspect of the algorithm, which also applied to
the results in \cite{ChekuriEV12}, is that our results only apply for
covering a restricted class of skew-bisupermodular functions that
satisfy additional properties. We provide an example later to
illustrate the reason why this is necessary.

As in \cite{DemaineHK14} we use planarity only in one step of the
analysis where we argue about the average degree of a certain graph
that is a minor of the original graph; this is the reason that the
algorithm and analysis generalize to any proper minor-closed family of
graphs. In the interest of clarity and exposition, we have not
attempted to optimize the constants in the approximation analysis; the
bound could perhaps be improved with a more careful analysis.

\medskip\noindent {\bf Other related work:}
Moldenhauer~\cite{Moldenhauer13} showed that an improved analysis of
the algorithm of Demaine \etal \cite{DemaineHK14} reduces their
$6$-approximation guarantee of the node-weighted \prob{Steiner Forest}
in planar graphs to a bound of $3$ which is tight for the algorithm.
Moreover, he claimed, via a different algorithm, a
$9/4$-approximation. However, Berman and Yaroslavtsev~\cite{BermanY12}
showed that the result of~\cite{Moldenhauer13} suffers from a mistake
in the analysis and that the correct approximation guarantee for the
algorithm in \cite{Moldenhauer13} is $18/7$. \cite{BermanY12}
developed an algorithm for the node-weighted planar \prob{Steiner
  Forest} with a $2.4$-approximation. The papers
\cite{Moldenhauer13,BermanY12} point out the connection between
node-weighted \prob{Steiner Forest} and \prob{(Subset) Feedback Vertex
  Set} in planar graphs for which Goemans and
Williamson~\cite{GoemansW98} had developed primal-dual algorithms ---
more details can be found in the papers.

Network design is a broad area that has been explored in depth over
the years. We refer the reader to \cite{GoemansW97} for a survey on
designing algorithms for network design problems using the primal-dual
method, and to recent surveys \cite{KortsarzN10,GuptaK11} for an
overview of the literature and references. We borrow several ideas
from work on \prob{Elem-SNDP}
\cite{JainMVW02,CheriyanVV06,FleischerJW06,Nutov12} and \prob{VC-SNDP}
\cite{ChuzhoyK12,Nutov12}. Some variants of the \prob{SNDP} problem such
as \prob{Prize-collecting SNDP}, \prob{Budgeted SNDP} and
\prob{Network Activation} have been studied in node-weighted setting
and we refer readers
to~\cite{Nutov10,ChekuriEV12-approx,Vakilian13,BateniHL13,Fukunaga17}
for developments in these directions, several of which have happened
after the conference version of this paper appeared.

\bigskip\noindent {\bf Organization:} Section~\ref{sec:prelim} sets up
the relevant technical background on bisets and certain abstract
properties that play a critical role in the
analysis. Section~\ref{sec:node-wt-elem-sndp} outlines the
augmentation framework and the properties of the requirement function
that arises in each phase of the augmentation
framework. Section~\ref{sec:main-result} (in particular,~\ref{sec:main-thm})
describes the main technical result of this paper which is a constant
factor approximation algorithm for the problem of covering
node-weighted biuncrossable functions that arise in the augmentation
framework. In Section~\ref{sec:node-wt-vc-sndp} we apply the framework
to handle $\set{0,1,2}$-\prob{VC-SNDP}. We discuss some open problems
and conclude in Section~\ref{sec:conclusions}.

\section{Preliminaries}
\label{sec:prelim}
We formally define node-weighted \prob{EC-SNDP} and
\prob{Elem-SNDP}. The input to node-weighted \prob{EC-SNDP} is an
undirected graph $G=(V,E)$, a weight function
$w: V \rightarrow \mathbb{R}_+$, and a non-negative integer
requirement $r(uv)$ for each unordered pair of nodes $uv$. The goal is
to find a minimum weight subgraph $H=(V_H,E_H)$ such that $H$ has
$r(uv)$ edge-disjoint paths for each node pair $uv$; by the weight of
$H$ we mean $w(V_H)$ since we are considering node-weights. Call a
node $u$ a terminal if it participates in a pair $uv$ such that
$r(uv) > 0$. Any feasible solution $H$ contains all terminals and
hence we can assume without loss of generality that the weight of
terminals is zero.

The input to node-weighted \prob{Elem-SNDP} is an undirected graph
$G=(V,E)$ along with a partition of $V$ into reliable nodes $R$ and
non-reliable nodes $V \setminus R$. The \emph{elements} of $G$ are
$E \cup (V\setminus R)$. The input also specifies a non-negative
weight function $w: V \rightarrow \mathbb{R}_+$ and integer
requirements $r(uv)$ only over pairs $uv$ where both $u,v$ are
\emph{reliable} nodes. The goal is to find a minimum weight subgraph
$H=(V_H,E_H)$ of $G$ such that for each pair $uv$ of reliable nodes,
$H$ has $r(uv)$ element-disjoint paths. For simplicity we can assume
that $R$ forms an independent set in $G$ by sub-dividing each edge
$uv$ where both $u,v \in R$ and adding a new non-reliable node. Then
element-disjoint paths correspond to paths that are disjoint on
non-reliable nodes. Once again, we can assume that any node $u$ that
participates in a pair $uv$ such that $r(uv) > 0$ can be assumed to
have zero weight since it has to be included in every feasible
solution. It is straightforward to see that \prob{EC-SNDP} is a
special case of \prob{Elem-SNDP} in which all nodes in the input graph
are reliable.

Following the general approach from prior work we reduce \prob{SNDP}
to a more abstract problem of covering certain set and biset
requirement functions. We set up the desired notation and definitions
for this purpose and state several basic properties. We borrow
extensively from past work \cite{Nutov12, CheriyanVV06, FleischerJW06}
and give a few proofs here and some in the
Appendix~\ref{subsec:omitted-preliminaries} for the
sake of completeness.

A key definition is that of \emph{biset}. A biset is a pair of sets
$\hS = (S, S') \in 2^V \times 2^V$ such that $S \subseteq S'$. The set
$S$ is the \emph{inner part} of $\hS$, $S'$ is the \emph{outer part}
of $\hS$, and $S'\setminus S$ is the \emph{boundary} of $\hS$ which is
also denoted by $\bd(\hS)$.  We define the $\subseteq$ relation on the
bisets as follows. $\hS \subseteq \hT$ iff $S \subseteq T$ and
$S' \subseteq T'$.  We use the teriminology $\hS \subset \hT$ if
$\hS \subseteq \hT$ and $\hS \neq \hT$.  We define the following
operations on bisets.  The union, intersection, and difference of
$\hS$ and $\hT$ are defined as
$\hS \cap \hT = (S \cap T, S' \cap T')$,
$\hS \cup \hT = (S \cup T, S' \cup T')$, and
$\hS \setminus \hT = (S \setminus T', S' \setminus T)$.

The following two propositions are straightforward to verify.
\begin{prop}\label{prop:relation-prop}
  The $\subseteq$ relation is a partial order over the bisets.
\end{prop}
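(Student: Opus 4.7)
The plan is to verify the three defining axioms of a partial order --- reflexivity, antisymmetry, and transitivity --- by reducing each one to the analogous property of ordinary set inclusion applied coordinate-wise. Since the definition $\hS \subseteq \hT$ unfolds as the conjunction of $S \subseteq T$ and $S' \subseteq T'$, and $\subseteq$ is already known to be a partial order on $2^V$, every axiom lifts from the two coordinates to the biset without any additional work.

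Concretely, for reflexivity I would observe that for any biset $\hS = (S, S')$ the inclusions $S \subseteq S$ and $S' \subseteq S'$ hold, so $\hS \subseteq \hS$. For antisymmetry, from $\hS \subseteq \hT$ and $\hT \subseteq \hS$ I would extract $S \subseteq T$, $T \subseteq S$, $S' \subseteq T'$, and $T' \subseteq S'$; antisymmetry of set inclusion forces $S = T$ and $S' = T'$, hence $\hS = \hT$. For transitivity, $\hS \subseteq \hT$ and $\hT \subseteq \hU$ yield $S \subseteq T \subseteq U$ and $S' \subseteq T' \subseteq U'$, and transitivity of set inclusion on each coordinate then gives $S \subseteq U$ and $S' \subseteq U'$, i.e., $\hS \subseteq \hU$.

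There is no real obstacle in this argument: biset inclusion is, by definition, the product partial order on $2^V \times 2^V$ restricted to pairs $(S, S')$ satisfying $S \subseteq S'$. This restriction is trivially closed under the biset operation (no comparison can take us outside the set of bisets), so the partial-order structure is inherited for free from $2^V$.
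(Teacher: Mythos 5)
Your proof is correct and is exactly the routine coordinate-wise verification the paper has in mind; the paper itself labels this proposition as ``straightforward to verify'' and gives no argument, so there is no alternative approach to compare against.
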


\begin{prop}
  $\hS \cap \hT \subseteq \hS$ and $\hS \setminus \hT \subseteq \hS$.
\end{prop}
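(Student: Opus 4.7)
The plan is to unpack each of the two claimed containments directly from the definitions of $\cap$, $\setminus$, and $\subseteq$ on bisets, reducing each to a trivial fact about ordinary set intersection and set difference. Write $\hS = (S,S')$ and $\hT = (T,T')$ with $S \subseteq S'$ and $T \subseteq T'$.

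For the first statement, I would observe that $\hS \cap \hT = (S \cap T,\, S' \cap T')$, so proving $\hS \cap \hT \subseteq \hS$ amounts to verifying the two inclusions $S \cap T \subseteq S$ and $S' \cap T' \subseteq S'$, both of which are immediate from the standard property that $A \cap B \subseteq A$. For the second statement, I would recall that $\hS \setminus \hT = (S \setminus T',\, S' \setminus T)$, so $\hS \setminus \hT \subseteq \hS$ reduces to $S \setminus T' \subseteq S$ and $S' \setminus T \subseteq S'$, again immediate from $A \setminus B \subseteq A$.

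As a sanity check I would also briefly confirm that the two composite objects really are bisets, i.e., their inner parts lie inside their outer parts: $S \cap T \subseteq S' \cap T'$ follows from $S \subseteq S'$ and $T \subseteq T'$, while $S \setminus T' \subseteq S' \setminus T$ follows because any $x \in S \setminus T'$ lies in $S \subseteq S'$ and, since $T \subseteq T'$, lies outside $T$ as well. There is no real obstacle here; the proof is a short unfolding of definitions and I would not expect it to occupy more than a few lines in the paper.
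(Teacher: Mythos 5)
Your proof is correct and is exactly the routine unfolding of definitions one would expect; the paper itself gives no proof, declaring this proposition (together with the preceding one) "straightforward to verify." Your extra sanity check that $\hS \cap \hT$ and $\hS \setminus \hT$ are genuinely bisets is a reasonable addition but not strictly needed, since the paper already defines these operations as producing bisets.
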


\begin{definition}[\prob{Crossing Bifamily}]
  A family of bisets $\mP$ is \emph{crossing} iff, for any bisets
  $\hS$ and $\hT$ in $\mP$, union, intersection and differences of
  $\hS$ and $\hT$ are in $\mP$.
\end{definition}

\begin{definition}[\prob{Bimaximal Function}]
  Let $\mP$ be a crossing bifamily.  A function
  $f: \mP \rightarrow \mathbb{Z}$ is bimaximal iff, for any
  $\hS, \hT \in \mP$ whose inner parts are disjoint (that is, $S \cap T = \emptyset$),
  \begin{equation}
    f(\hS \cup \hT) \leq \max\{f(\hS), f(\hT)\}
    \label{eqn:bimaximal}
  \end{equation}
\end{definition}

\begin{definition}[\prob{Bisubmodular Function}]
  Let $\mP$ be a crossing bifamily.
  A function $f: \mP \rightarrow \mathbb{Z}$ is
  bisubmodular iff for any $\hS, \hT\in\mP$, both of the
  following inequalities hold\footnote{Note that these
    inequalities hold for symmetric submodular set functions and
    we work with symmetric biset functions throughout the paper
    since the graphs are undirected.}:
  \begin{align}
    f(\hS) + f(\hT) &\geq f(\hS \cap \hT) +
                      f(\hS \cup \hT) \label{eqn:bisubmod-one}\\
    f(\hS) + f(\hT) &\geq f(\hS \setminus \hT) +
                      f(\hT \setminus \hS) \label{eqn:bisubmod-two}
  \end{align}
  A function $f$ is \emph{bisupermodular} iff $-f$ is
  bisubmodular.
\end{definition}

\begin{definition}[\prob{Skew-bisupermodular Function}]
  Let $\mP$ be a crossing bifamily.
  A function $f: \mP \rightarrow \mathbb{Z}$ is
  skew-bisupermodular iff for any $\hS, \hT \in \mP$, one of the following holds:
  \begin{eqnarray}
    f(\hS \cap \hT) + f(\hS \cup \hT) &\geq f(\hS) +
                                        f(\hT) \label{eqn:skew-bisuper-one}\\
    f(\hS \setminus \hT) +
    f(\hT \setminus \hS) &\geq f(\hS) + f(\hT) \label{eqn:skew-bisuper-two}
  \end{eqnarray}
\end{definition}

\begin{definition}[\prob{Biuncrossable Function}]
  Let $\mP$ be a crossing bifamily.
  A function $f: \mP \rightarrow \mathbb{Z}$ is
  biuncrossable iff for any $\hS,\hT \in \mP$ such that $f(\hS) > 0$ and $f(\hT) > 0$,
  one of the following holds:
  \begin{eqnarray}
    f(\hS \cap
    \hT) + f(\hS \cup \hT) &\geq& f(\hS) +
                                  f(\hT) \label{eqn:biuncross-one}\\
    f(\hS \setminus \hT) +
    f(\hT \setminus \hS) &\geq& f(\hS) + f(\hT)
                                \label{eqn:biuncross-two}
  \end{eqnarray}
\end{definition}

\begin{prop}[Lemma 3.8 in~\cite{FleischerJW06}]\label{prop:difference-bisup}
  Let $f$ be a skew-bisupermodular function and let $g$ be a
  bisubmodular function on the same domain. Then $f - g$ is a
  skew-bisupermodular function.
\end{prop}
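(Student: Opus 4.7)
The plan is to proceed by a direct case analysis on which of the two skew-bisupermodular inequalities is satisfied by $f$ on the given pair of bisets, then to cancel the corresponding bisubmodular inequality for $g$ term by term.

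Fix any $\hS,\hT \in \mP$. Since $f$ is skew-bisupermodular, at least one of \Eqref{skew-bisuper-one} or \Eqref{skew-bisuper-two} holds for $f$. The key observation is that the two inequalities defining bisubmodularity for $g$ in \Eqref{bisubmod-one} and \Eqref{bisubmod-two} are indexed by exactly the same two ways of ``pairing up'' the four derived bisets $\hS \cap \hT, \hS \cup \hT, \hS \setminus \hT, \hT \setminus \hS$, so whichever branch of skew-bisupermodularity is active for $f$, there is a matching inequality available from the bisubmodularity of $g$ involving the same four bisets.

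In the first case, I would add the inequality $f(\hS\cap\hT) + f(\hS\cup\hT) \ge f(\hS)+f(\hT)$ to the negation $-g(\hS\cap\hT)-g(\hS\cup\hT) \ge -g(\hS)-g(\hT)$ of \Eqref{bisubmod-one}, which immediately yields $(f-g)(\hS\cap\hT) + (f-g)(\hS\cup\hT) \ge (f-g)(\hS)+(f-g)(\hT)$, i.e.\ the required analogue of \Eqref{skew-bisuper-one} for $f-g$. In the second case, I would analogously combine $f(\hS\setminus\hT) + f(\hT\setminus\hS) \ge f(\hS)+f(\hT)$ with the negation of \Eqref{bisubmod-two} for $g$ to obtain the analogue of \Eqref{skew-bisuper-two} for $f-g$. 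Either way, at least one of the two skew-bisupermodular inequalities holds for $f-g$ at $\hS,\hT$, which is precisely the definition.

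There is essentially no obstacle here: the only subtlety is to notice that bisubmodularity bundles \emph{both} inequalities simultaneously, so no matter which branch of skew-bisupermodularity $f$ picks for a particular pair, the matching $g$-inequality is available for free. The result therefore reduces to the trivial fact that adding a ``$\ge$'' to the negation of another ``$\ge$'' preserves the direction of the inequality. This is the same style of argument used in the set-function setting (Lemma~3.8 of~\cite{FleischerJW06}), and the extension to bisets is purely notational since all four biset operations $\cap,\cup,\setminus$ are defined coordinatewise and the crossing bifamily $\mP$ is closed under them.
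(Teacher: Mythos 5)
Your proof is correct. Note that the paper itself does not prove this proposition but simply cites Lemma~3.8 of~\cite{FleischerJW06}; your argument---observing that bisubmodularity of $g$ supplies \emph{both} inequalities~(\ref{eqn:bisubmod-one}) and~(\ref{eqn:bisubmod-two}) simultaneously, so whichever branch of skew-bisupermodularity $f$ takes at a given pair $\hS,\hT$ can be paired with the matching $g$-inequality and added to yield the corresponding branch for $f-g$---is precisely the standard argument from that reference.
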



\begin{prop}\label{prop:boundary-bisubmodular}
  The $\card{\bd(.)}$ function is a bisubmodular function over the set
  of all bisets over $V$.
\end{prop}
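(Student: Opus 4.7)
The plan is to verify both defining inequalities of bisubmodularity pointwise, that is, to show that for every $v \in V$ the contribution of $v$ to the left-hand side equals the contribution of $v$ to the right-hand side. Summing over $v$ then yields both inequalities (in fact, as equalities). Throughout, let $\chi_X : V \to \{0,1\}$ be the indicator of a set $X$, and note that for any biset $\hat{X}=(X,X')$ with $X \subseteq X'$ we have $|\mathrm{bd}(\hat{X})| = \sum_v (\chi_{X'}(v) - \chi_X(v))$.

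First I would handle the union/intersection inequality \eqref{eqn:bisubmod-one}. By the definition of biset intersection and union, $\mathrm{bd}(\hat{S}\cap\hat{T}) = (S'\cap T')\setminus(S\cap T)$ and $\mathrm{bd}(\hat{S}\cup\hat{T}) = (S'\cup T')\setminus(S\cup T)$. Applying the elementary identity $\chi_{X\cap Y}+\chi_{X\cup Y}=\chi_X+\chi_Y$ to the pair $(S,T)$ and to the pair $(S',T')$ separately, and subtracting, gives
\[
(\chi_{S'\cap T'}-\chi_{S\cap T}) + (\chi_{S'\cup T'}-\chi_{S\cup T}) \;=\; (\chi_{S'}-\chi_{S}) + (\chi_{T'}-\chi_{T}),
\]
so summing over $v$ yields the desired inequality with equality.

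Next I would address the difference inequality \eqref{eqn:bisubmod-two}, where $\hat{S}\setminus\hat{T}=(S\setminus T',S'\setminus T)$ and $\hat{T}\setminus\hat{S}=(T\setminus S',T'\setminus S)$. Since the differences $\chi_A - \chi_{A\cap B}$ do not reduce as cleanly as in the union/intersection case, I would proceed by a $3\times 3$ case analysis on the membership of a fixed vertex $v$. For $\hat{S}$ there are exactly three possibilities: $v\in S$, $v\in S'\setminus S$, or $v\notin S'$; label these $+$, $*$, $-$. The same labels apply to $\hat{T}$, giving nine joint states $(a,b)$. For each of the nine states I would compute the indicator contribution of $v$ to each of $|\mathrm{bd}(\hat{S})|$, $|\mathrm{bd}(\hat{T})|$, $|\mathrm{bd}(\hat{S}\setminus\hat{T})|$, and $|\mathrm{bd}(\hat{T}\setminus\hat{S})|$, using the definitions of inner and outer part. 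A quick check of the extreme states is that $(+,+)$ and $(-,-)$ contribute $0$ on both sides; the mixed states $(+,*)$, $(*,+)$, $(*,-)$, $(-,*)$ each contribute exactly $1$ on both sides; and $(*,*)$ contributes $2$ on both sides, while $(+,-)$ and $(-,+)$ contribute $0$. Summing these pointwise equalities over all $v$ proves \eqref{eqn:bisubmod-two}, again with equality.

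The only place that requires any care is showing that the diagonal state $(*,*)$ contributes to both difference boundaries: here $v\notin S$ and $v\notin T$ but $v\in S'\cap T'$, so $v\in S'\setminus T$ and $v\notin S\setminus T'$ (because $v\in T'$), putting $v$ in $\mathrm{bd}(\hat{S}\setminus\hat{T})$, and symmetrically for $\mathrm{bd}(\hat{T}\setminus\hat{S})$. I expect this to be the only verification that is not immediate. Once this is in hand, both inequalities hold pointwise as equalities, so $|\mathrm{bd}(\cdot)|$ is indeed bisubmodular (and even bimodular) over all bisets of $V$.
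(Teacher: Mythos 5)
Your proof is correct and follows essentially the same per-vertex accounting used in the paper's two auxiliary propositions; the indicator-function identity you invoke for the union/intersection case is just a compact way of packaging that case analysis. You also observe, correctly, that both inequalities hold with equality (so $|\bd(\cdot)|$ is in fact bimodular), a sharpening the paper does not state but which its own case analysis implicitly confirms.
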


Let $G=(V,E)$ be a graph and let $\hS$ be a biset over $V$.  We say
an edge $e$ crosses $\hS$ if $e$ has one endpoint $S$ and the other
endpoint in $V \setminus S'$.  For any $F\subseteq E$ we let
$\delta_F(\hS)$ denote the set of all edges in $F$ that cross $\hS$.
We define $\Gamma_F(\hS)$ to be the set of all vertices
$u \in V \setminus S'$ for which there exists an edge $uv \in F$ that
crosses $\hS$. For a subgraph $H$ we abuse notation and use
$\delta_H(\hS)$ to denote $\delta_{E(H)}(\hS)$ and $\Gamma_H(\hS)$ to
denote $\Gamma_{E(H)}(\hS)$.

\begin{lemma}[Lemma 3.7 in~\cite{FleischerJW06}]
  \label{lem:delta-bisubmod}
  For any graph $G=(V, E)$ and any subset of edges $F\subseteq E$,
  $\card{\delta_F(.)}$ is bisubmodular over $\mP$ where $\mP$ is
  any crossing family of bisets over $V$.
\end{lemma}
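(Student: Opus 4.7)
The plan is to prove both bisubmodular inequalities edge-by-edge and then sum. Since
\[
|\delta_F(\hS)| \;=\; \sum_{e \in F} \mathbf{1}\!\bigl[e \in \delta(\hS)\bigr],
\]
and each of inequalities \eqref{eqn:bisubmod-one} and \eqref{eqn:bisubmod-two} is linear in the cut function, it suffices to show that for a single edge $e = uv$, the indicator function $f_e(\hS) := \mathbf{1}[e \in \delta(\hS)]$ itself satisfies
\[
f_e(\hS) + f_e(\hT) \;\ge\; f_e(\hS \cap \hT) + f_e(\hS \cup \hT),
\qquad
f_e(\hS) + f_e(\hT) \;\ge\; f_e(\hS \setminus \hT) + f_e(\hT \setminus \hS).
\]
Summing over $e \in F$ then yields the desired inequalities for $|\delta_F(\cdot)|$.

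To verify the single-edge inequalities, I would exploit the fact that each biset $\hS = (S,S')$ partitions $V$ into three regions: the inner part $S$, the boundary $\bd(\hS) = S'\setminus S$, and the exterior $V\setminus S'$; by definition $e \in \delta(\hS)$ iff one endpoint of $e$ lies in $S$ and the other in $V\setminus S'$ (boundary endpoints contribute nothing). For two bisets $\hS,\hT$, the pair $(\hS,\hT)$ induces a refinement of $V$ into $3\times 3 = 9$ cells, and the membership of $u$ and $v$ in these cells completely determines $f_e$ on each of $\hS,\hT,\hS\cap\hT,\hS\cup\hT,\hS\setminus\hT,\hT\setminus\hS$, via the standard identities for the inner and outer parts of unions, intersections, and differences (e.g.\ $u \in S\cap T$ iff $u\in S$ and $u\in T$, $u \in V\setminus(S'\cup T')$ iff $u\notin S'$ and $u\notin T'$, etc.).

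With this set-up, the verification reduces to a finite case analysis over the $\binom{9}{2}+9 = 45$ configurations of the unordered pair of cells containing $u$ and $v$; this is reduced further by the symmetries $u\leftrightarrow v$ and $\hS\leftrightarrow\hT$. In the vast majority of cases both sides are zero. The only configurations where the LHS is positive are those where $u$ lies in $S$ (resp.\ $T$) and $v$ in $V\setminus S'$ (resp.\ $V\setminus T'$), and in each such case one checks directly that $f_e(\hS\cap\hT) + f_e(\hS\cup\hT)$ and $f_e(\hS\setminus\hT) + f_e(\hT\setminus\hS)$ are at most $f_e(\hS)+f_e(\hT)$. For example, if $u \in S\cap T$ and $v \in (V\setminus S')\cap(V\setminus T')$, then $f_e$ equals $1$ on all four bisets $\hS,\hT,\hS\cap\hT,\hS\cup\hT$ and equals $0$ on both $\hS\setminus\hT$ and $\hT\setminus\hS$, so both inequalities hold with slack at least $0$.

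The only real subtlety, and the place that requires the most care, is the mixed cases in which an endpoint sits on a boundary $S'\setminus S$ or $T'\setminus T$; here one must track carefully how boundaries propagate through intersection, union, and difference (for instance $\bd(\hS\cap\hT) \subseteq \bd(\hS)\cup\bd(\hT)$), and confirm that the constructed bisets still behave as expected so that no new crossings are introduced. No case produces a violation, and the proof is complete. This is the argument sketched in Lemma 3.7 of~\cite{FleischerJW06}; it carries over to the restriction to a crossing family $\mP$ without change since the inequalities are pointwise in $e$.
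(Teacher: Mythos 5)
Your proof is correct and follows the same edge-by-edge decomposition that the paper refers to (the paper itself does not spell out a proof but cites Lemma 3.7 of~\cite{FleischerJW06}, whose argument is exactly this: verify both bisubmodular inequalities for each edge indicator $f_e$ by tracking the endpoints across the nine cells induced by the two bisets, then sum over $e\in F$). Your observation that the restriction to a crossing family $\mP$ is harmless because the per-edge inequalities hold for arbitrary bisets while $\mP$ is closed under $\cap,\cup,\setminus$ is the right remark to make the statement as given.
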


\begin{definition}[\prob{Feasible Cover}]
  Let $G=(V,E)$ be a graph and let $f$ be an integer-valued function
  defined on a collection of bisets $\mP$ over $V$. We say that
  $F \subseteq E$ is a feasible cover of $f$ if
  $\card{\delta_{F}(\hS)} \geq f(\hS)$ for each $\hS \in \mP$.  We say
  that a subgraph $H=(V(H),E(H))$ is a feasible cover of $f$ if $E(H)$
  is a feasible cover of $f$.  A subgraph $H$ is a node-minimal cover
  of $f$ if $H \setminus \set{v}$ is not a feasible cover of $f$ for any
  $v \in V(H)$.
\end{definition}


\begin{definition}[\prob{(Minimal) Violated Biset}]
 Let $G=(V,E)$ be a graph and let $f$ be an integer-valued function
 defined on a collection of bisets $\mP$ over $V$.
 For $F \subseteq E$ we say that a biset $\hS \in \mP$ is violated with
 respect to $F$ if $\card{\delta_{F}(\hS)} < f(\hS)$. We say that
 $\hS$ is a minimal violated biset with respect to $F$ if $\hS$ is
 violated and there is no violated biset $\hT$ such that $\hT \subset \hS$.
 These definitions extend to violation with respect to a subgraph $H$
 of $G$.
\end{definition}

\begin{definition}[\prob{Non-overlapping Bisets}]
  Two bisets $\hS$ and $\hT$ are \emph{non-overlapping} iff
  one of the following holds:
  \begin{enumerate}[(i)]
    \vspace{-.1in}
  \item $\hS \subseteq \hT$ or $\hT \subseteq \hS$.
    \vspace{-.1in}
  \item The sets $S' \cap T$ and $S \cap T'$ are empty.
  \end{enumerate}
  If the bisets do not satisfy any of the above conditions, they
  are \emph{overlapping}.
\end{definition}

A useful observation that we will need later is that minimal violated
bisets do not overlap with other (not necessarily minimal)
violated bisets.

\begin{lemma} \label{lem:minimal-violated-crossing}
  Let $h$ be a $\set{0, 1}$-biuncrossable function. Let $\hC$ be a minimal violated
  biset of $h$ and let $\hS$ be a violated biset of
  $h$. Then, $\hC$ and $\hS$ do not overlap. In particular,
  the inner parts of the minimal violated bisets of
  $h$ are disjoint.
\end{lemma}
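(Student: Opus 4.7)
The plan is to argue by contradiction, assuming $\hC$ and $\hS$ overlap, and combining the $\{0,1\}$-biuncrossability of $h$ with the bisubmodularity of $|\delta_F(\cdot)|$ (from Lemma~\ref{lem:delta-bisubmod}) to manufacture a violated biset strictly smaller than $\hC$.

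First I would observe that since $\hC$ and $\hS$ are violated and $h$ takes values in $\{0,1\}$, we must have $h(\hC) = h(\hS) = 1$ and $|\delta_F(\hC)| = |\delta_F(\hS)| = 0$ for the (implicit) edge set $F$ against which violation is measured. Because $h(\hC), h(\hS) > 0$, the biuncrossable property gives one of two inequalities; combining with bisubmodularity of $|\delta_F(\cdot)|$ applied to the \emph{same} pair of derived bisets will be the engine of the argument.

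Next I would split into the two cases from the definition of biuncrossable. In the first case, $h(\hC \cap \hS) + h(\hC \cup \hS) \geq 2$, so both values equal $1$. Bisubmodularity yields $|\delta_F(\hC \cap \hS)| + |\delta_F(\hC \cup \hS)| \leq |\delta_F(\hC)| + |\delta_F(\hS)| = 0$, forcing both to be $0$, so both derived bisets are violated. In the second case one argues identically with $\hC \setminus \hS$ and $\hS \setminus \hC$.

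The key step is then to observe that the overlap hypothesis forces a strict containment. Since $\hC$ and $\hS$ overlap, neither is contained in the other, so $\hC \cap \hS \subsetneq \hC$ (otherwise $\hC \subseteq \hS$), settling the first case. In the second case, $\hC \setminus \hS = \hC$ would require $C \cap S' = \emptyset$ and $C' \cap S = \emptyset$, which is precisely the other way a pair can be non-overlapping; since $\hC,\hS$ overlap this fails, so $\hC \setminus \hS \subsetneq \hC$. In either case we have produced a violated biset strictly contained in $\hC$, contradicting minimality. I expect this containment bookkeeping (making sure the strict inclusion really follows from the overlap definition on both inner and outer parts) to be the main subtlety; everything else is just applying the definitions.

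Finally, for the ``in particular'' part, let $\hC_1, \hC_2$ be two minimal violated bisets. By the main claim they are non-overlapping. If $\hC_1 \subseteq \hC_2$ (or vice versa), minimality forces equality. Otherwise the non-overlap condition gives $C_1' \cap C_2 = \emptyset$ and $C_1 \cap C_2' = \emptyset$, and in particular $C_1 \cap C_2 \subseteq C_1 \cap C_2' = \emptyset$, so the inner parts are disjoint.
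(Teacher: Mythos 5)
Your proof is correct and follows the same core idea as the paper's: apply biuncrossability to $\hC$ and $\hS$, observe that in either case the resulting sub-biset of $\hC$ (namely $\hC\cap\hS$ or $\hC\setminus\hS$) is violated, and invoke minimality. The paper presents this directly (conclude $\hC\cap\hS=\hC$ or $\hC\setminus\hS=\hC$ and read off non-overlap) rather than by contradiction, but that is just the contrapositive of your argument. The one genuine difference is that you explicitly track the edge set $F$ and invoke bisubmodularity of $|\delta_F(\cdot)|$ to propagate $|\delta_F|=0$ to the derived bisets; the paper dispenses with this because in the lemma's setting ``violated biset of $h$'' is read with $F=\emptyset$ (equivalently, $h$ is already the residual function), so violation reduces to $h(\hS)=1$ and biuncrossability alone suffices. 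Your extra step is harmless and in fact gives the lemma in slightly greater generality, since it applies directly to violation with respect to an arbitrary $F$ without first passing to the residual function.
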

\begin{proof}
  Since $h(\hC) = h(\hS) = 1$ and $h$ is a biuncrossable function,
  $h(\hC \cap \hS) = h(\hS \cup \hS) = 1$ or $h(\hC \setminus \hS)
  = h(\hS \setminus \hC) = 1$.
  Suppose that the former case holds. Since $\hC
  \cap \hS \subseteq \hC$, it follows from the minimality of $\hC$
  that $\hC \cap \hS = \hC$. Thus $\hC \subseteq \hS$ and hence $\hC$
  and $\hS$ are non-overlapping.
  Therefore we may assume that $h(\hC \setminus \hS) = h(\hS \setminus \hC) = 1$.
  Since $\hC \setminus \hS \subseteq \hC$, it follows from the minimality of
  $\hC$ that $\hC \setminus \hS = \hC$. Thus the sets $C \cap S'$ and $C'
  \cap S$ are empty, and hence $\hC$ and $\hS$ are non-overlapping.

  If $\hC_1$ and $\hC_2$ are both minimal violated bisets of $h$, none of them is a subset
  of the other one; hence, $C_1 \cap C'_2 =\emptyset$ which implies that $C_1 \cap C_2$
  is empty as well.
 \end{proof}

Since we are interested in node-weighted problems the subgraphs that
arise in our algorithms and analysis are typically node-induced
subgraphs. We use the standard terminology of $G[S]$ to denote the
subgraph of $G$ induced by a node subset $S \subseteq V(G)$.
We use $E[S]$ to denote the set of edges with both end points in
$S$. The graph $G[S] = (S, E[S])$ is the subgraph induced by the
by the vertex set $S$. We frequently need to consider the graph
$(V, E[S])$ and when there is no confusion we use $G[S]$ to denote
this graph as well.

\section{Algorithm for Node-weighted Elem-SNDP}
\label{sec:node-wt-elem-sndp}
In this section, we formally set up the augmentation framework for
node-weighted \prob{Elem-SNDP}. We point out the specific features of
the optimization problem that arises in each phase of the augmentation
framework. Our main technical result which provides an
$O(1)$-approximation for the augmentation problem of each phase is
formally described and analyzed in Section \ref{sec:main-result}.

\subsection{\prob{Elem-SNDP} and Covering Skew-Bisupermodular Functions}
We set up \prob{Elem-SNDP} as a special case of covering skew-bisupermodular
functions using Menger's theorem for element connectivity.
Given an instance of \prob{Elem-SNDP} over a graph $G=(V,E)$ with requirements
specified by $r$, we extend the requirements to bisets as follows.
For each biset $\hS$ defined on $V$, $\rE(\hS)$ is defined as
$\max_{u \in S, v \in V \setminus S'} r(uv)$; in other words,
$\rE(\hS)$ is the maximum connectivity requirement over all pair of
vertices that are \emph{separated} by $\hS$. Note that we only have
connectivity requirement over pairs of reliable nodes.
Let $\mPE$ be the
collection of all bisets defined on $V$ whose boundaries only contain
\emph{non-reliable} nodes. Then we define
$\fE:\mPE \rightarrow \mathbb{Z}_{+}$ as
$\fE(\hS) = \rE(\hS) - \card{\bd(\hS)}$. The following theorem
is not hard to prove and can be found in~\cite{FleischerJW06,CheriyanVV06}.

\begin{theorem}[\bf Menger's theorem for element connectivity]
  \label{thm:Elem-menger}
  Let $G=(V,E)$ be an undirected graph with $V$ partitioned into
  reliable nodes $R$ and non-reliable nodes $V \setminus R$. Two
  distinct nodes $s,t \in R$ are $k$-element connected iff for each biset $\hS\in \mPE$
  separating $s$ and $t$, $\card{\delta(\hS)} +  \card{\bd(\hS)}\geq k$.
\end{theorem}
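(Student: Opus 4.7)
The plan is to establish the two implications separately. The forward ($\Rightarrow$) direction is a direct counting argument, while the reverse ($\Leftarrow$) direction reduces to the classical edge version of Menger's theorem via a standard vertex-splitting construction.

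For the forward direction, suppose $s,t$ admit $k$ element-disjoint paths and let $\hS \in \mPE$ separate them, so $s \in S$ and $t \in V \setminus S'$. Each $s$-$t$ path must eventually leave $S'$; the first time it does so, it either uses an edge of $\delta(\hS)$ directly from $S$ to $V \setminus S'$, or it first enters $\bd(\hS) = S' \setminus S$ and then leaves. Because $\hS \in \mPE$, every vertex of $\bd(\hS)$ is non-reliable, so each crossing charges an element of $\delta(\hS) \cup \bd(\hS)$. Element-disjointness forces the $k$ paths to charge $k$ pairwise distinct elements, giving $|\delta(\hS)| + |\bd(\hS)| \ge k$.

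For the reverse direction, I would construct an auxiliary capacitated digraph $G^{\sharp}$: bidirect every undirected edge, split every non-reliable vertex $v$ into two copies $v^-, v^+$ joined by a new arc $v^- \to v^+$, redirect every arc originally entering $v$ into $v^-$ and every arc originally leaving $v$ from $v^+$, and assign unit capacity to all arcs. A standard integral max-flow argument (decomposing a max integer $s$-$t$ flow into unit arc-disjoint dipaths and projecting back to $G$) shows that the maximum $s$-$t$ flow in $G^{\sharp}$ equals the maximum number of element-disjoint $s$-$t$ paths in $G$: each dipath traverses any split arc $v^- \to v^+$ at most once, so its projection uses each non-reliable vertex at most once, and distinct dipaths project to paths that share no edge or non-reliable vertex.

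The technical heart is matching $s$-$t$ cuts in $G^{\sharp}$ with bisets in $\mPE$. Given a min cut $(A,\bar A)$ with $s \in A$, $t \in \bar A$, one may assume (by a local uncrossing: if $v^+ \in A$ and $v^- \notin A$ then moving $v^+$ into $\bar A$ cannot increase the cut, since $v^+$ has only one incoming arc, namely the split arc from $v^-$) that $v^+ \in A$ implies $v^- \in A$ for every non-reliable $v$. Define $\hS = (S,S')$ by: for reliable $u$, put $u \in S$ iff $u \in S'$ iff $u \in A$; for non-reliable $v$, put $v \in S$ iff $v^+ \in A$ and $v \in S'$ iff $v^- \in A$. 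Then $S \subseteq S'$ by the uncrossing assumption, and $\bd(\hS)$ is exactly the set of non-reliable $v$ with $v^- \in A$ and $v^+ \notin A$, each of which contributes its split arc to the cut. Every other cut arc comes from a bidirected original edge $uv$ with $u \in S$ and $v \notin S'$; its anti-parallel partner goes from $\bar A$ to $A$ and is never cut, so these arcs are in bijection with $\delta(\hS)$. Hence the cut value equals $|\delta(\hS)| + |\bd(\hS)|$, which by hypothesis is at least $k$; by max-flow-min-cut there are $k$ element-disjoint $s$-$t$ paths. The main obstacle is just this last bookkeeping: verifying the uncrossing step does not increase the cut and can be performed consistently across all non-reliable vertices, and confirming that the resulting $\hS$ has $\bd(\hS)$ composed entirely of non-reliable vertices so that $\hS \in \mPE$.
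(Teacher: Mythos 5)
The paper does not give its own proof of this theorem: it states ``The following theorem is not hard to prove and can be found in~\cite{FleischerJW06,CheriyanVV06}'' and defers to those references. So there is no in-paper argument to compare against; what you have written is a self-contained proof in the standard style of the cited works, and it is correct. The forward direction is right: each of the $k$ element-disjoint paths, at its first crossing out of $S'$, consumes either a distinct edge of $\delta(\hS)$ or a distinct non-reliable vertex of $\bd(\hS)$, so $\card{\delta(\hS)}+\card{\bd(\hS)}\ge k$. For the reverse direction, your vertex-split digraph $G^{\sharp}$, the uncrossing step (pushing $v^{+}$ out of $A$ when $v^{-}\notin A$, justified because the only arc entering $v^{+}$ is $v^{-}\to v^{+}$), and the resulting biset $\hS$ all fit together as needed: $S\subseteq S'$, the boundary $\bd(\hS)$ consists only of non-reliable vertices (so $\hS\in\mPE$), $\hS$ separates $s$ from $t$ since both are reliable and unsplit, and the cut-arc/element bijection is correct on a case check of the three edge types (reliable--reliable, reliable--non-reliable, non-reliable--non-reliable). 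Combined with max-flow--min-cut and the standard equivalence between integral $s$--$t$ flow in $G^{\sharp}$ and element-disjoint paths in $G$, this gives the theorem. The one thing you flag as ``the main obstacle'' --- consistency of uncrossing and that the boundary is non-reliable --- is in fact already handled by what you wrote, since the uncrossing moves are independent across split vertices and never put a reliable vertex into the boundary.
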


Applying Menger's theorem, solving node-weighted \prob{Elem-SNDP} is
equivalent to finding a minimum node-weighted (feasible) cover of $\fE$.

\begin{prop}\label{prop:collection-mPE}
  $\mPE$ is a crossing bifamily.
\end{prop}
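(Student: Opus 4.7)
The plan is a direct verification from the definitions. Recall that $\mPE$ consists of all bisets over $V$ whose boundary contains only non-reliable vertices. Given two bisets $\hS=(S,S')$ and $\hT=(T,T')$ in $\mPE$, I must show that $\hS\cap\hT$, $\hS\cup\hT$, $\hS\setminus\hT$, and $\hT\setminus\hS$ also lie in $\mPE$. Since membership in $\mPE$ is purely a condition on the boundary, it suffices in each case to take an arbitrary vertex $v$ in the boundary of the derived biset and argue that $v$ must already belong to either $\bd(\hS)=S'\setminus S$ or $\bd(\hT)=T'\setminus T$, both of which consist only of non-reliable vertices by hypothesis.

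First I would handle the intersection. By definition $\hS\cap\hT=(S\cap T,\, S'\cap T')$, so its boundary is $(S'\cap T')\setminus(S\cap T)$. Any $v$ in this set satisfies $v\in S'\cap T'$ but $v\notin S$ or $v\notin T$; in the first subcase $v\in S'\setminus S=\bd(\hS)$, in the second $v\in T'\setminus T=\bd(\hT)$. Next, for the union $\hS\cup\hT=(S\cup T,\, S'\cup T')$, any $v$ in the boundary lies in $S'\cup T'$ but neither $S$ nor $T$; if $v\in S'$ then $v\in\bd(\hS)$, and otherwise $v\in T'$ gives $v\in\bd(\hT)$.

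Finally, for the difference $\hS\setminus\hT=(S\setminus T',\, S'\setminus T)$, a vertex $v$ in the boundary satisfies $v\in S'$, $v\notin T$, and $v\notin S\setminus T'$. The last condition splits into the two subcases $v\notin S$, in which case $v\in S'\setminus S=\bd(\hS)$, or $v\in T'$, in which case $v\in T'\setminus T=\bd(\hT)$. The case $\hT\setminus\hS$ is symmetric. In every case the candidate boundary vertex is non-reliable, so the derived biset lies in $\mPE$.

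There is no real obstacle here; the only thing to be careful about is the asymmetric definition $\hS\setminus\hT=(S\setminus T',\, S'\setminus T)$, which is precisely what makes the two subcase split in the difference argument clean. The proof is essentially a bookkeeping exercise on the four set-theoretic operations on bisets.
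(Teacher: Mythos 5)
Your proof is correct and takes the same approach as the paper: a direct boundary check for each of the four operations. In fact your execution is slightly more careful than the paper's. For intersection and difference the paper infers $\bd(\hS\cap\hT)\subseteq\bd(\hS)$ and $\bd(\hS\setminus\hT)\subseteq\bd(\hS)$ from the biset containments $\hS\cap\hT\subseteq\hS$ and $\hS\setminus\hT\subseteq\hS$; but $\hA\subseteq\hB$ does not imply $\bd(\hA)\subseteq\bd(\hB)$ (e.g.\ $\hS=(\{1\},\{1\})$, $\hT=(\emptyset,\{1\})$ gives $\hS\setminus\hT=(\emptyset,\{1\})$ with boundary $\{1\}\not\subseteq\bd(\hS)=\emptyset$). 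Only the weaker inclusion $\bd(\hS\cap\hT),\bd(\hS\setminus\hT)\subseteq\bd(\hS)\cup\bd(\hT)$ holds, which is exactly what your element-chasing establishes and is all that is needed, since both $\bd(\hS)$ and $\bd(\hT)$ consist of non-reliable nodes. So your version is the one that actually closes the gap.
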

\begin{proof}
  $\hS \cap \hT \subseteq \hS$ and $\hS \setminus \hT \subseteq \hS$
  which implies that $\bd(\hS \cap \hT) \subseteq \bd(\hS)$ and
  $\bd(\hS \setminus \hT) \subseteq \bd(\hS)$; hence intersection and
  difference preserve the property that the boundary does not contain
  any reliable nodes. Recall that
  $\hS \cup \hT = (S \cup T, S' \cup T')$ and hence
  $\bd(\hS\cup\hT) = (S' \cup T') \setminus (S \cup T) \subseteq
  \bd(\hS) \cup \bd(\hT)$; therefore, if both $\hS,\hT\in \mPE$, then
  $\hS\cup\hT \in \mPE$ as well.
\end{proof}


\begin{prop} \label{prop:elem-sndp-requirement} Let $\rE$ be the
  requirement function arising from an instance of
  \prob{Elem-SNDP} (in other words, $\rE$ is defined on the crossing bifamily $\mPE$). Then,
  \begin{itemize}
		\vspace{-.1in}
			\item $\rE(\hS) = 0$ for all bisets $\hS$
    				such that $S = \emptyset$ or $S' = V$.
		\vspace{-.1in}
			\item $\rE$ is skew-bisupermodular and
    		bimaximal.
  \end{itemize}
\end{prop}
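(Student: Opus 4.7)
The plan is to verify the three claims in order. The first bullet is immediate: if $S = \emptyset$ or $S' = V$ then the index set $\{(u,v) : u \in S,\ v \in V \setminus S'\}$ is empty, so $\rE(\hS)$ is a maximum over an empty set and equals $0$ by convention. For bimaximality, given $\hS, \hT \in \mPE$ with $S \cap T = \emptyset$, pick a pair $(u,v)$ attaining the maximum in $\rE(\hS \cup \hT)$, so $u \in S \cup T$ and $v \notin S' \cup T'$. If $u \in S$, then $v \notin S'$ and $\hS$ separates $(u,v)$, giving $r(uv) \le \rE(\hS)$; if $u \in T$, symmetrically $r(uv) \le \rE(\hT)$. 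Hence $\rE(\hS \cup \hT) \le \max\{\rE(\hS), \rE(\hT)\}$ (the disjointness hypothesis is not even needed).

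For skew-bisupermodularity, fix $\hS, \hT \in \mPE$ and pick witness pairs $(u_1,v_1)$ and $(u_2,v_2)$ realizing $\rE(\hS)$ and $\rE(\hT)$. Because these pairs carry positive requirement their endpoints are reliable and therefore, since the boundary of any biset in $\mPE$ contains only non-reliable nodes, cannot lie in $\bd(\hS) \cup \bd(\hT)$. This forces each of $u_1,v_1$ to lie in $T$ or in $V\setminus T'$, and each of $u_2,v_2$ to lie in $S$ or in $V\setminus S'$, giving four ``positions'' (II, IO, OI, OO, where the first letter records $u$'s status and the second records $v$'s) for each witness pair relative to the opposite biset.

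For each position, I would tabulate which of $\hS \cap \hT$, $\hS \cup \hT$, $\hS \setminus \hT$, $\hT \setminus \hS$ separates the witness pair, using also the reversed direction $v \to u$ since $r$ is symmetric. Positions IO and OI of $(u_1,v_1)$ (or, symmetrically, of $(u_2,v_2)$) each yield one of the two desired inequalities from a single pair: IO witnesses both $\hS \cap \hT$ and $\hS \cup \hT$, while OI witnesses both $\hS \setminus \hT$ and $\hT \setminus \hS$, and in each such case the pair's requirement also forces $\rE(\hS) = \rE(\hT)$. The remaining four subcases are II/II, II/OO, OO/II, OO/OO; in each, the witnesses of the two pairs combine to cover one side of the skew-bisupermodular inequality. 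For instance, II/II yields the ``differences'' inequality via $\hT \setminus \hS$ (from $(u_1,v_1)$) and $\hS \setminus \hT$ (from $(u_2,v_2)$), while II/OO yields the ``intersection/union'' inequality via $\hS \cap \hT$ and $\hS \cup \hT$ respectively; OO/OO and OO/II are handled symmetrically.

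The main obstacle is the case analysis in the third step: the tabulation is mechanical but requires careful attention to the reversed direction of $r$ and to which of the four combined bisets actually separates each witness pair. The crucial simplification, isolating exactly why we restrict to $\mPE$ rather than working with arbitrary bisets, is that reliable endpoints avoid biset boundaries, reducing what would otherwise be a $3 \times 3$ per-endpoint position table to the clean $2 \times 2$ in/out table above and making the finite case analysis tractable.
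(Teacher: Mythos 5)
Your first bullet and your bimaximality argument coincide with the paper's. For skew-bisupermodularity the paper gives no proof---it invokes Lemma 3.11 of Fleischer, Jain, and Williamson \cite{FleischerJW06}---whereas you sketch a direct case analysis. Your $2\times 2$ position table, and the observation that reliable witness endpoints cannot lie on the boundary of any biset in $\mPE$, is exactly the structural point that makes the cited lemma work, so as a route it is sound and more self-contained than the paper's citation.

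The sketch has one substantive error. You claim that if $(u_1,v_1)$ is in the IO position (so $u_1\in T$, $v_1\notin T'$) then this alone ``forces $\rE(\hS)=\rE(\hT)$'' and hence $\rE(\hS\cap\hT)+\rE(\hS\cup\hT)\geq\rE(\hS)+\rE(\hT)$. It does not: the pair being separated by $\hT$ gives only $\rE(\hT)\geq r(u_1v_1)=\rE(\hS)$, so the two left-hand terms are each $\geq\rE(\hS)$ but could fall short of $\rE(\hS)+\rE(\hT)$ when $\rE(\hT)>\rE(\hS)$. To close this you must still inspect $(u_2,v_2)$: in the II position it gives $\rE(\hS\cap\hT)\geq\rE(\hT)$, in OO it gives $\rE(\hS\cup\hT)\geq\rE(\hT)$, and in IO or OI it (or its reverse) is itself separated by $\hS$, giving $\rE(\hS)\geq\rE(\hT)$ and hence the equality you asserted. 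So the IO/OI positions do not let you drop the second pair from the tabulation. A smaller omission: when $\rE(\hS)=0$ there is no witness pair $(u_1,v_1)$ at all, so that degenerate case must be handled separately (it follows by inspecting $(u_2,v_2)$ alone, or is vacuous if both are zero).
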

\begin{proof}
It is straightforward to see that if $S=\emptyset$ or $S'=V$, then $\hS$
does not separate any pair of terminals and $\rE(\hS)=0$.

Fleischer \etal proved that $\rE$ is skew-bisupermodular on $\mPE$
(see Lemma 3.11 in~\cite{FleischerJW06}). Further, we show that $\rE$
is bimaximal on $\mPE$. Let $\hS, \hT \in \mPE$ and let $(s, t)$ be a
pair of terminals that have the maximum connectivity requirement among
all terminal pairs separated by $\hS \cup \hT$, i.e.,
$\rE(\hS \cup \hT) = r(s, t)$. Since $s \in S \cup T$, we have
$s \in S$ or $s \in T$; without loss of generality, assume $s \in
S$. Since $t \in V \setminus (S' \cup T')$, the pair $(s, t)$ is
separated by $\hS$ and thus
$\rE(\hS \cup \hT) \leq \rE(\hS) \leq \max\set{\rE(\hS), \rE(\hT)}$.
\end{proof}

\subsection{Augmentation Framework}\label{sec:augmentation}

Now we turn our attention to the proof of
Theorem~\ref{thm:intro-elem-sndp}.  
We alert the reader that, in order to cover the function $\fE$, we need
to pick a set of \emph{edges}. But since the weights are (only) on
the nodes, we pay for a set of nodes and we can use any of the edges
in the graph induced by the selected nodes to cover $\fE$.
More precisely, our goal is to select a minimum-weight subgraph
$H=G[X]$ that covers $\fE$, where $X$ is a subset of the vertex set of $G$. We will
always assume that $X$ contains all terminals.

Our algorithm for covering $\fE$ uses the augmentation framework
introduced by Williamson \etal \cite{WilliamsonGMV95} for
edge-weighted \prob{EC-SNDP}. For a non-negative integer $\ell$
consider the requirement function $r_\ell$ where
$r_\ell(\hS) = \min \set{\ell, \rE(\hS)}$. Similarly we define
$f_\ell$ where $f_\ell(\hS) = r_\ell(\hS) - \card{\bd(\hS)}$.  The
algorithm performs $k$ phases with the following property: at the end
of phase $\ell$, the algorithm constructs a subgraph $H_\ell$ that
covers $f_\ell$.  In phase $\ell$, the algorithm starts with the
subgraph $H_{\ell - 1}= (V, E[X_{\ell-1}])$ that covers $f_{\ell - 1}$ and
adds a new set of vertices to $H_{\ell - 1}$ to obtain
$H_\ell = (V,E[X_\ell])$. We elaborate on this augmentation process.  It
is convenient to assume that all vertices in $X_{\ell-1}$ have zero
weight since they have already been paid for.  Let
$G_\ell = (V, E(G) \setminus E(H_{\ell - 1}))$.  The goal in phase
$\ell$ is to select a minimum-weight subgraph $H$ of $G_\ell$ that
covers the function $h_\ell$, where
$h_\ell(\hS) = \max\{0,f_\ell(\hS) - \card{\delta_{H_{\ell - 1}}(\hS)}\}$ for
each $\hS \in \mPE$. Note that $h_\ell(\hS) \le 1$ for all $\hS$.
Moreover it is an uncrossable and bimaximal function, and satisfies
certain other properties which we will formally specify later.

The phase $\ell$ augmentation problem is then the following: given a
subgraph $H_{\ell-1}$ that covers $f_{\ell-1}$, find a
minimum weight subset of nodes $A$ such that $(V,E_{G_\ell}[X_{\ell-1} \cup A])$
covers $h_\ell$.

\begin{theorem}\label{thm:aug-approx}
  Suppose there is a $\lambda(\ell)$-approximation algorithm for the
  phase $\ell$ augmentation problem for each  $1 \le \ell \le k$.
  Then there is a $\sum_{\ell=1}^k \lambda(\ell)$-approximation for
  node-weighted \prob{Elem-SNDP}.
\end{theorem}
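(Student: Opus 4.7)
The plan is to follow the standard analysis of the augmentation framework, combining the approximation guarantees of each of the $k$ phases. Let $\opt$ denote a fixed optimal feasible cover of $\fE$, i.e., a minimum node-weight set $X^* \subseteq V$ containing all terminals such that $H^* = (V, E[X^*])$ covers $\fE$, with weight $w(\opt) = w(X^*)$.

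First I would argue that in every phase $\ell$, the node set $X^* \setminus X_{\ell-1}$ is a feasible augmentation of $H_{\ell-1}$ with respect to $h_\ell$. Fix $\hat S \in \mPE$ with $h_\ell(\hat S) \ge 1$. Since $X^* \cup X_{\ell-1}$ induces a supergraph of both $H^*$ and $H_{\ell-1}$, and since $H^*$ covers $f_\ell$ (because $f_\ell \le \fE$), we have
\[
\bigl|\delta_{E[X^* \cup X_{\ell-1}]}(\hat S)\bigr| \;\ge\; \bigl|\delta_{H^*}(\hat S)\bigr| \;\ge\; f_\ell(\hat S).
\]
Subtracting $|\delta_{H_{\ell-1}}(\hat S)|$ from both sides shows that the edges contributed by the new nodes $X^* \setminus X_{\ell-1}$ (inside $G_\ell$) cover $h_\ell(\hat S) = f_\ell(\hat S) - |\delta_{H_{\ell-1}}(\hat S)|$. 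Hence $X^* \setminus X_{\ell-1}$ is a feasible solution to the phase $\ell$ augmentation problem.

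Next I would account for costs. Recall that at the beginning of phase $\ell$ the algorithm zeroes out the weights of all vertices in $X_{\ell-1}$ (they have already been paid for). Therefore the cost of the optimal phase $\ell$ augmentation is at most the weight of $X^* \setminus X_{\ell-1}$ under the modified weights, which is at most $w(X^*) = w(\opt)$. Applying the assumed $\lambda(\ell)$-approximation in phase $\ell$, the incremental weight picked up by the algorithm during this phase is at most $\lambda(\ell) \cdot w(\opt)$.

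Finally, summing over the $k$ phases, the total weight of $H_k = (V, E[X_k])$ is at most $\sum_{\ell=1}^{k}\lambda(\ell) \cdot w(\opt)$. Since $H_k$ covers $f_k = \fE$ by construction, this is a feasible cover of $\fE$, hence a feasible solution to node-weighted \prob{Elem-SNDP} of weight at most $\bigl(\sum_{\ell=1}^{k}\lambda(\ell)\bigr)\, w(\opt)$. There is no real obstacle here: the two things to be slightly careful about are (i) that the residual requirement $h_\ell$ is exactly what is left to cover after the previous phases and that $\opt$ continues to witness its feasibility, and (ii) that the weight relaxation to zero on $X_{\ell-1}$ is what lets each phase charge against $w(\opt)$ rather than against a larger residual optimum; both are captured by the computation above.
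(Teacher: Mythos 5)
Your proof is correct and follows precisely the reasoning the paper gives (though the paper only sketches it in one sentence): each phase's optimal augmentation is bounded by $w(\opt)$ because $X^*\setminus X_{\ell-1}$ is feasible for phase $\ell$, and summing over phases gives the result. Your verification that $X^*\setminus X_{\ell-1}$ is feasible via the edge-decomposition of $E[X^*\cup X_{\ell-1}]$ into $E[X_{\ell-1}]$ and $E_{G_\ell}[X_{\ell-1}\cup(X^*\setminus X_{\ell-1})]$ is exactly the right calculation.
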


The preceding theorem is an easy consequence of the augmentation
framework and the fact that the optimum cost of any instance of the
augmentation problem that arises in phase $\ell$ is upper bounded by
the optimum cost of the solution for the original instance of
\prob{Elem-SNDP}. We will show that $\lambda(\ell) \le 10$ if $G$ is
planar. For proper minor-closed family of graphs we prove that
$\lambda(\ell) = O(1)$ where the constant depends on the family.
This leads to the claimed $O(k)$ approximation for node-weighted
\prob{Elem-SNDP} that proves Theorem~\ref{thm:intro-elem-sndp}.

\subsection{Properties of the Function $h_\ell$}
We now discuss some properties of the function that arises in the
augmentation process.

\begin{lemma}\label{lem:phase-supermodular}
  The functions $r_{\ell}$, $f_\ell$ and
  $f_\ell - \card{\delta_{H_{\ell - 1}}}$ are
  skew-bisupermodular on $\mPE$.
\end{lemma}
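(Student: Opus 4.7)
The plan is to prove the three statements in sequence. The heart of the argument is showing that $r_\ell = \min\{\ell, \rE\}$ is skew-bisupermodular on $\mPE$; once that is in hand, the remaining two claims follow by two applications of \propref{difference-bisup}: $f_\ell = r_\ell - \card{\bd(\cdot)}$ is the difference of a skew-bisupermodular function and the bisubmodular function $\card{\bd(\cdot)}$ (\propref{boundary-bisubmodular}), and $f_\ell - \card{\delta_{H_{\ell-1}}(\cdot)}$ is a further difference with the bisubmodular function $\card{\delta_{H_{\ell-1}}(\cdot)}$ (Lemma~\ref{lem:delta-bisubmod}).

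For the key first step, I would combine both the skew-bisupermodularity and the bimaximality of $\rE$ provided by \propref{elem-sndp-requirement}. First I would establish a strengthened form of bimaximality: for any $\hS, \hT \in \mPE$ and any biset $\hR \in \{\hS \cap \hT,\, \hS \cup \hT,\, \hS \setminus \hT,\, \hT \setminus \hS\}$, $\rE(\hR) \leq \max\{\rE(\hS), \rE(\hT)\}$. This is verified directly from the definition $\rE(\hR) = \max\{r(uv) : u \in R,\, v \notin R'\}$: any pair $(u,v)$ witnessing the value of $\rE(\hR)$ is, by a short case check on which of the four derived bisets $\hR$ is, also separated by $\hS$ or by $\hT$.

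Now fix $\hS, \hT \in \mPE$ with, WLOG, $\rE(\hS) \geq \rE(\hT)$. If $\rE(\hS) \leq \ell$, then by the strengthened bimaximality all four derived bisets have $\rE$-value at most $\ell$, so $r_\ell$ coincides with $\rE$ on them and the skew-bisupermodular inequality for $\rE$ transfers verbatim. If $\rE(\hS) > \ell$, invoke skew-bisupermodularity of $\rE$ to obtain either $\rE(\hS \cap \hT) + \rE(\hS \cup \hT) \geq \rE(\hS) + \rE(\hT)$ or $\rE(\hS \setminus \hT) + \rE(\hT \setminus \hS) \geq \rE(\hS) + \rE(\hT)$; denote the two summands on the left by $a$ and $b$. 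The goal is to show $\min\{\ell, a\} + \min\{\ell, b\} \geq \ell + r_\ell(\hT)$. If both $a,b \geq \ell$ this reads $2\ell \geq \ell + r_\ell(\hT)$. If both are $<\ell$, then $a + b \geq \rE(\hS) + \rE(\hT) > \ell + \rE(\hT) \geq \ell + r_\ell(\hT)$ already suffices. In the mixed case, say $a < \ell \leq b$, the strengthened bimaximality gives $b \leq \rE(\hS)$, so $a \geq \rE(\hS) + \rE(\hT) - b \geq \rE(\hT) \geq r_\ell(\hT)$, and the sum is $a + \ell \geq r_\ell(\hT) + \ell$.

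The main obstacle I anticipate is the mixed case, which is exactly where the strengthened bimaximality is crucially used: without the bound $b \leq \rE(\hS)$, one cannot control the smaller summand after truncation. This is a specific feature of the ``maximum separating requirement'' structure of $\rE$; skew-bisupermodularity alone does not in general pass through the $\min\{\ell,\cdot\}$ truncation, which is precisely why the bimaximal half of \propref{elem-sndp-requirement} must be invoked alongside the skew-bisupermodular half.
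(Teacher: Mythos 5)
Your proof is correct, but the route you take to establish that $r_\ell = \min\{\ell,\rE\}$ is skew-bisupermodular is genuinely different from the paper's. The paper disposes of this in one line: it observes that $r_\ell$ is itself the biset requirement function of a fresh \prob{Elem-SNDP} instance whose pairwise requirements are $r_\ell(s,t) = \min\{\ell, r(s,t)\}$ (the identity $\max_{u\in S, v\notin S'}\min\{\ell,r(uv)\} = \min\{\ell,\max_{u\in S,v\notin S'}r(uv)\}$ makes this precise), and then simply re-invokes \propref{elem-sndp-requirement} for that instance. You instead prove directly, at the level of abstract biset functions, that truncation by a constant preserves skew-bisupermodularity as long as one also has the strengthened bimaximality property $\rE(\hR)\le\max\{\rE(\hS),\rE(\hT)\}$ for all four derived bisets $\hR$; your case analysis, and in particular the use of $b\le\rE(\hS)$ to control the smaller summand in the mixed case, is the right argument and it checks out (including the verification that the ``maximum separating requirement'' structure of $\rE$ does give the strengthened bimaximality for intersections and differences, not just unions). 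The paper's approach is shorter and leans on already-proved machinery; yours isolates an abstract sufficient condition for truncation to commute with skew-bisupermodularity, which is arguably more illuminating and reusable, at the cost of more case analysis. The last two claims (pushing through $-\card{\bd(\cdot)}$ and $-\card{\delta_{H_{\ell-1}}(\cdot)}$ via \propref{difference-bisup}) are handled identically in both.
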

\begin{proof}
  Note that $r_{\ell}$ can be defined as the biset function
  corresponding to the \prob{Elem-SNDP} instance in which $r_{\ell}(s,
  t) = \min\{\ell, r(s,t)\}$. By Proposition~\ref{prop:elem-sndp-requirement},
  $r_{\ell}$ is a skew-bisupermodular function.
  Since $\card{\bd(\hS)}$ is a bisubmodular function
  (Proposition~\ref{prop:boundary-bisubmodular}), by
  Proposition~\ref{prop:difference-bisup}, $f_{\ell}$ is skew-bisupermodular as well.
  Moreover, using the fact that $f_{\ell}$ is skew-bisupermodular and
  $\card{\delta_{H_{\ell-1}}(.)}$ is bisubmodular, by
  Proposition~\ref{prop:difference-bisup}, $f_\ell - \card{\delta_{H_{\ell - 1}}}$ is
  skew-bisupermodular as well.
\end{proof}

For each biset $\hS\in \mPE$, Let $h'_\ell(\hS) = f_\ell(\hS) - \card{\delta_{H_{\ell-1}}(\hS)}$. From the
preceding lemma $h'_\ell$ is skew-bisupermodular and moreover for each $\hS \in \mPE$,
$h'_\ell(\hS) \le 1$. Note that $h_\ell(\hS) = \max\{0, h'_\ell(\hS)\}$.
We claim that $h_\ell$ is bi-uncrossable.
To see this, suppose $h_\ell(\hS) = 1$ and $h_\ell(\hT) = 1$
then by skew-supermodularity of $h'_\ell$ we have
$h'_\ell(\hS \cup \hT) + h'_\ell(\hS \cap \hT) \ge 2$ or
$h'_\ell(\hS \setminus \hT) + h'_\ell(\hT \setminus \hS) \ge 2$;
This is possible only if $h_\ell(\hS \cup \hT) + h_\ell(\hS \cap \hT) \ge 2$ or
$h_\ell(\hS \setminus \hT) + h_\ell(\hT \setminus \hS) \ge 2$ because
both $h_\ell$ and $h'_\ell$ are at most $1$ on any biset.

\begin{prop} \label{prop:violated-pairs-alternate-defn}
	Consider an integer $\ell\leq k$. Then $h_\ell(\hS) = 1$ iff $\rE(\hS) \geq \ell$ and $\card{\bd (\hS)} +
	\card{\delta_{H_{\ell - 1}}(\hS)} = \ell - 1$.
\end{prop}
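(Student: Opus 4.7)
The plan is to unfold the definitions carefully in both directions, leveraging the fact (already noted in the text just above the proposition) that $h_\ell(\hS) \leq 1$ on every biset, so that $h_\ell(\hS) = 1$ is equivalent to $f_\ell(\hS) - \card{\delta_{H_{\ell-1}}(\hS)} = 1$, i.e.\ to
\[
r_\ell(\hS) \;=\; 1 + \card{\bd(\hS)} + \card{\delta_{H_{\ell-1}}(\hS)}.
\]
The two conditions to prove equivalent are therefore (a) that the displayed equation holds, and (b) that $\rE(\hS)\geq \ell$ together with $\card{\bd(\hS)} + \card{\delta_{H_{\ell-1}}(\hS)} = \ell-1$.

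For the easy direction (b) $\Rightarrow$ (a), I would observe that $\rE(\hS)\geq\ell$ forces $r_\ell(\hS) = \min\{\ell,\rE(\hS)\} = \ell$, so plugging the hypothesis $\card{\bd(\hS)} + \card{\delta_{H_{\ell-1}}(\hS)} = \ell-1$ into the definition of $h_\ell$ yields $h_\ell(\hS) = \max\{0,\ell - (\ell-1)\} = 1$.

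The main direction (a) $\Rightarrow$ (b) is where the inductive hypothesis on $H_{\ell-1}$ is needed. Since by construction $H_{\ell-1}$ covers $f_{\ell-1}$, for every biset $\hS\in\mPE$ we have
\[
\card{\delta_{H_{\ell-1}}(\hS)} \;\geq\; f_{\ell-1}(\hS) \;=\; r_{\ell-1}(\hS) - \card{\bd(\hS)},
\]
i.e.\ $\card{\bd(\hS)} + \card{\delta_{H_{\ell-1}}(\hS)} \geq r_{\ell-1}(\hS) = \min\{\ell-1,\rE(\hS)\}$. Now assume $h_\ell(\hS)=1$. I would split on whether $\rE(\hS) < \ell$ or $\rE(\hS)\geq\ell$. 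In the first case $r_\ell(\hS) = \rE(\hS) = r_{\ell-1}(\hS)$, so the previous inequality gives $\card{\bd(\hS)} + \card{\delta_{H_{\ell-1}}(\hS)} \geq r_\ell(\hS)$, contradicting $r_\ell(\hS) = 1 + \card{\bd(\hS)} + \card{\delta_{H_{\ell-1}}(\hS)}$. Hence $\rE(\hS)\geq\ell$, so $r_\ell(\hS)=\ell$, and rearranging the displayed equation gives $\card{\bd(\hS)} + \card{\delta_{H_{\ell-1}}(\hS)} = \ell-1$, as required.

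There is no real obstacle here; the only subtle point is remembering that $r_\ell$ is capped at $\ell$ so we must rule out the ``small requirement'' case using the covering property of $H_{\ell-1}$ from the previous phase. The argument is purely definitional and does not invoke skew-bisupermodularity, biuncrossability, or any planarity.
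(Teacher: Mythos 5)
Your proof is correct and takes essentially the same route as the paper's: both directions unwind the definitions, and the forward direction uses the covering property $\card{\delta_{H_{\ell-1}}(\hS)} \geq f_{\ell-1}(\hS)$ to force $r_\ell(\hS) > r_{\ell-1}(\hS)$, hence $\rE(\hS)\geq\ell$. The paper phrases this as a direct chain of inequalities rather than your case split on $\rE(\hS)<\ell$ versus $\rE(\hS)\geq\ell$, but the two formulations are logically identical.
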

\begin{proof}
  If $\rE(\hS) \geq \ell$ and
  $\card{\bd(\hS)} + \card{\delta_{H_{\ell - 1}}(\hS)} = \ell - 1$,
  then by definition, $h_{\ell}(\hS) = \max\{0, \ell  - \bd(\hS) -
  \card{\delta_{H_{\ell - 1}}(\hS)}\} = 1$.

  We now consider the other direction. Suppose that $h_\ell(\hS) =
  1$.  This implies that
  $f_\ell(\hS) - \card{\delta_{H_{\ell - 1}}(\hS)} = 1$. Since
  $H_{\ell - 1}$ covers $f_{\ell - 1}$, we have
  $\card{\delta_{H_{\ell - 1}}(\hS)} \geq f_{\ell - 1}(\hS)$. Thus
  $f_\ell(\hS) \geq f_{\ell - 1}(\hS) + 1$ and hence
  $r_\ell(\hS) \geq r_{\ell - 1}(\hS) + 1$.  It follows that
  $\rE(\hS) \geq \ell$ and
  $\card{\bd (\hS)} + \card{\delta_{H_{\ell - 1}}(\hS)} \leq
  r_\ell(\hS) - 1 = \ell - 1$.  Moreover, since
  $\card{\delta_{H_{\ell - 1}}(S)} + \card{\bd(\hS)} \geq r_{\ell -
    1}(\hS) = \ell - 1$,
  $\card{\delta_{H_{\ell - 1}}(\hS)} + \card{\bd(\hS)} = \ell - 1$.
\end{proof}


Recall that $H_{\ell-1}$ covers $f_{\ell-1}$ and $G_\ell = G(V, E(G)\setminus
E(H_{\ell-1}))$. Further $H_{\ell-1} =(V, E[X_{\ell-1}])$ where $X_{\ell-1}$
is the set of nodes paid for in the first $\ell-1$ phases.

\begin{lemma}
  \label{lem:aug-func-bd}
  For any $X \supset X_{i-1}$ let $H_X = (V, E_{G_{\ell}}[X])$ be a subgraph
  of $G_\ell$. Suppose $\hC \in \mPE$ is a violated biset
  of $H_X$ with respect to $h_\ell$. Then $\bd(\hC) \subseteq X$.
\end{lemma}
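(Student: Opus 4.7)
The plan is to argue by contradiction, combining the tight count from \propref{violated-pairs-alternate-defn} with the feasibility of $H_{\ell-1}$ for $f_{\ell-1}$. First I would reduce the hypothesis to a usable form: since $h_\ell$ is $\set{0,1}$-valued and $\hC$ is violated with respect to $H_X$, we must have $h_\ell(\hC) = 1$ and $\card{\delta_{H_X}(\hC)} = 0$. \propref{violated-pairs-alternate-defn} then yields $\rE(\hC) \ge \ell$ and the tight equality $\card{\bd(\hC)} + \card{\delta_{H_{\ell-1}}(\hC)} = \ell - 1$, i.e.\ $\card{\delta_{H_{\ell-1}}(\hC)} = f_{\ell-1}(\hC)$.

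Next, suppose for contradiction that some $v \in \bd(\hC)$ lies outside $X$. Since $X \supseteq X_{\ell-1}$, also $v \notin X_{\ell-1}$, and since $\hC \in \mPE$ its boundary contains only non-reliable nodes, so $v \in V \setminus R$. The key move is to shrink the outer part and form the biset $\hC' = (C, C' \setminus \set{v})$. Because we remove only a non-reliable node from the outer part, $\hC' \in \mPE$, and because all terminals live in $R$, the bisets $\hC$ and $\hC'$ separate exactly the same pairs of terminals; hence $\rE(\hC') = \rE(\hC) \ge \ell$, so $r_{\ell-1}(\hC') = \ell - 1$, and since the boundary shrinks by one, $f_{\ell-1}(\hC') = f_{\ell-1}(\hC) + 1$. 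Because $v$ is isolated in $H_{\ell-1} = (V, E[X_{\ell-1}])$, we also have $\delta_{H_{\ell-1}}(\hC') = \delta_{H_{\ell-1}}(\hC)$. Combining this with feasibility $\card{\delta_{H_{\ell-1}}(\hC')} \ge f_{\ell-1}(\hC')$ gives $\card{\delta_{H_{\ell-1}}(\hC)} \ge f_{\ell-1}(\hC) + 1$, contradicting the tight equality from the first step. So $\bd(\hC) \subseteq X$.

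The only place the argument could conceivably break is the claim that $\rE$ is preserved when passing from $\hC$ to $\hC'$, and this is exactly where the definition of $\mPE$ — boundaries consist only of non-reliable nodes — is essential: if $v$ could have been a reliable terminal, moving it into the outer complement might introduce a new separated terminal pair and inflate $\rE(\hC')$, collapsing the contradiction. Everything else is a one-line comparison of integer counts.
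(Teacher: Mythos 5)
Your proof is correct and takes essentially the same approach as the paper: both form the shrunk biset $(C, C'\setminus\{v\})$, observe that $v$ is isolated in $H_{\ell-1}$ so $\delta_{H_{\ell-1}}$ is unchanged, and derive a contradiction with the feasibility of $H_{\ell-1}$ for $f_{\ell-1}$. Your explicit justification that $v$ is non-reliable (because $\hC\in\mPE$) and hence cannot be a terminal is a small clarification over the paper's terser "$u$ is not a terminal," but the argument is the same.
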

\begin{proof}
  Suppose for the sake of contradiction there is a violated biset
  $\hC$ and a vertex $u \in \bd(\hC)$ such that $u \not \in X$.
  Consider the biset $\hC_1 = (C, \bd(\hC) \setminus \{u\})$.  Since
  $u$ is not a terminal, $r_\ell(\hC_1) = r_\ell(\hC)  = \ell$.  Since
  $h_\ell(\hC) = 1$ we have $r_\ell(\hC) = \ell$ and
  $\card{\bd(\hC)} + \card{\delta_{H_{\ell-1}}(\hC)} = \ell - 1$.  In
  the graph $H_{\ell-1}$, the vertex $u$ has no edges incident to it
  since $H_{\ell-1} = (V, E[X_{\ell-1}])$ and $u \not \in X_{\ell-1}$.
  Therefore, $\delta_{H_{\ell-1}}(\hC_1) = \delta_{H_{\ell-1}}(\hC)$.
  Since $\card{\bd(\hC_1)} = \card{\bd(\hC)} - 1$, we have
  $\card{\bd(\hC_1)} + \card{\delta_{H_{\ell-1}}(\hC_1)} = \ell -2$
  which implies that $H_{\ell-1}$ is not a feasible cover for
  $r_{\ell-1}$, a contradiction.
\end{proof}

\begin{lemma}
  \label{lem:aug-func-minimal}
  For any $X \supset X_{i-1}$ let $H_X = (V, E_{G_\ell}[X])$ be a subgraph
  of $G_\ell$. Suppose $\hC \in \mPE$ is a \emph{minimal} violated biset
  of $H_X$ with respect to $h_\ell$. Then the following properties
  hold.
  \begin{itemize}
  \item $C' \subseteq X$.
  \item $G[C]$ is a connected subgraph of $G$.
  \end{itemize}
\end{lemma}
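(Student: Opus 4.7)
The plan is to establish both parts by contradiction: in each case I will exhibit a strictly smaller biset $\hat D \subsetneq \hC$ that lies in $\mPE$, satisfies $h_\ell(\hat D) = 1$, and has $|\delta_{H_X}(\hat D)| = 0$; such a $\hat D$ is a violated biset strictly contained in $\hC$, contradicting minimality. The main workhorse throughout is Proposition~\ref{prop:violated-pairs-alternate-defn}, which reduces $h_\ell(\hat D) = 1$ to the two checkable conditions $\rE(\hat D) \ge \ell$ and $|\bd(\hat D)| + |\delta_{H_{\ell-1}}(\hat D)| = \ell - 1$. A standing observation I will use freely is that both $H_{\ell-1} = (V, E[X_{\ell-1}])$ and $H_X = (V, E_{G_\ell}[X])$ are edge-induced on their vertex sets: any vertex outside $X_{\ell-1}$ (resp.\ $X$) has no incident edges in $H_{\ell-1}$ (resp.\ $H_X$), and any pair of vertex sets with no $G$-edges between them has no $H_{\ell-1}$- or $H_X$-edges between them either.

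For $C' \subseteq X$, Lemma~\ref{lem:aug-func-bd} already yields $\bd(\hC) \subseteq X$, so I only need to rule out the existence of $v \in C \setminus X$. Since every terminal has weight zero and lies in every feasible solution, all terminals belong to $X_{\ell-1} \subseteq X$; hence $v$ is not a terminal, and removing $v$ from the inner part does not change $\rE$. I will take $\hat D := (C \setminus \{v\}, C' \setminus \{v\})$, note that $\bd(\hat D) = \bd(\hC)$ (so $\hat D \in \mPE$) and $\hat D \subsetneq \hC$, and verify that $v$'s absence from both $X_{\ell-1}$ and $X$ gives $|\delta_{H_{\ell-1}}(\hat D)| = |\delta_{H_{\ell-1}}(\hC)|$ and $|\delta_{H_X}(\hat D)| = |\delta_{H_X}(\hC)| = 0$. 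Together with $\rE(\hat D) = \rE(\hC) \ge \ell$ and $|\bd(\hat D)| = |\bd(\hC)|$, Proposition~\ref{prop:violated-pairs-alternate-defn} then certifies $h_\ell(\hat D) = 1$, which makes $\hat D$ a violated strict sub-biset of $\hC$, the desired contradiction.

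For connectivity of $G[C]$, suppose $C = C_1 \sqcup C_2$ with both parts nonempty and no $G$-edge between them. I will take $\hat C_i := (C_i, C_i \cup \bd(\hC))$ for $i = 1, 2$; each lies in $\mPE$ (boundary unchanged), is a strict sub-biset of $\hC$, and $\hat C_1 \cup \hat C_2 = \hC$ with disjoint inner parts. Bimaximality of $\rE$ (Proposition~\ref{prop:elem-sndp-requirement}) transfers through the $\min$ to $r_\ell$, so $r_\ell(\hC) \le \max\{r_\ell(\hat C_1), r_\ell(\hat C_2)\}$ and WLOG $\rE(\hat C_1) \ge \ell$. The absence of $G$-edges between $C_1$ and $C_2$ yields the additive identities $|\delta_{H_{\ell-1}}(\hat C_1)| + |\delta_{H_{\ell-1}}(\hat C_2)| = |\delta_{H_{\ell-1}}(\hC)| = \ell - 1 - |\bd(\hC)|$ and $|\delta_{H_X}(\hat C_1)| + |\delta_{H_X}(\hat C_2)| = |\delta_{H_X}(\hC)| = 0$. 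Combining this with the feasibility bound $|\delta_{H_{\ell-1}}(\hat C_1)| \ge r_{\ell-1}(\hat C_1) - |\bd(\hat C_1)| = \ell - 1 - |\bd(\hC)|$ (using that $H_{\ell-1}$ covers $f_{\ell-1}$ and $\rE(\hat C_1) \ge \ell$) pinches $|\delta_{H_{\ell-1}}(\hat C_1)|$ to equality and forces $|\delta_{H_X}(\hat C_1)| = 0$, so $\hat C_1$ is violated by Proposition~\ref{prop:violated-pairs-alternate-defn}.

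The main point requiring care is keeping every candidate biset inside $\mPE$, i.e., ensuring its boundary contains no reliable vertex. Both constructions are designed so the boundary coincides with $\bd(\hC)$, sidestepping any case analysis on reliability of $v$ or of the vertices separating $C_1$ from $C_2$. A second subtlety is that $f_\ell$ (and hence $h_\ell$) is not itself bimaximal, since the $|\bd(\cdot)|$ term is only subadditive on disjoint inner parts; so in Part 2 the max-over-the-split argument must be phrased at the level of $r_\ell$ (which does inherit bimaximality from $\rE$ via the $\min$), and $h_\ell(\hat C_1) = 1$ must be recovered afterwards through the alternative characterization of Proposition~\ref{prop:violated-pairs-alternate-defn}.
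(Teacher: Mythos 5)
Your proof is correct and follows essentially the same route as the paper's: for $C' \subseteq X$, remove a non-terminal $v \in C \setminus X$ to form the strictly smaller biset $(C \setminus\{v\}, C'\setminus\{v\})$; for connectivity, split a disconnected $C$ into $\hat C_1, \hat C_2$ with boundary $\bd(\hC)$, invoke bimaximality of $\rE$, and recover $h_\ell(\hat C_1)=1$ via Proposition~\ref{prop:violated-pairs-alternate-defn}. The one place you are more explicit than the paper is the pinching argument in Part 2 (using feasibility of $H_{\ell-1}$ together with the additive decomposition $\delta_{H_{\ell-1}}(\hC) = \delta_{H_{\ell-1}}(\hat C_1) \sqcup \delta_{H_{\ell-1}}(\hat C_2)$ to pin down $|\delta_{H_{\ell-1}}(\hat C_1)|$), which the paper compresses into a single assertion; spelling this out is a genuine improvement in clarity, not a deviation.
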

\begin{proof}
  For ease of notation we let $H$ denote the subgraph $H_X$. Note that
  Since $\hC$ is a violated biset in $H$ we have $h_\ell(\hC) = 1$
  and $|\delta_{H}(\hC)| = 0$. From
  Proposition~\ref{prop:violated-pairs-alternate-defn}, since
  $h_\ell(\hC) = 1$, $r(\hC) = \ell$ and
  $\card{\bd(\hC)} +\card{\delta_{H_{\ell - 1}}(\hC)} = \ell -
  1$. Suppose there is a vertex $u \in C'$ such that $u \not \in
  X$. By Lemma~\ref{lem:aug-func-bd}, $u\in C$. First, $u$ is not a terminal since all terminals are in
  $X_{i-1}$ (and hence in $X$).  Second $u$ is an isolated vertex in
  $H$ since the only edges in $H$ are between nodes in $X$. Consider
  the biset $\hC_1 = (C -u, C'-u)$ obtained from $C$ by removing $u$.
  Since $u$ is not a terminal we have $r(\hC) = r(\hC_1)$.  And since
  $u$ is isolated in $H$ we have
  $\delta_{H}(\hC_1) = \delta_{H}(\hC)$, since we moved $u$ out of
  $C'$, $\bd(\hC_1) \subseteq \bd(\hC)$.  These facts imply that
  $h_\ell(\hC_1) = 1$ and $\hC_1$ is a violated biset in $H$. This
  contradicts minimality of $\hC$. Therefore $C' \subseteq X$.

  We now prove that $G[C]$ is connected. For sake of contradiction
  suppose it is not. Let $C_1, C_2$ be two non-empty sets that
  partition $C$ such that there is no edge between $C_1$ and $C_2$ in
  $G$; such a partition exists if $G[C]$ is not connected.  Note that
  $E_{H}(C_1,C_2) = \emptyset$ since $H$ is a subgraph of $G$.
  Define $\hC_1=(C_1, C_1 \cup \bd(\hC))$ and
  $\hC_2=(C_2, C_2 \cup \bd(\hC))$.  Since $r$ is bimaximal
  (Proposition~\ref{prop:violated-pairs-alternate-defn}),
  $r(\hC) \leq \max\set{r(\hC_1), r(\hC_2)}$.  Thus, without loss of
  generality we can assume that $r(\hC_1) \geq r(\hC) \geq \ell$.
  Since $E_{H}(C_1,C_2) = \emptyset$ we have $\delta_{H}(\hC_1)
  \subseteq \delta_{H}(\hC)$. Since $\bd(\hC_1) = \bd(\hC)$ and
  $\hC$ was a violated biset it follows that $\hC_1$ is also
  a violated biset with respect to $h_\ell$ in $H$. This contradicts
  the minimality of $\hC$.
\end{proof}

The collection of minimal violated bisets in $H_X$ with respect to
$h_\ell$ are disjoint (due to the biuncrossability), and they can computed
in polynomial time via Menger's theorem for element-connectivity
and standard maxflow algorithms. We refer the reader to
\cite{JainMVW02,FleischerJW06}.

\section{Approximation Algorithm for the Augmentation Problem}
\label{sec:main-result}
In this section we design an $O(1)$-approximation algorithm for
minimum node-weighted cover for the augmentation problem that needs to
be solved in each of the $k$ phases of the augmentation framework that
was described in the preceding section. We recall the problem arises in the 
phase $\ell$ of the augmentation framework. 
We are given $X_{\ell-1} \subseteq V$ such that
the graph $H_{\ell-1} = (V, E[X_{\ell-1}])$ is a feasible cover for
$f_{\ell-1}$. The goal is to find a minimum-weight subset of nodes
$A \subseteq V \setminus X_{\ell-1}$ such that
$H_\ell = (V, E[X_{\ell-1}\cup A])$ covers $f_\ell$. This is recast as
the problem of covering the $\{0,1\}$-biuncrossable function $h_\ell$
in the graph $G_\ell = (V, E(G) \setminus E(H_{\ell-1}))$.

In the edge-weighted case one can obtain a $2$-approximation for
covering a $\{0,1\}$-biuncrossable function in a general undirected
graph provided the function has some reasonable computational
properties such as the ability to efficiently find the minimal
violated sets with respect to any subset of the given edges of a
graph. In particular a natural LP relaxation has an integrality gap of
at most $2$. However, in the node-weighted setting, a natural LP
relaxation that we will discuss shortly has an unbounded integrality
gap. However, the function $h_\ell$ that arises in the augmentation
framework for \prob{Elem-SNDP} has additional properties that allow us
to prove a constant factor approximation in planar graphs. One can
also prove an $O(\log n)$-approximation in general graphs and this
cannot be improved since node-weighted Steiner tree is a special case
which generalizes the Set Cover problem.

In this section we prove the following theorem.
\begin{theorem}\label{thm:main}
  Let $G=(V, E)$ be a node-weighted graph from a proper minor-closed
  family of graphs $\mG$. Let $h_\ell$ be $\{0,1\}$-biuncrossable
  function that arises in phase $\ell$ of the augmentation framework
  for an instance of \prob{Elem-SNDP} defined over the graph $G$.
  There exists an $O(1)$-approximation algorithm for the problem of
  finding a minimum-weight node subset to cover $h_\ell$.
\end{theorem}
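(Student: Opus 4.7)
The plan is to adapt the primal-dual moat-growing framework of Demaine, Hajiaghayi and Klein~\cite{DemaineHK14} from node-minimal covers of $\{0,1\}$-proper functions to node-minimal covers of the richer $\{0,1\}$-biuncrossable function $h_\ell$. First I would write the natural LP relaxation of the augmentation problem: a variable $x_v \in [0,1]$ for each $v \in V \setminus X_{\ell-1}$ (all already-paid vertices have $x_v = 1$ implicitly), together with one constraint per biset $\hat{S}$ with $h_\ell(\hat{S})=1$, namely $\sum_{v \in N(\hat{S})} x_v \ge 1$, where $N(\hat{S})$ is the set of unpaid endpoints of edges of $G_\ell$ crossing $\hat{S}$. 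The dual has one variable $y_{\hat{S}}$ per such biset.

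Next I would run the standard uniform dual-growth procedure: start with $A = \emptyset$, and at each step simultaneously raise $y_{\hat{C}}$ for every minimal violated biset $\hat{C}$ of $h_\ell$ with respect to $(V, E_{G_\ell}[X_{\ell-1} \cup A])$ until some vertex $v$ becomes tight, then add $v$ to $A$ and iterate until there are no more violated bisets. Biuncrossability of $h_\ell$ (established in Section~\ref{sec:node-wt-elem-sndp}) together with Lemma~\ref{lem:minimal-violated-crossing} guarantees that, at each moment, the minimal violated bisets have pairwise disjoint inner parts, so the growth is well-defined. Minimal violated bisets can be computed in polynomial time by element-connectivity max-flow as in~\cite{JainMVW02,FleischerJW06}. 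Once growth terminates with a feasible $A$, I would perform a reverse-delete pass to obtain a \emph{node-minimal} feasible solution $A^* \subseteq A$, and set $X^* = X_{\ell-1} \cup A^*$.

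The core is the amortized bound $\sum_{v \in A^*} w(v) \le \alpha \sum_{\hat{S}} y_{\hat{S}}$. Following~\cite{DemaineHK14}, it suffices to fix an arbitrary moment of the growth phase and, letting $\mathcal{L}$ be the laminar family consisting of the minimal violated bisets being grown at that moment (laminarity comes from Lemma~\ref{lem:minimal-violated-crossing} plus standard uncrossing), bound the total number of tight nodes of $A^*$ whose moats lie in $\bigcup_{\hat{C}\in\mathcal{L}} (C' \setminus X_{\ell-1})$ by $\alpha \cdot |\mathcal{L}|$. Using Lemma~\ref{lem:aug-func-minimal} (for each $\hat{C} \in \mathcal{L}$, the inner part $C$ lies in $X^*$ and $G[C]$ is connected) and Lemma~\ref{lem:aug-func-bd} ($\bd(\hat{C}) \subseteq X^*$), I would contract each inner part $C$ of a biset in $\mathcal{L}$ to a single super-node and build a multigraph $G'$ whose remaining nodes are the tight nodes of $A^*$ lying outside $\bigcup_{\hat{C}\in\mathcal{L}} C'$, and whose edges record, for each tight node, which super-node its witness edge (forced by node-minimality of $A^*$) reaches.

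The hard step, and the only place where planarity enters, is to argue that $G'$ is a minor of $G$ (up to bounded edge multiplicity), so that it inherits the bounded average degree $c(\mG)$ of graphs in $\mG$. The biset setting is genuinely more delicate than the set case of~\cite{DemaineHK14}: each boundary $\bd(\hat{C})$ is a set of non-reliable vertices that can be shared among several laminar bisets, and a single tight node of $A^*$ may sit in the outer parts of several bisets in $\mathcal{L}$ at once. The observations I would push through are that (i) by node-minimality of $A^*$, every retained non-boundary tight node $v \in A^*$ has a private witness biset in $\mathcal{L}$ whose crossing edge in $G_\ell$ uses $v$; (ii) shared boundary vertices contribute only bounded-multiplicity parallel edges to $G'$ because, for each $\hat{C} \in \mathcal{L}$, $\card{\bd(\hat{C})} \le \ell - 1$; and (iii) the connectedness of $G[C]$ together with $C' \subseteq X^*$ is exactly what lets the contraction realize $G'$ as a minor of $G$ with multiplicity depending only on $\mG$. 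Once this minor claim is in hand, $\sum_{\hat{C}\in\mathcal{L}} \deg_{G'}(\hat{C}) \le c(\mG)\,|\mathcal{L}|$, and the standard charging of~\cite{DemaineHK14} yields the $O(1)$-approximation, tracked to $10$ in the planar case. Combining this with Theorem~\ref{thm:aug-approx} then gives the $O(k)$-bound of Theorem~\ref{thm:intro-elem-sndp}.
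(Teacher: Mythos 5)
Your high-level plan matches the paper's: the LP relaxation, the uniform dual growth on minimal violated bisets (with disjointness from Lemma~\ref{lem:minimal-violated-crossing} and biuncrossability of $h_\ell$), the reverse-delete to get a node-minimal $A^*$, and the reduction to bounding an ``average degree'' quantity where planarity enters. Lemmas~\ref{lem:aug-func-bd} and~\ref{lem:aug-func-minimal} are used exactly as you intend. So far so good.

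The genuine gap is in the charging step. You state that once the multigraph $G'$ (super-nodes from contracting inner parts $C$, plus the critical tight nodes) is realized as a minor of $G$, bounded degeneracy of $\mG$ closes the argument: $\sum_{\hC} \deg_{G'}(\hC) \leq c(\mG)\,\card{\mathcal{L}}$. But this does not follow. Sparsity of the minor gives $\card{E(G')} \leq c\,\card{V(G')} = c(\card{N}+\card{\mathcal{L}})$, where $N$ is the set of critical tight nodes. To conclude $\card{E(G')} = O(\card{\mathcal{L}})$ you first need $\card{N} = O(\card{\mathcal{L}})$ --- but that is precisely the quantity you are trying to bound, and (since $c \geq 2$ already for planar bipartite graphs) the inequality $\card{N} \leq c(\card{N}+\card{\mathcal{L}})$ gives nothing. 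The paper resolves this by a separate, \emph{purely combinatorial} step (Theorem~\ref{thm:main-counting}): $\card{Q \cap \bigcup_{\hC\in\mC}\Gamma(\hC)} \leq 4\card{\mC}$. This does not use planarity at all; it goes via a carefully chosen pair of covers $F \supseteq F'$ (removing redundant blue edges first, then redundant red edges), the dichotomy between \emph{regular} and \emph{special} critical vertices, two distinct laminar witness families $\mLB$, $\mLR$ built by Lemma~\ref{lem:pair-laminar-witness-family}, and the parent/subtree charging scheme (Lemmas~\ref{lem:parent-charge}, \ref{lem:subtree-charge}). Only after this bound is in hand does the minor/sparsity argument (Lemma~\ref{lem:elem-counting-lemma}) close the loop. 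Your proposal collapses these two distinct steps into one and the circularity is unresolved.

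A secondary but real error: your observation (ii), that shared boundary vertices contribute bounded-multiplicity parallel edges ``because $\card{\bd(\hC)} \leq \ell - 1$,'' is neither a constant bound (it grows with $k$, which would spoil the $O(k)$ total) nor needed. By Lemma~\ref{lem:aug-func-bd} every boundary vertex of a violated biset already lies in the paid set $X$, hence is never a critical vertex of $Q$, and the paper simply omits boundary vertices from the minor construction. Likewise, your observation (i) --- that node-minimality gives each tight node a ``private witness biset in $\mathcal{L}$'' --- is not quite right: node-minimality gives a node-witness, whereas the laminar structure that the counting relies on is over \emph{edge}-witnesses of an edge-minimal cover; the passage from one to the other is exactly what the regular/special split and the choice of $F$, $F'$ accomplish, and it does not come for free.
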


We remark that our result applies to a class of $\{0,1\}$-biuncrossable 
functions that is more general than the class of functions that arise from
\prob{Elem-SNDP}. Characterizing the precise class for which
the algorithm applies is not quite as clean as
we would like and hence we do not attempt to do so.

Our algorithm is a primal-dual algorithm modeled after the
well-studied algorithm for the edge-weighted case
\cite{AgrawalKR95,GoemansW95}.  The adaptation of the primal-dual
algorithm to the node-weighted case in planar graphs was done in
\cite{DemaineHK14} but were concerned with Steiner forest and
$\{0,1\}$ proper functions while our setting is more general.

\subsection{A Primal-Dual Algorithm}\label{subsec:primal-dual}
Instead of focusing on the specific setting of covering the restricted
class of functions that arise in \prob{Elem-SNDP} we will work in a
abstract framework where we have a general $\{0,1\}$-biuncrossable
function $h$ defined over a crossing bifamily $\mP$ in a node-weighted
graph $G=(V,E)$. The goal is to find a minimum weight subset $X
\subseteq V$ such that the subgraph $H = (V,E[X])$ covers $h$, that
is, $|\delta_H(\hS)| \ge h(\hS)$ for each $\hS \in \mP$.

\medskip\noindent
{\bf LP relaxation.} We consider a natural LP-relaxation of the problem and its dual which
are shown in Figure~\ref{fig:undirected-lp}. There is a variable
$x(v)$ which in the integer programming formulation indicates whether
$v$ is chosen and in the LP relaxation $x(v)$ is relaxed to be in the
interval $[0,1]$. Consider a biset $\hS \in \mP$ such that
$h(\hS) = 1$. Then any subgraph $H$ of $G$ that covers $h$ needs to
contain an edge $e \in \delta_G(\hS)$ which implies that there is
an endpoint $v$ of $e$ such that $v \in \Gamma(\hS)$; therefore at
least one vertex in $\Gamma(\hS)$ needs to be included in any feasible
cover of $h$.  This justifies the constraint in the LP relaxation.
Note that we omitted the constraint $x(v) \leq 1$ from the primal
since it is redundant. 

\begin{lemma}\label{lem:valid-lp}
  The \prob{Primal-LP} is a valid relaxation of the problem of
  covering $0$-$1$ biset functions on node-weighted graphs.
\end{lemma}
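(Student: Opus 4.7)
The plan is to exhibit, for any feasible integer solution (a subgraph $H = (V, E[X])$ covering $h$), an LP-feasible point of the same cost. Given such an $X \subseteq V$, I would define $x \in \{0,1\}^V$ by setting $x(v) = 1$ if $v \in X$ and $x(v) = 0$ otherwise. Then the LP objective evaluates to $\sum_v w(v) x(v) = w(X)$, which matches the node-weight of $H$, and trivially $x(v) \ge 0$. So the substantive step is to verify that $x$ satisfies every covering constraint $\sum_{u \in \Gamma_G(\hS)} x(u) \geq h(\hS)$ for $\hS \in \mP$.

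The only nontrivial case is $h(\hS) = 1$. Fix such a biset $\hS$. Since $H$ covers $h$, there exists an edge $e = uv \in E(H)$ with $u \in S$ and $v \in V \setminus S'$. Because $e \in E[X]$, both endpoints lie in $X$; in particular $v \in X$, so $x(v) = 1$. Moreover $uv$ is an edge of $G$ and $v \in V \setminus S'$, so by the definition of $\Gamma_G(\hS)$ we have $v \in \Gamma_G(\hS)$. Therefore
\[
\sum_{u \in \Gamma_G(\hS)} x(u) \;\geq\; x(v) \;=\; 1 \;=\; h(\hS),
\]
which is the desired inequality. Since $\hS$ was arbitrary, $x$ is LP-feasible with value $w(X)$, so the LP optimum is at most $w(X)$; as $X$ was an arbitrary feasible cover, the LP is a valid relaxation.

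There is no real obstacle here: the proof is essentially unfolding definitions. The only detail to track carefully is which endpoint of the covering edge lies in $\Gamma_G(\hS)$—namely the one in $V \setminus S'$, not the one in $S$—and the fact that an edge in $E[X]$ forces \emph{both} endpoints into $X$, which is exactly what is needed to conclude $x(v) = 1$ for the endpoint certifying the constraint.
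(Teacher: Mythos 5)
Your proof is correct and matches the paper's reasoning: the paper does not give a separate formal proof of the lemma but justifies it in the paragraph immediately preceding it, observing that any feasible cover of $h$ must include an edge crossing each biset $\hS$ with $h(\hS)=1$, whose far endpoint then lies in $\Gamma(\hS)$ and is picked up by the node-induced subgraph. You have simply carried out that observation carefully (identifying the correct endpoint in $V\setminus S'$ and noting that both endpoints of an edge of $E[X]$ lie in $X$), which is exactly the intended argument.
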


\begin{figure}[!h]
\begin{center}
\begin{boxedminipage}{0.38\textwidth}
\underline{\textbf{\prob{Primal-LP}}
\Comment{Input: $(G, \mP, h)$}}
\vspace{-.15in}
\begin{align*}
	\min & \sum_{v \in V} w(v) x(v) \\
	\text{s.t.} & \sum_{v \in \Gamma(\hS)} x(v) \geq h(\hS) \quad
	\forall \hS \in \mP\\
	& x(v)  \geq 0 \quad \forall v \in V
\end{align*}
\end{boxedminipage}
\hspace{0.4in}
\begin{boxedminipage}{0.40\textwidth}
\underline{\textbf{\prob{Dual-LP}}
\Comment{Input: $(G, \mP, h)$}}
\vspace{-.15in}
\begin{align*}
	\max & \sum_{\hS \in \mP} h(\hS) y(\hS) \\
	\text{s.t.} & \sum_{\hS: v \in \Gamma(\hS)} y(\hS) \leq w(v) \quad
	\forall v \in V\\
	& y(\hS) \geq 0 \quad \forall \hS \in \mP
\end{align*}
\end{boxedminipage}
\end{center}
\vspace{-0.5 cm}
\caption{LP-relaxation of the optimization problem of covering biset functions on node-weighted graphs and its dual program.}
\label{fig:undirected-lp}
\end{figure}
\smallskip\noindent

\medskip\noindent
{\bf Integrality gap example.}
Before we describe the primal-dual algorithm we describe a simple
example to demonstrate that the integrality gap of the LP is unbounded
for general biuncrossable functions. Let $G=(V,E)$ be a complete graph
on $n \ge 3$ nodes. Consider the case when $\mP$ is the set of all
bisets over $V$ and $h$ is the function such that $h(\hS) =1$ for a
biset $\hS$ where $S = \{v_1\}$ and $S' = \{v_1\}$; $h(\hT) = 0$ for
all other bisets. It is easy to see that $h$ is biuncrossable.  Let
$w(v_1) = 1$ and $w(v_i) = 0$ for all $i \ge 2$.  The only way to
cover $h$ is to pick $v_1$ and one other node and hence the optimum
solution has weight $1$. However, the LP relaxation is forced to only
pick a neighbor of $v_1$ and pays $0$.  This is true even if the
variables are required to be integer and in fact the integer solution
is not necessarily even feasible for the original problem. The
technical issue here is that there is no notion of ``terminals'' when
working with a general biuncrossable function. However, when working
with \prob{Elem-SNDP} the terminals are always included in a solution
and can be assumed to have weight $0$. As the algorithm proceeds the
newly added vertices can be treated as terminals and the connectivity
properties satisfied by $h_\ell$ help in this regard. This will become
clearer in the analysis.

\begin{figure}[h]
	\begin{center}
		\begin{algorithmEnv}
			\underline{{\algCover}%
				$\pth{G, h}$}: $\Bigl.$ \qquad
			\+\\ %
			\LineComment{$\textbf{y}$ denotes the variables of \prob{Dual-LP} and initialized to zero}\\
			${\bf y} \leftarrow 0$ \\
			$P_0 \leftarrow \set{v\sep w(v) = 0}$, $P
                        \leftarrow P_0$ \\
                        $\iter \leftarrow 1$ \\
                        \LineComment{$\algViolated{}(G,h,P)$ returns
                          all minimal violated bisets of $h$ {\sf
                            w.r.t} $G[P]$} \\
                        $\mC_i \leftarrow \algViolated{}(G,h,P_{\iter-1})$ \\
			\While $\mC_i \neq \emptyset$ \\
				\> {\bf Increase} $y(\hS)$ uniformly
                                for all $\hS \in \mC_i$ until a
                                constraint of \\
				\> \prob{Dual-LP} becomes tight (for $v, \sum_{\hS: v \in \Gamma(\hS)} y(\hS) = w(v)$) \\
				\> $P_{\iter} \leftarrow P_{\iter-1}
                                \cup \set{v}$, $P \leftarrow P_i$ \\
                                \> $\iter \leftarrow \iter + 1$ \\
                                \> $\mC_i \leftarrow
                                \algViolated{}(G,h,P_{\iter-1})$ \\
                                \\
                             			\LineComment{\emph{reverse-delete} step:}\\
			$Q \leftarrow P$ \\
			\Foreach $v\in Q$ in the reverse of the order in which the \While loop added vertices \\
				\>\If $\algViolated{}(G,h, Q\setminus\set{v}) = \emptyset$ \\%
				\>\> $Q \leftarrow Q \setminus\set{v}$ 
		\end{algorithmEnv}%
	\end{center}
	\vspace{-0.5 cm}
	\caption{A primal-dual algorithm for covering restricted $\{0,1\}$-biuncrossable functions.}
	\label{fig:primal-dual-alg}
\end{figure}

\medskip\noindent
{\bf Primal-dual algorithm.} We now describe the primal-dual algorithm. It is inspired by the one
in \cite{DemaineHK14} which is an adaptation to the node-weighted
setting of the standard primal-dual algorithm for the edge-weighted
case \cite{AgrawalKR95,GoemansW95}.  The algorithm selects a subset of
vertices $P$ such that the graph $(V, E[P])$ covers $h$. However, we
need to include, at the start of the algorithm, a prespecified subset
of vertices whose weight will not be counted in analyzing the
performance of the algorithm; alternatively we can assume that these nodes
have weight $0$. In \cite{DemaineHK14} the prespecified subset is
the set of terminals. In the augmentation framework this is the set
of all the vertices that have been selected in the previous phases
(for phase $1$ this is the set of terminals).

The algorithm has two high-level stages. In the first stage it starts
with $P_0$, the set of all zero weight vertices, and iteratively adds
vertices to $P_0$ as long as there are violated bisets.  This state is
guided by maintaining a dual feasible solution $\textbf{y}$ that is
implicitly initialized to zero. The iterations proceed as follows.
Consider iteration $i$ and let $P_{i-1}$ be the set of all nodes
selected in the first $i - 1$ iterations.
Let $\mC_i$ be the collection of all minimal
violated bisets of $h$ with respect to graph $G_i = (V,E[P_{i-1}])$.
$\mC_i$ can be computed in polynomial-time, and any two bisets in this
family do not cross by the biuncrossability property of $h$.
The algorithm assumes access to a procedure \algViolated{} that outputs
the minimal violated bisets with respect to a given subgraph.
The first stage of the algorithm stops when
$\mC_i$ is empty. Otherwise, in iteration $i$ it increases the dual variables
$\{y(\hC)\}_{\hC \in \mC_i}$ uniformly until a dual constraint for a
vertex $v$ becomes tight, that is, we have $\sum_{\hS: v \in
  \Gamma(\hS)} y(\hS) = w(v)$. If the dual constraint
corresponding to $v$ becomes tight, we add $v$ to $P_{i-1}$ to obtain
$P_{i}$ and move to iteration $i+1$. If several
vertices become tight at the same time, we pick one of them arbitrarily.

The second stage of the algorithm is a \emph{reverse-delete} step.  Let
$P$ be the set of vertices selected by the primal-dual algorithm.  We
select a subset $Q$ of $P$ as follows. We start with $Q = P$. We order
the vertices of $Q$ in the reverse of the order in which they were
selected by the primal-dual algorithm. Let $v$ be the current vertex.
If $(V, E[Q \setminus \set{v}])$ is still a feasible cover for $h$, we
remove $v$ from $Q$. The algorithm outputs the vertices that remain
in $Q$.

A formal description of the described primal-dual algorithm is given
in Figure~\ref{fig:primal-dual-alg}. As we discussed earlier, the
algorithm is not guaranteed to output a feasible solution for an
arbitrary $\{0,1\}$-biuncrossable function. However, we argue below that
it returns a feasible solution for the functions that arise in
the augmentation framework. We will assume that the algorithm is
run on graph $G_\ell$ with function $h_\ell$ and $P_0 = X_{\ell-1}$.

\begin{proposition} \label{prop:no-neighbor}
  At the start of iteration $i$ of the first while loop, we have
  $P_{i-1} \cap \Gamma(\hC) = \emptyset$ for every $\hC \in \mC_i$.
\end{proposition}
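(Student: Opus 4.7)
The plan is to argue by contradiction, using the structural property of minimal violated bisets established in Lemma~\ref{lem:aug-func-minimal}. Suppose there exist $\hC \in \mC_i$ and $v \in P_{i-1} \cap \Gamma(\hC)$. By the definition of $\Gamma$ (applied to the edge set of $G_\ell$, since the algorithm is run on $G_\ell$), this means $v \in V \setminus C'$ and there is an edge $uv \in E(G_\ell)$ with $u \in C$. The goal is to deduce that this edge actually lies in the current graph $(V, E_{G_\ell}[P_{i-1}])$ and therefore crosses $\hC$, contradicting the fact that $\hC$ is violated.

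To carry this out, first I would note that $P_{i-1} \supseteq P_0 = X_{\ell - 1}$ throughout the while loop, since the loop only adds vertices. Since $\hC \in \mC_i$ is a \emph{minimal} violated biset of $h_\ell$ with respect to the subgraph $(V, E_{G_\ell}[P_{i-1}])$, Lemma~\ref{lem:aug-func-minimal} applies and gives $C' \subseteq P_{i-1}$. In particular, the vertex $u \in C \subseteq C'$ satisfies $u \in P_{i-1}$.

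Consequently, both endpoints of $uv$ lie in $P_{i-1}$, so $uv \in E_{G_\ell}[P_{i-1}]$. Since $u \in C$ and $v \in V \setminus C'$, the edge $uv$ crosses $\hC$, and therefore $uv \in \delta_{(V, E_{G_\ell}[P_{i-1}])}(\hC)$. On the other hand, $\hC$ being violated with respect to $(V, E_{G_\ell}[P_{i-1}])$ for the $\{0,1\}$-valued function $h_\ell$ means $|\delta_{(V, E_{G_\ell}[P_{i-1}])}(\hC)| < h_\ell(\hC) \le 1$, i.e., this quantity is zero. This contradicts the existence of the edge $uv$, completing the proof.

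The argument is short and the only conceptual step is to recognize that the property $C' \subseteq P_{i-1}$ (rather than just $C \subseteq P_{i-1}$) is exactly what forces every neighbor of $C$ in $\Gamma(\hC)$ to be outside $P_{i-1}$. There is no real obstacle beyond carefully tracking the distinction between the ambient graph $G_\ell$ and the current subgraph $(V, E_{G_\ell}[P_{i-1}])$, and invoking Lemma~\ref{lem:aug-func-minimal} with the precondition $P_{i-1} \supseteq X_{\ell - 1}$ that holds by construction.
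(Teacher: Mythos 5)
Your proof is correct and takes essentially the same route as the paper: both invoke Lemma~\ref{lem:aug-func-minimal} to get $C' \subseteq P_{i-1}$ and conclude that $\Gamma(\hC)$ avoids $P_{i-1}$. The paper states the implication in one line, implicitly using the fact that $\hC$ is violated (so $\delta_{G_\ell[P_{i-1}]}(\hC) = \emptyset$); you spell this step out explicitly, which is a small but genuine clarification rather than a different argument.
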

\begin{proof}
  Each biset $\hC \in \mC_i$ is a minimal violated biset with respect
  to the graph $G_{\ell}[P_{i-1}]$. From Lemma~\ref{lem:aug-func-minimal} it follows
  that $C' \subseteq P_{i-1}$ and hence $\Gamma(\hC)$ can only contain
  vertices in $V \setminus P_{i-1}$.
\end{proof}

The lemma below shows that the algorithm maintains
dual feasibility with respect to the primal solution $P$.

\begin{lemma} \label{lem:complementary-slackness} The dual solution
  $y$ constructed by the primal-dual algorithm satisfies the
  primal complementary slackness conditions. More precisely, for each
  $v \in P$, $\sum_{\hS: v \in \Gamma(\hS)} y(\hS) = w(v)$.
\end{lemma}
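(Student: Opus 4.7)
The plan is to verify the claimed equality for each $v \in P$ by tracking exactly when $v$'s dual constraint can change and showing it becomes tight at the moment $v$ enters $P$ and never loosens thereafter.

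First, I would split the vertices of $P$ into two groups: those in $P_0$ (zero-weight vertices included at initialization) and those added in some iteration $i \geq 1$ of the \textbf{while} loop. For $v \in P_0$ the claim amounts to showing $\sum_{\hS : v \in \Gamma(\hS)} y(\hS) = 0$; since dual variables are initialized to zero and only raised when the algorithm increases $y(\hC)$ for some $\hC \in \mC_j$ in iteration $j$, it suffices to show that no such $\hC$ has $v$ in $\Gamma(\hC)$. That is immediate from Proposition~\ref{prop:no-neighbor}: in iteration $j$ we have $P_{j-1} \cap \Gamma(\hC) = \emptyset$ for every $\hC \in \mC_j$, and $v \in P_0 \subseteq P_{j-1}$ for every $j \geq 1$.

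Next, consider $v$ added to $P$ in iteration $i$. By the rule of the \textbf{while} loop, the algorithm raises the dual variables $y(\hC)$ for $\hC \in \mC_i$ uniformly until a dual constraint becomes tight; $v$ enters $P$ precisely because its constraint reaches equality. Thus at the end of iteration $i$ we have $\sum_{\hS : v \in \Gamma(\hS)} y(\hS) = w(v)$. It remains to argue that no further increase can happen to any $y(\hS)$ with $v \in \Gamma(\hS)$ during iterations $j > i$. Since $v \in P_i \subseteq P_{j-1}$, Proposition~\ref{prop:no-neighbor} applied to iteration $j$ yields $v \notin \Gamma(\hC)$ for every $\hC \in \mC_j$. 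Hence only dual variables whose bisets do not contain $v$ in their neighborhood are increased in subsequent iterations, and the left-hand side of $v$'s dual constraint remains frozen at $w(v)$.

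The only step that requires care is the invariant that $P_{j-1} \cap \Gamma(\hC) = \emptyset$ for every $\hC \in \mC_j$, and this has already been established in Proposition~\ref{prop:no-neighbor} as a consequence of Lemma~\ref{lem:aug-func-minimal}, which tells us that the outer part $C'$ of a minimal violated biset is contained in the current vertex set $P_{j-1}$, forcing $\Gamma(\hC) \subseteq V \setminus P_{j-1}$. Combining the two cases yields the desired complementary slackness identity for every $v \in P$, and in particular for the vertices retained by the reverse-delete stage in $Q \subseteq P$.
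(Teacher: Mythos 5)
Your proof is correct and is essentially the same argument as the paper's, which proceeds by induction on the iteration count; you simply unroll that induction into a per-vertex history argument, but both rest on the same two facts (the dual raise stops exactly when $v$'s constraint becomes tight, and Proposition~\ref{prop:no-neighbor} freezes that constraint afterward).
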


\begin{proof}
  We prove the lemma by induction on the number of iterations of the
  first while loop. Initially, $y$ is zero and $P_0$ consists of all
  zero-weight vertices. Thus the complementary slackness conditions
  are satisfied at the beginning of the algorithm. Now consider
  iteration $i > 0$. From Proposition~\ref{prop:no-neighbor}, no vertex
  in $P_{i-1}$ is adjacent, in the graph $G_{\ell}$, to any biset in
  $\mC_i$. Thus increasing the dual variables corresponding to bisets
  of $\mC_i$ do not violate the tightness of vertices in $P_{i-1}$.
  And the
  only vertex added to $P_i$ in iteration $i$ is the one that becomes
  tight with respect to the dual increase in iteration $i$.
  Thus, at the end of iteration $i$, the required condition holds for
  $P_i$.
\end{proof}

\begin{lemma} \label{lem:primal-dual-poly-time} The primal-dual
  algorithm returns a feasible cover for $h_\ell$ in $G_{\ell}$
  when all vertices of $X_{\ell-1}$ are included in $P_0$.
\end{lemma}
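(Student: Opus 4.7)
The plan is to establish two things: (i) the algorithm runs in polynomial time, and (ii) the set $Q$ returned induces a feasible cover $(V, E[Q])$ of $h_\ell$. These will be handled separately, with the feasibility argument being the more substantive of the two.

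For polynomial running time, the key observation is that the primary \textbf{while} loop adds exactly one new vertex to $P$ per iteration (the vertex whose dual constraint first becomes tight, with ties broken arbitrarily). Hence at most $|V|$ iterations occur. Inside each iteration, the call to $\algViolated{}$ runs in polynomial time via Menger's theorem for element-connectivity and standard maxflow techniques (as referenced in \cite{JainMVW02,FleischerJW06}), and computing the minimum uniform dual increase is a straightforward minimization over the not-yet-tight vertices $v \notin P_{\iter-1}$. The reverse-delete loop iterates at most $|Q| \le |V|$ times, each time invoking $\algViolated{}$ once, and is therefore also polynomial.

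For feasibility, the crucial implication to establish is: if $\algViolated{}(G, h_\ell, P) = \emptyset$, then $(V, E[P])$ has no violated bisets at all (not merely no minimal ones). This follows from a finiteness/minimality observation: the collection of bisets over the finite set $V$ is finite, so if some biset were violated, then within the non-empty family of violated bisets there would exist a $\subseteq$-minimal element, which by definition would appear in the output of $\algViolated{}$. Consequently, once the main while loop exits with $\mC_i = \emptyset$, the graph $(V, E[P])$ covers $h_\ell$. The same minimality argument applies to each step of the reverse-delete loop: the test $\algViolated{}(G, h_\ell, Q \setminus \{v\}) = \emptyset$ certifies that deleting $v$ leaves a graph with no violated bisets, so feasibility is preserved throughout. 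Hence the returned $(V, E[Q])$ is a feasible cover of $h_\ell$.

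The main obstacle, though technically modest, is the feasibility inference above; once we accept that minimality in a finite biset lattice gives us the required equivalence, the polynomial-time claim reduces to a standard accounting of the loops and the cost of $\algViolated{}$. The deeper difficulty in the overall section is not proving feasibility but analyzing the cost of the returned solution against the dual; that belongs to the subsequent lemmas rather than this one.
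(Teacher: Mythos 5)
Your proof correctly handles the reverse-delete step and the (trivial) observation that an absence of minimal violated bisets implies an absence of violated bisets. But the argument takes for granted the one thing that is actually at stake: that the \textbf{while} loop's dual-increase step is well-defined and makes progress. You write that each iteration adds ``the vertex whose dual constraint first becomes tight,'' but this silently assumes that (a) some not-yet-selected vertex can become tight, and (b) no constraint for a vertex \emph{already} in $P_{i-1}$ becomes \emph{violated} by the increase. Recall that by Lemma~\ref{lem:complementary-slackness}, every vertex of $P_{i-1}$ is already tight, so if some $\hC \in \mC_i$ had $\Gamma(\hC) \cap P_{i-1} \neq \emptyset$, any positive dual increase would break dual feasibility and the algorithm would have no vertex to add. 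This is exactly what goes wrong in the paper's integrality-gap example, and it is why the paper stresses that the algorithm is \emph{not} guaranteed to succeed for arbitrary $\set{0,1}$-biuncrossable functions.

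The missing ingredient is Proposition~\ref{prop:no-neighbor}, which in turn rests on Lemma~\ref{lem:aug-func-minimal} (a structural property specific to $h_\ell$: the outer part $C'$ of a minimal violated biset is contained in $P_{i-1}$, hence $\Gamma(\hC) \subseteq V \setminus P_{i-1}$). Combined with Lemma~\ref{lem:complementary-slackness}, this shows that the duals can be raised without violating any tight constraint, so some new vertex outside $P_{i-1}$ is added each iteration and the loop terminates with $\mC_i = \emptyset$. That is the substantive step the paper's proof invokes and your argument omits; the ``finiteness/minimality observation'' you identify as the ``main obstacle'' is comparatively immediate.
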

\begin{proof}
  Proposition~\ref{prop:no-neighbor} and
  Lemma~\ref{lem:complementary-slackness} show that the algorithm
  inductively maintains the property that $\mC_i$ are the minimal
  violated sets with respect to the graph $G_{\ell}[P_{i-1}]$ and hence when
  the first while loop terminates we have the property that there are
  no minimal violated sets with respect to $G_{\ell}[P]$, and thus $G_{\ell}[P]$
  is a feasible cover. The reverse-delete step explicitly ensures that
  $G_{\ell}[Q]$ is a feasible cover.
\end{proof}

\subsection{Analysis of the Approximation Ratio}
\label{subsec:proof-main-thm}
We now analyze the approximation ratio. We will assume that the
function $h$ comes from an augmentation problem and that the algorithm
is well-defined and returns a feasible cover.
The basic lemma that underlies the primal-dual analysis is similar to
that of the edge-weighted case and relies on the uniform-growth
property of the dual variables.

\begin{lemma} \label{lem:elem-primal-dual} Let $Q$ be the set of
  vertices output by the primal-dual algorithm. Suppose that there
  exists a fixed value $\gamma$ such that, for each iteration $i$ of
  the primal-dual algorithm,
  $\sum_{\hC \in \mC_i} \card{Q \cap \Gamma(\hC)} \leq \gamma
  \card{\mC_i}$. Then $w(Q)$ is at most $\gamma$ times the value of an
  optimal solution of $\prob{Primal-LP}(G,\mP,h)$.
\end{lemma}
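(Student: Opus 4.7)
The plan is to carry out the standard primal-dual accounting, using Lemma~\ref{lemma:complementary-slackness} (complementary slackness on $Q \subseteq P$) together with the uniform-growth structure of the dual, and then invoke weak LP duality. The hypothesis $\sum_{\hC \in \mC_i} |Q \cap \Gamma(\hC)| \le \gamma\,|\mC_i|$ is exactly what lets us charge the tight primal cost back against the dual objective with loss $\gamma$.

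First, I would expand the total weight using complementary slackness. Since $Q \subseteq P$, Lemma~\ref{lemma:complementary-slackness} gives $w(v) = \sum_{\hS : v \in \Gamma(\hS)} y(\hS)$ for every $v \in Q$. Summing over $v \in Q$ and swapping the order of summation yields
\[
w(Q) \;=\; \sum_{v \in Q} \sum_{\hS : v \in \Gamma(\hS)} y(\hS) \;=\; \sum_{\hS \in \mP} y(\hS)\,\bigl|Q \cap \Gamma(\hS)\bigr|.
\]

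Next, I would decompose each dual value according to the iteration in which it was raised. Let $\eps_i \ge 0$ be the uniform amount by which the variables $\{y(\hC)\}_{\hC \in \mC_i}$ are increased in iteration $i$ of the first while loop. Then $y(\hS) = \sum_{i : \hS \in \mC_i} \eps_i$, and the remaining $y$-values are $0$. Substituting and swapping sums,
\[
w(Q) \;=\; \sum_{\hS \in \mP} \bigl|Q \cap \Gamma(\hS)\bigr| \sum_{i : \hS \in \mC_i} \eps_i \;=\; \sum_{i} \eps_i \sum_{\hC \in \mC_i} \bigl|Q \cap \Gamma(\hC)\bigr|.
\]
Applying the hypothesis iteration by iteration,
\[
w(Q) \;\le\; \gamma \sum_i \eps_i\,|\mC_i| \;=\; \gamma \sum_i \sum_{\hC \in \mC_i} \eps_i \;=\; \gamma \sum_{\hS \in \mP} y(\hS).
\]

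Finally, I would observe that only bisets with $h(\hS)=1$ ever enter any $\mC_i$ (since $\mC_i$ consists of minimal \emph{violated} bisets of the $\{0,1\}$-valued function $h$), so $y(\hS) > 0$ implies $h(\hS)=1$ and hence $\sum_{\hS} y(\hS) = \sum_{\hS} h(\hS)\,y(\hS)$, which is the \prob{Dual-LP} objective. The dual is feasible throughout the algorithm (this is the content of Lemma~\ref{lemma:complementary-slackness}, which shows the tightness constraints are never violated, combined with the fact that growth stops once a vertex saturates), so by weak LP duality $\sum_{\hS} h(\hS)\, y(\hS) \le \text{OPT}(\prob{Primal-LP})$. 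Combining,
\[
w(Q) \;\le\; \gamma \sum_{\hS \in \mP} h(\hS)\, y(\hS) \;\le\; \gamma \cdot \text{OPT}(\prob{Primal-LP}),
\]
as required. There is no real obstacle here; the only subtle point to verify explicitly is that dual feasibility is preserved throughout — tight vertices are absorbed into $P$, and Proposition~\ref{prop:no-neighbor} guarantees no vertex in $P_{i-1}$ lies in $\Gamma(\hC)$ for any $\hC \in \mC_i$, so raising those duals cannot break any already-tight constraint.
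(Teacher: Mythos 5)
Your proof is correct and matches the paper's argument in all essentials: both pass from $w(Q)$ to $\sum_{\hS} y(\hS)\,|Q \cap \Gamma(\hS)|$ via complementary slackness, decompose the dual by iteration (you by direct summation, the paper by induction — a cosmetic difference), apply the per-iteration hypothesis, and close with weak duality. The paper's additional step of replacing $Q \cap \Gamma(\hC)$ by $Q_i \cap \Gamma(\hC)$ via Proposition~\ref{prop:no-neighbor} is not logically needed given the lemma's hypothesis is stated in terms of $Q$, so your version is in fact slightly more direct.
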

The content of the preceding lemma is the following. Consider the minimal
violated bisets in iteration $i$, $\mC_i$. Let
$Q_i = Q \setminus P_{i-1}$. From the reverse-delete step we can see
that $Q_i$ forms a {\em node-minimal} set that together with
$P_{i - 1}$ covers $h$. We are interested in $\gamma$, the ``average
degree''\footnote{Here we are abusing the term slightly and we refer
  to the ratio
  $\sum_{\hC \in \mC_i} \card{Q_i \cap \Gamma(\hC)} / \card{\mC_i}$ as
  the average degree of the bisets in $\mC_i$.} of the bisets in
$\mC_i$, with respect to nodes in $Q_i$. In general graphs, $\gamma$
can be $\Omega(n)$ in the worst case and we will not be able to prove
any reasonable guarantee on the performance of the primal-dual
algorithm.  However, planar graphs and more generally the graphs from
minor-closed families are sparse.  Thus we can bound the average
degree if we upper bound the number of nodes in $Q_i$ that are
neighbors of a biset in $\mC_i$. The following proof follows a
standard template in the context of primal-dual analysis but
we give it here for the sake of completeness.

\begin{proofof}{Lemma~\ref{lem:elem-primal-dual}}
	By Lemma~\ref{lem:complementary-slackness}, $y$ satisfies the
	primal complementary slackness conditions. Therefore we have
	\begin{align*}
		\sum_{v \in Q} w(v) = \sum_{v \in Q} \sum_{\hS\in \mP: v \in
		\Gamma(\hS)} y(\hS) = \sum_{\hS \in \mP} y(\hS) \card{Q \cap
		\Gamma(\hS)}.
	\end{align*}
	Note that, if we can show that $\sum_{\hS \in \mP} y(\hS) \card{Q
	\cap \Gamma(\hS)} \leq \gamma \sum_{\hS\in \mP} y(\hS) h(\hS)$,
	it will follow that we have a $\gamma$-approximation: since $y$
	is feasible, $\sum_{\hS\in \mP} y(\hS) h(\hS)$ is a lower bound on
	the fractional optimum, which in turn is a lower bound on the
	integral optimum.

	We show by induction on the number of iterations of the
	primal-dual algorithm that
	\begin{align*}
	\sum_{\hS\in \mP} y(\hS) \card{Q \cap \Gamma(\hS)} \leq
		\gamma \sum_{\hS\in\mP} y(\hS) h(\hS).
	\end{align*}
	Note that $y(\hS) > 0$ only if $h(\hS) = 1$. Therefore
	$\sum_{\hS\in\mP} y(\hS) h(\hS) = \sum_{\hS\in\sC} y(\hS)$ where $\sC$ is the collection of 
	violated bisets with respect to $G$. 
	In the primal-dual algorithm, we only increase the dual values of violated bistes and thus it
	suffices to prove that
	\begin{align*}
	\sum_{\hS \in \sC} y(\hS) \card{Q \cap \Gamma(\hS)} \leq
		\gamma \sum_{\hS \in \sC} y(\hS).
	\end{align*}
	Initially, all dual variables $y(\hS)$ are zero and therefore the
	inequality holds. Now consider iteration $i$ of the primal-dual
	algorithm. Recall that $P_{i - 1}$ is the set of all vertices
	selected in the first $i - 1$ iterations of the primal-dual
	algorithm and $\mC_i$ is the set of all minimal violated bisets
	with respect to $G_{\ell}[P_{i - 1}]$.

	Let $\epsilon$ denote the amount by which we increased $y(\hS)$
	for $\hS \in \mC_i$ in iteration $i$. The left-hand side
	increases by $\sum_{\hC \in \mC_i} \epsilon \card{Q \cap
	\Gamma(\hC)}$, and the right-hand side increases by $\gamma
	\epsilon \card{\mC_i}$.  Therefore it suffices to show that
	\begin{align*}
	\sum_{\hC \in \mC_i} \card{Q \cap \Gamma(\hC)} \leq \gamma
		\card{\mC_i}.
	\end{align*}
	Recall that $Q_i = Q \setminus P_{i - 1}$. By
	Proposition~\ref{prop:no-neighbor}, for each $\hC
	\in \mC_i$, $\Gamma(\hC) \cap P_{i - 1}$ is empty and thus
	$\Gamma(\hC) \cap Q = \Gamma(\hC) \cap Q_i$.  Therefore we
	can rewrite the inequality above as:
	\begin{align*}
	\sum_{\hC \in \mC_i} \card{Q_i \cap \Gamma(\hC)} \leq
		\gamma \card{\mC_i},
	\end{align*}
\noindent
which holds by the assumption in the statement of the lemma.
\end{proofof}

The key technical contribution of the paper is to bound $\gamma$
and it is captured by the following theorem whose
proof is in Section~\ref{sec:main-thm}.

\begin{theorem}
  \label{thm:main-counting}
  Consider phase $\ell$ of the augmentation algorithm.
  For $X \supset X_{i-1}$ let $H_X = (V, E[X])$ be a subgraph
  of $G_\ell$. Let $\mC$ be the collection of minimal violated bisets
  of $h_\ell$ with respect to $H_X$. Suppose $Q \subseteq V \setminus X$ is a
  node-minimal set such that $G_\ell[X \cup Q]$ is a feasible cover for $h_\ell$.
  Then $\card{Q \cap (\bigcup_{\hC \in \mC} \Gamma(\hC))} \leq 4 \card{\mC}$.
\end{theorem}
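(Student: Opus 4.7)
The plan is to follow the Demaine--Hajiaghayi--Klein template, adapted to the biset setting. Let $Q' := Q \cap \bigcup_{\hC\in\mC}\Gamma(\hC)$ denote the set whose size I must bound. The idea is to build an auxiliary bipartite graph on $\mC \cup Q'$ that is a minor of $G$, exploit the sparsity afforded by the proper minor-closed family $\mG$, and combine this sparsity with a degree lower bound on $Q'$ coming from the node-minimality of $Q$.

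First, by \lemref{aug-func-minimal}, each inner part $C$ of $\hC\in\mC$ induces a connected subgraph of $G$ and satisfies $C' \subseteq X$, and by \lemref{minimal-violated-crossing} the inner parts $\{C:\hC\in\mC\}$ are pairwise disjoint. Hence contracting each $C$ in $G$ to a single vertex $c_{\hC}$ yields a well-defined minor $G^{\star}$ still in $\mG$. Since $Q' \cap X = \emptyset$ while every $C \subseteq X$, the vertices of $Q'$ are untouched by the contraction. I form the bipartite graph $B$ on vertex classes $\{c_{\hC} : \hC \in \mC\}$ and $Q'$ by placing, for each pair $(\hC,v)$ with $v \in \Gamma(\hC)$, an edge $\{c_{\hC},v\}$. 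Then $B$ is a subgraph of $G^{\star}$, hence belongs to $\mG$, so it admits an edge bound $|E(B)| \le \alpha(|\mC|+|Q'|)$ for a constant $\alpha=\alpha(\mG)$ (with $\alpha=2$ in the planar case, via the bipartite Euler bound $|E|\le 2|V|-4$). Since each edge of $B$ has exactly one endpoint in $Q'$, $\sum_{v\in Q'}\deg_B(v) = |E(B)|$.

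The heart of the argument, and the main obstacle, is to show a sufficient average degree of $Q'$ in $B$: if the average degree is at least $k$ then $k|Q'| \le \alpha(|\mC|+|Q'|)$, giving $|Q'|\le \frac{\alpha}{k-\alpha}|\mC|$; for planar $\mG$, taking $k=5/2$ and $\alpha=2$ delivers $|Q'|\le 4|\mC|$. The mechanism producing this degree bound is node-minimality of $Q$: for each $v \in Q'$ there is a witness biset $\hS_v \in \mPE$ violated in $G_\ell[(X\cup Q)\setminus\{v\}]$ but covered in $G_\ell[X\cup Q]$, so every edge of $G_\ell[X\cup Q]$ crossing $\hS_v$ is incident to $v$. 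Since $H_X \subseteq G_\ell[(X\cup Q)\setminus\{v\}]$, the biset $\hS_v$ is also violated with respect to $H_X$, so by \lemref{minimal-violated-crossing} and biuncrossability of $h_\ell$, every $\hC \in \mC$ is non-overlapping with $\hS_v$ and at least one $\hC \in \mC$ satisfies $\hC \subseteq \hS_v$. The plan is to case-analyze where the essential edges of $v$ land---inside an inner part $C$ of some $\hC \subseteq \hS_v$, inside $S_v\setminus\bigcup_{\hC\subseteq\hS_v}C$, or on $\bd(\hS_v)\subseteq X$ (the latter guaranteed by \lemref{aug-func-bd})---to exhibit enough distinct bisets of $\mC$ to which $v$ must be adjacent in $B$. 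The analogous argument for an arbitrary proper minor-closed family only changes the final constant, since the edge bound on $B$ is the one step that depends on the family and the degree analysis for $Q'$ carries over unchanged.
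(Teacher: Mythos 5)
Your proposal conflates two distinct steps of the paper's argument and, in so doing, builds on a false premise. Theorem~\ref{theo:main-counting} as stated is a \emph{purely combinatorial} bound: the proof in the paper never uses planarity or sparsity of the minor-closed family. Planarity enters only later, in Lemma~\ref{lemma:elem-counting-lemma}, where the bound $|Q'|\le 4|\mC|$ from Theorem~\ref{theo:main-counting} is combined with the bipartite-minor construction and the Euler bound to control the \emph{degree sum} $\sum_{\hC\in\mC_i}|Q_i\cap\Gamma(\hC)|$. You have essentially reproduced the mechanics of Lemma~\ref{lemma:elem-counting-lemma} (contract inner parts, form the bipartite minor $B$, invoke sparsity) and then tried to use it to derive Theorem~\ref{theo:main-counting} itself, which inverts the logical dependence. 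The paper's proof of Theorem~\ref{theo:main-counting} instead passes to the residual function $h'$, classifies edges of $K[Q']$ as red (incident to a biset in $\mC$) or blue, extracts an edge-minimal sub-cover $F$ keeping all red edges, partitions critical vertices into ``regular'' (incident to a blue edge of $F$) and ``special'', and bounds each class by $2|\mC|$ via laminar witness families in the style of Williamson~et~al.\ and Jain~et~al.

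The concrete gap in your argument is the assertion, left as ``the plan,'' that node-minimality forces the vertices of $Q'$ to have average degree at least $5/2$ in $B$. This is false. Node-minimality only guarantees that each $v\in Q$ has a witness biset $\hS_v$ that becomes violated when $v$ is removed; it does not force $v$ to be adjacent to more than one $\hC\in\mC$. Already for $|\mC|=1$ and a single essential vertex $v\in Q'$ adjacent to that one biset and to an already-paid-for vertex of $X$, the degree of $v$ in $B$ is exactly $1<5/2$, and your inequality $k|Q'|\le\alpha(|\mC|+|Q'|)$ degenerates (for $k\le\alpha$ it gives no bound at all). More generally, in a node-minimal cover most vertices of $Q'$ can legitimately serve as connectors incident to a single biset, so no constant average-degree lower bound above $1$ holds. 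Without that degree bound, the sparsity inequality cannot isolate $|Q'|$ on one side, and the approach does not recover the theorem. You would need the genuinely different combinatorial machinery (witness bisets, laminar uncrossing, the ownership/tree argument) that the paper develops in Section~\ref{sec:main-thm}.
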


Exploiting the sparsity of planar graphs and more generally minor-closed
families of graphs together with the preceding theorem, we obtain the
following lemma.

\begin{lemma} \label{lem:elem-counting-lemma}
  Consider an instance of \prob{Elem-SNDP} over a graph $G$ that
  belongs to a proper minor-closed family of graphs ${\cal G}$.
  Suppose we run the primal-dual algorithm in phase $\ell$
  to cover $h=h_\ell$. Let
  $Q_i = Q \setminus P_{i - 1}$. Then,
  $\sum_{\hC \in \mC_i} \card{Q_i \cap \Gamma(\hC)} \leq c
  \card{\mC_i}$, where $c$ is a constant that depends only on the
  family ${\cal G}$.  In particular, if $G$ is a planar graph then
  $\sum_{\hC \in \mC_i} \card{Q_i \cap \Gamma(\hC)} \leq 10
  \card{\mC_i}$.
\end{lemma}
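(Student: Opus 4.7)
The plan is to combine Theorem~\ref{thm:main-counting} with the linear edge bound for simple graphs drawn from proper minor-closed families. Fix iteration $i$ and set $m := |\mC_i|$. Applying Theorem~\ref{thm:main-counting} with $X = P_{i-1}$ (so that the collection $\mC$ produced by the theorem coincides with $\mC_i$) and with $Q$ taken to be (a node-minimal refinement of) $Q_i$ yields
\[
  |N| \leq 4m, \qquad \text{where } N := Q_i \cap \bigcup_{\hC \in \mC_i} \Gamma(\hC).
\]
Since by Proposition~\ref{prop:no-neighbor} we have $\Gamma(\hC) \cap P_{i-1} = \emptyset$ for every $\hC \in \mC_i$, this same $N$ equals $Q \cap \bigcup_{\hC \in \mC_i} \Gamma(\hC)$. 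What remains is to bound the incidence count $\sum_{\hC \in \mC_i} |Q_i \cap \Gamma(\hC)|$, which is exactly $|E(\sB)|$ for the bipartite simple graph $\sB$ whose vertex classes are $\mC_i$ and $N$, with $\hC$ and $v$ adjacent iff $v \in \Gamma(\hC)$.

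The core observation is that $\sB$ is a minor of $G$. By Lemma~\ref{lem:aug-func-minimal}, for each $\hC \in \mC_i$ the induced subgraph $G[C]$ is connected, and by Lemma~\ref{lem:minimal-violated-crossing} the inner parts $\{C : \hC \in \mC_i\}$ are pairwise disjoint. One may therefore, inside $G$, contract each $G[C]$ simultaneously to a single vertex $c_\hC$, then delete every vertex outside $\{c_\hC\}_{\hC \in \mC_i} \cup N$ and collapse parallel edges. Each edge of $\sB$ between $c_\hC$ and some $v \in N$ is witnessed by an original edge $uv \in E(G_\ell) \subseteq E(G)$ with $u \in C$ and $v \in V \setminus C'$, so the resulting simple minor contains $\sB$ as a spanning subgraph.

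Since $G \in \mG$ and $\mG$ is closed under minors, $\sB \in \mG$. Classical results of Mader, Kostochka, and Thomason give $|E(H)| \leq c'(\mG)\,|V(H)|$ for every simple $H \in \mG$. Together with $|V(\sB)| = m + |N| \leq 5m$ this yields $|E(\sB)| \leq 5 c'(\mG)\, m = c\, m$, proving the first claim of the lemma. When $G$ is planar, $\sB$ is a planar bipartite simple graph, so the sharper bound $|E(\sB)| \leq 2|V(\sB)| - 4 \leq 2(5m) - 4 \leq 10 m$ applies, giving the explicit constant $10$. The only genuinely hard step of the overall argument is Theorem~\ref{thm:main-counting}; once that $4$-to-$1$ vertex bound is in hand, the present lemma is essentially a sparsity computation on the ``biset-to-neighbor'' bipartite minor, whose validity rests solely on the connectivity and disjointness properties of minimal violated bisets recorded in Lemmas~\ref{lem:aug-func-minimal} and~\ref{lem:minimal-violated-crossing}.
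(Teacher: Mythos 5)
Your proof is correct and follows essentially the same route as the paper: apply Theorem~\ref{thm:main-counting} with $X = P_{i-1}$ and $Q_i = Q \setminus P_{i-1}$ to bound $|N|$, then contract each connected inner part $G[C]$ (using Lemmas~\ref{lem:aug-func-minimal} and~\ref{lem:minimal-violated-crossing}), discard irrelevant vertices, and count edges of the resulting bipartite minor via sparsity of minor-closed families (and the $|E|\le 2|V|-4$ bound in the planar case). The paper's proof is identical in structure and substance; the only cosmetic difference is that you take the bipartite graph $\sB$ as a spanning subgraph of the minor rather than explicitly deleting $N$--$N$ edges as the paper does, and you hedge with ``a node-minimal refinement of $Q_i$'' where in fact the reverse-delete guarantees $Q_i$ is already node-minimal relative to $P_{i-1}$.
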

\begin{proof}
  Let $N_i = Q_i \cap (\bigcup_{\hC \in \mC_i} \Gamma(\hC))$.
  By Theorem~\ref{thm:main-counting},
  $\card{N_i} \leq 4 \card{\mC_i}$. Let $\mC_i =
  \{\hC_1,\hC_2,\ldots,\hC_r\}$. Since $\mC_i$ is a collection of
  minimal violated bisets of biuncrossable function $h_\ell$
  they do not overlap which means that $C_1,C_2,\ldots,C_r$ are
  pairwise disjoint sets. We also have the property that $N_i \cap C_j
  = \emptyset$ for $1 \le j \le r$.
   Further, from Lemma~\ref{lem:aug-func-minimal}, $G[C_j]$ is connected for
  $1 \le j \le r$.  Next, we construct a minor $K$ of $G$ as follows. Let $V' =
  (\bigcup_{\hC \in \mC_i} C ) \cup N_i$. Note that we do not
  include the boundary vertices of the bisets of $\mC_i$ in
  $V'$. We start with $K = G[V']$. For each biset $\hC_j \in
  \mC_i$, we shrink the set $C_j$ to a single vertex $v_j$. We also
  remove parallel edges in order to get a simple graph. The
  resulting graph is indeed a minor of $G$ since (i) $G[C_j]$ is
  connected for each $j$, and (ii) $C_1,C_2,\ldots,C_r,N_i$ are
  pairwise disjoint. The total number of nodes in $K$ is
  $\card{N_i} + \card{\mC_i} \le 5 \card{\mC_i}$. In $K$ we also
  remove edges between two nodes of $N_i$ which results in a bipartite
  graph with $N_i$ on one side, and the vertices $v_1,v_2,\ldots,v_r$
  corresponding to $C_1,\ldots,C_r$ on the other side.
  Note that $K$ is still a minor of $G$.

  Recall that we have $Q_i \cap \Gamma(\hC) \subseteq N_i$ for each
  $\hC \in \mC_i$. Therefore
  $\sum_{\hC \in \mC_i} \card{Q_i \cap \Gamma(\hC)}$ is equal to the
  number of edges in the bipartite graph $K$. Since $K$ is from a
  minor-closed family ${\cal G}$, from \cite{Kostochka84} it follows
  that there is a constant $c'$ that depends only on the family such
  that $|E(K)| \le c' |V(K)| \le 5c' |\mC_i|$.
  Suppose $G$ is a planar graph. Then $K$ is a bipartite planar
  graph and in this case it is well-known that $\card{E(K)} \le 2
  \card{V(K)} \le 10 \card{\mC_i}$.
\end{proof}

\subsection{Proof of Theorem~\ref{thm:main-counting}}
\label{sec:main-thm}
In this section, we prove Theorem~\ref{thm:main-counting} using a
counting argument which is a generalization of the counting argument
of~\cite{ChekuriEV12}.
We use $H$ in place of $H_X$ to simplify notation.
We use $K$ to denote the graph $(V(G), E(G) \setminus E(H))$.
Consider the biset function $h'$ where $h'(\hS) = 1$ iff
$h_\ell(\hS) = 1$ and $\delta_{H}(\hS) = \emptyset$. By
Proposition~\ref{prop:difference-bisup} and
Lemma~\ref{lem:delta-bisubmod}, $h'$ is a $\{0,1\}$-biuncrossable
function. Note that $h'$ is the residual function of $h_\ell$
in the graph $H$. Let $Q' = Q \cup X$. Recall that $Q$ a is
node-minimal set such that $G_\ell[Q']$ covers $h_{\ell}$.
Equivalently this means that $Q$ is a node-minimal set such
that $K[Q \cup X]$ covers $h'$.

The main idea in the proof is to pick a subset $M$ of the edges of
$K[Q']$ such that $M$ is an \emph{edge-minimal} feasible cover for
$h'$. An edge-minimal set allows us to use an approach that was
introduced by Williamson \etal \cite{WilliamsonGMV95} for the
edge-weighted SNDP problem. More precisely, for each edge $e \in M$,
we can pick a \emph{witness biset} that is a violated biset of $h'$
such that $e$ is the only edge of $M$ that is leaving the
biset. Moreover, we can pick a laminar family of witness bisets for
all edges in $M$ that allows us to upper bound the number of edges in
$M$ incident to the components of $\mC$ in terms of $\card{\mC}$.

Since $K[Q']$ is a node-minimal cover of $h'$ and not an edge-minimal
cover, it is possible that there is a vertex $u \in Q$ connected to a
component of $\mC$ in $K[Q']$ but none of the edges connecting $u$ to
components of $\mC$ are in $M$.  Thus we cannot use the family of
witness bisets of an edge-minimal cover of $h'$ to bound the number of
these vertices. We address this issue by counting these vertices
separately using a witness family of bisets for a different set of
\emph{non-redundant} edges.

We refer to the vertices in
$Q \cap \big(\bigcup_{\hC \in\mC} \Gamma_G(\hC)\big)$ as
\textbf{critical} vertices; these are the vertices of $Q$ that are
adjacent to at least one biset in $\mC$ and the goal is to show
that there are at most $4\card{\mC}$ of them.  We refer to the edges in
$\cup_{\hC \in \mC} \delta_{K}(\hC)$ as \textbf{red edges}, and all
other edges of $K$ as \textbf{blue edges}. Every critical vertex
is incident to at least one red edge.
We define two subsets of edges $F$ and $F'$ below.

We start with $F = E(K)$ and we remove some of the edges as follows.
We consider the \emph{blue edges} in an arbitrary order. Let $e$ be
the current edge. If $F \setminus \set{e}$ is a feasible solution for
$h'$, we remove $e$ from $F$. This procedure gives us a set of edges
in which each blue edge is necessary, in the sense that removing any
blue edge from $F$ will make it an infeasible cover for $h'$. As we
will see shortly, we can use the blue edges in $F$ to upper bound the
number of critical vertices that are incident to at least one blue
edge of $F$.  We refer to critical vertices that are incident to a
blue edge of $F$ as \textbf{regular} vertices, and we refer to all
other critical vertices as \textbf{special} vertices.

In order to count the special vertices, we pick a subset $F'$ of $F$
as follows. We start with $F' = F$ and consider the \emph{red edges}
of $F'$ in some arbitrary order. Let $e$ be the current edge.  If
$F' \setminus \set{e}$ is a feasible cover of $h'$, we remove $e$ from
$F'$. We can use the red edges in $F'$ to upper bound the remaining
critical vertices. Since $Q$ is a node-minimal cover for $h'$, each
special vertex is incident to at least one red edge of $F'$.

We consider the regular and special vertices
separately. Theorem~\ref{thm:main-counting} follows from the following
lemmas.

\begin{lemma} \label{lem:pair-regular}
  The number of regular vertices is at most $2\card{\mC}$.
\end{lemma}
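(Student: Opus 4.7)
The plan is to adapt the Williamson--Goemans--Mihail--Vazirani primal-dual analysis for $\{0,1\}$-biuncrossable functions to this node-weighted setting using a laminar family of witness bisets. First, I would observe that since $F$ is blue-edge-minimal, for each blue edge $e \in F$ there exists a \emph{witness biset} $\hW_e$ with $h'(\hW_e) = 1$ such that $e$ is the unique edge of $F$ crossing $\hW_e$: this is immediate from the fact that $F \setminus \{e\}$ fails to cover $h'$. Using that $h'$ is $\{0,1\}$-biuncrossable (it is $h_\ell - |\delta_H(\cdot)|$ thresholded at $0$, and a check via \ref{prop:difference-bisup} and \ref{lem:delta-bisubmod} shows biuncrossability is preserved in the required form), I would apply the standard uncrossing procedure to replace the witnesses by a laminar family $\mathcal{L} = \{\hW_e : e \text{ blue}\}$ under the biset $\subseteq$ order; whenever two witnesses overlap, either the pair (intersection, union) or the pair (difference, reverse difference) still serves as witnesses by the biuncrossable inequality, so uncrossing terminates with a laminar collection.

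Next, I would bound the number of maximal bisets of $\mathcal{L}$ by $|\mC|$. Since each maximal biset $\hM \in \mathcal{L}$ satisfies $h'(\hM) = 1$, its inner part must contain some minimal violated biset $\hC \in \mC$. By \ref{lem:minimal-violated-crossing}, the inner parts of the bisets in $\mC$ are pairwise disjoint, and by the laminar property the inner parts of distinct maximal bisets of $\mathcal{L}$ are disjoint as well; together these give an injection from maximal bisets of $\mathcal{L}$ into $\mC$, yielding at most $|\mC|$ maximal bisets.

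I would then set up a charging argument on the forest $T$ representing $\mathcal{L}$. For each regular vertex $v$ I would pick a canonical blue edge $e \in F$ incident to $v$ and charge $v$ to $\hW_e$, chosen so that $\hW_e$ is the inclusion-minimal biset of $\mathcal{L}$ with $v \in \Gamma(\hW_e)$. The goal is to show that the total charge on $T$ is at most twice the number of its leaves, hence at most $2|\mC|$. This is the node-weighted analogue of the classical identity that in the edge-weighted case bounds the number of edges in a minimal cover by twice the number of minimal violated bisets: at each node $\hW$ of $T$ I would consider the blue $F$-edges whose witness is $\hW$ itself (and not a descendant) and observe that each such edge contributes one endpoint in $\Gamma(\hW)$; a tree-degree estimate relative to the children of $\hW$ then produces the factor of two.

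The main obstacle is the final charging step: unlike the edge-weighted case, a single regular vertex $v$ may be incident to many blue edges of $F$ whose witnesses sit at unrelated nodes of $T$. To make the 2-factor bound go through, the canonical choice above must be shown to avoid over-charging; the key ingredients will be (i) uniqueness of $e$ as an $F$-edge leaving $\hW_e$, (ii) node-minimality of $Q$, which forces each regular vertex to ``earn'' its presence at the chosen witness, and (iii) the laminar structure, which turns the local count at each node into a telescoping sum over the forest. Modulo this counting lemma, the remaining uncrossing and laminar-family bookkeeping steps are routine.
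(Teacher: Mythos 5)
Your proposal reaches the right scaffolding (blue-edge-minimality gives witness bisets, uncrossing yields a laminar family $\mathcal{L}$ with tree $\mTB$, and the leaves/maximal bisets of $\mathcal{L}$ inject into $\mC$), but it stalls precisely where the real work is. You explicitly write that ``the main obstacle is the final charging step'' and then defer it as a ``counting lemma''; that counting lemma \emph{is} the substance of the lemma, so the proposal as written does not constitute a proof. In particular, your canonical choice (charge $v$ to the inclusion-minimal biset $\hW_e\in\mathcal L$ with $v\in\Gamma(\hW_e)$) has no mechanism to prevent many regular vertices from charging to the same node of $\mTB$, and you give no argument for why the total charge telescopes to twice the number of leaves. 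You also never use the fact that a regular vertex is incident to a \emph{red} edge crossing some $\hC\in\mC$; but this is essential to tie the charge to $\mC$ rather than to $|\mathcal L|$.

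What the paper does instead, and what your plan is missing: it establishes (via Lemma~\ref{lem:aug-func-bd}) that regular vertices are never on the boundary of any violated biset, which makes the ``owns'' relation clean; then it uses Lemma~\ref{lem:pair-non-witness-edge-mapping} to show that the biset $\hS\in\mLB^+$ owning the regular vertex $u$ also owns the minimal violated biset $\hC$ that $u$'s red edge crosses. This lets the paper charge $u$ directly to a biset of $\mC$, not to a witness biset, splitting into two cases according to whether the blue witness $\hS_e$ equals $\hS$ (a ``parent charge'') or sits in the subtree below $\nu_{\hS}$ (a ``subtree charge''). Lemmas~\ref{lem:parent-charge} and~\ref{lem:subtree-charge} then show each $\hC\in\mC$ receives at most one of each, giving $2|\mC|$. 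Your proposal has no analogue of the parent/subtree dichotomy, no use of the red-edge structure, and no argument bounding multiplicities; to be completed it would need to reinvent essentially all of this machinery. As a secondary point, you bound the number of \emph{maximal} bisets of $\mathcal L$ by $|\mC|$, but the quantity your charging scheme would need is the number of \emph{leaves} of $\mTB$ (minimal bisets under $\subseteq$); the same disjoint-inner-parts argument works, but it is a different claim.
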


\begin{lemma} \label{lem:pair-special}
  The number of special vertices is at most $2\card{\mC}$.
\end{lemma}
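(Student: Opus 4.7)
The plan is to reduce the count of special vertices to a standard average-degree bound on edge-minimal covers of $\{0,1\}$-biuncrossable functions. First I would verify that $F'$ is in fact edge-minimal as a feasible cover of $h'$: every blue edge of $F'$ inherits its necessity from $F$ (if removing a blue edge $b$ breaks feasibility of $F$, then removing $b$ from $F' \subseteq F$ does as well), while every red edge of $F'$ is explicitly necessary by the reverse-delete step that produced $F'$ from $F$.

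Next I would argue that each red edge of $F'$ can account for at most one special vertex. A red edge $e = uv$ lies in $\delta_K(\hC)$ for some $\hC \in \mC$, so without loss of generality $u \in C$ and $v \in V \setminus C'$. By Lemma~\ref{lem:aug-func-minimal} applied with $H = H_X$, we have $C' \subseteq X$, hence $u \in X$ and $u$ cannot be a special vertex (special vertices lie in $Q \setminus X$). Combined with the observation that every special vertex $v$ is incident to at least one red edge of $F'$---otherwise $v$ would be isolated in $F'$ (since by definition $v$ has no blue edge of $F$ incident), and then $F' \subseteq E(K[(Q \cup X) \setminus \{v\}])$ would show that $(Q \setminus \{v\}) \cup X$ still covers $h_\ell$, contradicting node-minimality of $Q$---this bounds the number of special vertices by the number of red edges in $F'$.

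The main step is then the inequality $|F' \cap \{\text{red edges}\}| \leq 2|\mC|$. Since each red edge crosses at least one biset of $\mC$, this follows from $\sum_{\hC \in \mC} |\delta_{F'}(\hC)| \leq 2|\mC|$. For the latter I would invoke the classical laminar witness-family argument of Williamson et al.~\cite{WilliamsonGMV95}, adapted to bisets and biuncrossable functions as in~\cite{FleischerJW06, Nutov12}: for each $e \in F'$, pick a witness biset $\hW_e \in \mPE$ with $h'(\hW_e) = 1$ and $\delta_{F'}(\hW_e) = \{e\}$ (such witnesses exist by the edge-minimality of $F'$); uncross the $\hW_e$'s using the biuncrossability of $h'$ to produce a laminar family $\sL$; then, using that $\mC$ consists of pairwise non-overlapping minimal violated bisets (Lemma~\ref{lem:minimal-violated-crossing}), execute the standard counting argument over the laminar tree of $\sL$ to conclude that the average $F'$-degree of an element of $\mC$ is at most $2$. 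The main technical obstacle is carrying out the uncrossing inside the crossing bifamily $\mPE$ while preserving the witness property $\delta_{F'}(\hW_e) = \{e\}$; this should follow from closure of $\mPE$ under union, intersection, and difference (Proposition~\ref{prop:collection-mPE}) together with the submodularity facts in Proposition~\ref{prop:difference-bisup} and Lemma~\ref{lem:delta-bisubmod}.
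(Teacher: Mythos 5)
Your proposal is correct and follows essentially the same route as the paper. Both reduce the count of special vertices to the quantity $\sum_{\hC\in\mC}|\delta_{F'}(\hC)|$ and bound this by $2|\mC|$ via the Jain--Williamson laminar-witness-family counting argument (as instantiated for bisets in the paper's Lemma~\ref{lem:pair-laminar-witness-family}, Proposition~\ref{prop:pair-witness-leaf}, Proposition~\ref{prop:deg-bijection}). The only cosmetic difference is that the paper builds the laminar witness family $\mLR$ only for the non-redundant \emph{red} edges of $F'$, whereas you propose to build one for all of $F'$; both choices are valid since $F'$ is edge-minimal, and the resulting tree-degree count is unchanged because Proposition~\ref{prop:pair-witness-leaf} forces every leaf to own a biset of $\mC$ either way. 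Your extra observations---that each red edge touches at most one special vertex because its other endpoint lies in $C\subseteq C'\subseteq X$, and that a special vertex isolated in $F'$ would contradict node-minimality of $Q$---are not explicitly stated in the paper's proof but match the justification the paper gives informally for the bound ``number of special vertices $\le\sum_{\hC}|\delta_{F'}(\hC)|$.''
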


First, we formally define the notion of \emph{witness bisets} and
discuss the existence of a family of \emph{non-overlapping} bisets
that is a key part in the counting argument.

\begin{definition}[\prob{Witness Biset}]
  Let $G=(V,E)$ be an input graph and let $h$ be a 
  $\{0,1\}$-biuncrossable function defined on a crossing bifamily
  $\mP \subseteq 2^V \times 2^V$.  Let $F \subseteq E$ be a feasible cover of
  $h$. Then, $\hS_e$ is an $F$-witness biset of $e \in F$ iff
  $h(\hS_e) = 1$ and $\delta_F(\hS_e) = \set{e}$.
\end{definition}

\begin{definition}[\prob{Laminar bifamily}]
  A family of bisets $\mF$ is laminar iff for any $\hS, \hT\in \mF$,
  $\hS$ and $\hT$ are non-overlapping.
\end{definition}

Given a feasible cover $F$ for a requirement function $h$ we say that
an edge $e \in F$ is \emph{non-redundant} if $F \setminus \{e\}$ is
not a cover. A set $M \subseteq F$ is non-redundant if each edge
$e \in M$ is non-redundant.  The following lemma is known from past
work and we provide a proof for the sake of completeness in the
appendix.

\begin{lemma} \label{lem:pair-laminar-witness-family} Let $F$ be a
  feasible cover of a $0$-$1$ bi-uncrossable function $h$. Let
  $M\subseteq F$ be a set of non-redundant edges with respect to $F$.
  There exists a laminar family of bisets
  $\mL = \set{\hS_e \sep e \in M}$ such that $\hS_e$ is an $F$-witness
  biset for $e$.
\end{lemma}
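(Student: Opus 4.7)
The plan is to first establish existence of a witness biset for every non-redundant edge, and then apply an uncrossing argument to replace overlapping pairs until the whole collection is laminar.

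For existence, fix $e \in M$. Since $e$ is non-redundant, $F \setminus \{e\}$ is not a feasible cover of $h$, so there is some biset $\hS \in \mP$ with $\card{\delta_{F\setminus\{e\}}(\hS)} < h(\hS)$. Because $F$ itself is feasible and $h$ takes values in $\{0,1\}$, this forces $h(\hS) = 1$, $\card{\delta_{F\setminus\{e\}}(\hS)} = 0$, and $\delta_F(\hS) = \{e\}$; i.e., $\hS$ is an $F$-witness biset for $e$. Hence we may select some witness biset $\hS_e$ for each $e \in M$.

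For laminarity, I would choose among all such collections $\{\hS_e\}_{e \in M}$ one that minimizes the potential $\Phi = \sum_{e \in M}\big(\card{S_e}^2 + \card{S'_e}^2\big)$, and then argue no two witnesses can overlap. Suppose toward contradiction that $\hS_e$ and $\hS_{e'}$ overlap with $e \neq e'$. Since $h(\hS_e) = h(\hS_{e'}) = 1$, biuncrossability gives two cases. In the union/intersection case, $h(\hS_e \cap \hS_{e'}) + h(\hS_e \cup \hS_{e'}) \geq 2$, which forces both values to equal $1$; combined with the bisubmodularity of $\card{\delta_F(\cdot)}$ from \lemref{delta-bisubmod}, we obtain
\[
\card{\delta_F(\hS_e \cap \hS_{e'})} + \card{\delta_F(\hS_e \cup \hS_{e'})} \leq \card{\delta_F(\hS_e)} + \card{\delta_F(\hS_{e'})} = 2,
\]
and since each side is $\geq 1$ by feasibility, each equals $1$. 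A short case analysis on where the endpoints of $e$ and $e'$ lie (using that $e$ is the only $F$-edge crossing $\hS_e$ and $e'$ the only one crossing $\hS_{e'}$) shows that $\{e,e'\}$ splits as one edge for $\hS_e \cap \hS_{e'}$ and the other for $\hS_e \cup \hS_{e'}$. We then replace $\hS_e$ and $\hS_{e'}$ by those two bisets (assigned to the appropriate edges). The analogous argument in the difference case replaces them by $\hS_e \setminus \hS_{e'}$ and $\hS_{e'} \setminus \hS_e$, with each edge witnessing its side. In either case the replacements are still $F$-witness bisets for $e$ and $e'$, and the potential $\Phi$ strictly decreases by the standard identities $\card{S\cap T}^2 + \card{S\cup T}^2 < \card{S}^2 + \card{T}^2$ and $\card{S\setminus T}^2 + \card{T\setminus S}^2 < \card{S}^2 + \card{T}^2$ whenever $S, T$ properly overlap, applied to both inner and outer parts. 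This contradicts minimality of $\Phi$, so the collection is laminar.

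The main obstacle is the case analysis in the biset setting, which is more delicate than for ordinary set families: the endpoints of $e$ can lie in inner parts, outer parts, or boundaries, and we must check that after uncrossing each of $e$ and $e'$ still ends up as the unique $F$-edge crossing its newly assigned biset. The key observation keeping this clean is that, because $\hS_e$ and $\hS_{e'}$ are witnesses in $F$, every edge of $F$ other than $\{e,e'\}$ is non-crossing for both bisets and therefore remains non-crossing for any biset built from intersections, unions, or differences; combined with the bisubmodular inequality forcing equality, the two remaining crossing edges must distribute exactly one to each new biset.
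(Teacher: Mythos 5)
Your overall strategy---exhibit a witness biset for each non-redundant edge, then uncross overlapping pairs until the family is laminar---is the same as the paper's, which invokes Lemmas 4.1 and 4.3 of Fleischer \etal (restated here as Lemma~\ref{lem:pair-uncrossing} and Lemma~\ref{lem:pair-uncrossing-progress}). Your existence argument is correct, and your sketch that the intersection/union or differences of two witness bisets are again witness bisets (via bisubmodularity of $\card{\delta_F(\cdot)}$ and the observation that edges of $F$ other than $e,e'$ cannot cross any of the four derived bisets) is the right argument, even if the case analysis is only outlined.

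The termination argument, however, has a genuine error. You claim the potential $\Phi = \sum_{e \in M}\bigl(\card{S_e}^2 + \card{S'_e}^2\bigr)$ strictly decreases under both uncrossing operations, asserting the ``standard identities'' $\card{S\cap T}^2 + \card{S\cup T}^2 < \card{S}^2 + \card{T}^2$ and $\card{S\setminus T}^2 + \card{T\setminus S}^2 < \card{S}^2 + \card{T}^2$. The second holds, but the first is false and the inequality actually points the other way: since $\card{S\cap T} + \card{S\cup T} = \card{S} + \card{T}$ while $(\card{S\cap T},\card{S\cup T})$ is \emph{more} spread out than $(\card{S},\card{T})$ whenever neither set contains the other, convexity of $x\mapsto x^2$ gives $\card{S\cap T}^2 + \card{S\cup T}^2 \geq \card{S}^2 + \card{T}^2$, with strict inequality exactly in the overlapping case. (Example: $S=\set{1,2}$, $T=\set{2,3}$ gives $1+9=10 > 8 = 4+4$.) So when biuncrossability forces you into the intersection/union branch, replacing the pair \emph{increases} $\Phi$, and a $\Phi$-minimizer can perfectly well contain overlapping witnesses; no contradiction follows. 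The paper instead uses the number of overlapping pairs in the family as the potential (Lemma~\ref{lem:pair-uncrossing-progress}): it does not increase against any third biset and drops by one on the uncrossed pair, in either branch. One further point worth making explicit is that the uncrossed bisets remain in the domain $\mP$ of $h$, which holds because $\mP$ is a crossing bifamily and is therefore closed under these operations.
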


Our approach is to use laminar families of witness bisets for the blue
edges of $F$ and the red edges of $F'$ in order to count the regular
and special vertices. Before we turn our attention to the counting
arguments, we describe some properties of laminar families of witness
bisets that we will need.

\medskip
\noindent {\bf Laminar witness tree.} For a \emph{laminar} collection
of bisets $\mL$ defined on set $V$ let $\mL^+$ denote the extended
laminar family of $\mL$, $\mL^+:=\mL \cup \set{(V,V)}$. We associate a
tree $\mT_{\mL^+}$ with the family $\mL^+$ as follows. The tree
$\mT_{\mL^+}$ has a node $\nu_{\hS}$ for each biset $\hS \in
\mL^+$. For any two bisets $\hS$ and $\hT$ of $\mL$ such that
$\hS \subset \hT$, we add an edge from the node of $\mT$ representing
$\hS$ to the node representing $\hT$ iff there is no biset
$\hX \in \mL$ such that $\hT \subset \hX \subset \hS$. We view the
tree $\mT_{\mL^+}$ as a rooted tree whose root is the node
corresponding to the biset $(V, V)$.

In the following, we consider a biuncrossbale function
$h: \mP \rightarrow \set{0, 1}$, a cover $F$ for $h$, and a set of
non-redundant edges $M \subseteq F$.  We also fix a laminar family $\mL$ of
$F$-witness bisets for $M$, and denote the tree associated with
$\mL^+$ by $\mT$.

\begin{definition}
  A biset $\hS \in \mL^+$ \textbf{owns} $u\in V$ iff $\hS$ is the
  minimal biset in $\mL^+$ that contains
  $u$ in its inner part.
\end{definition}

\begin{prop} \label{prop:own-unique}
  For each vertex $u \in V$, there is a unique biset in $\mL^+$ that owns $u$.
\end{prop}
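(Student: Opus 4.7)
The plan is to prove the proposition in two steps: existence of an owning biset, then uniqueness. Both follow quickly from the fact that $\mL^+$ is laminar (with the added top element $(V,V)$ which contains every vertex in its inner part).

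For existence, I would observe that the collection $\mL^+_u := \set{\hS \in \mL^+ \sep u \in S}$ is non-empty because the trivial biset $(V, V) \in \mL^+$ has inner part $V$ and therefore contains $u$. Since $\mL^+$ is finite, $\mL^+_u$ has at least one $\subseteq$-minimal element, so some biset in $\mL^+$ owns $u$.

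For uniqueness, suppose two distinct bisets $\hS, \hT \in \mL^+_u$ are both minimal in $\mL^+_u$. Since $\mL$ is laminar and the top biset $(V,V)$ is comparable with every element of $\mL$, the extended family $\mL^+$ is also laminar. Hence $\hS$ and $\hT$ are non-overlapping, which by definition means either one is contained in the other, or $S' \cap T = \emptyset$ and $S \cap T' = \emptyset$. The latter is impossible because $u$ lies in both $S$ and $T$, so it would lie in $S \cap T \subseteq S' \cap T$, contradicting $S' \cap T = \emptyset$. Therefore $\hS \subseteq \hT$ or $\hT \subseteq \hS$; either case contradicts the $\subseteq$-minimality of the larger one (since the smaller one is also in $\mL^+_u$). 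Thus $\hS = \hT$, establishing uniqueness.

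There is no real obstacle here; the statement is essentially a direct consequence of laminarity extended to bisets, and the only subtlety is making sure the non-overlap definition is correctly interpreted (in particular that $u \in S \cap T$ rules out the disjoint-outer-part case). The proof should be only a few lines in the paper.
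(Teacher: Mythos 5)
Your proof is correct and matches the paper's argument: existence comes from $(V,V) \in \mL^+$ having $u$ in its inner part, and uniqueness from laminarity plus the observation that $u \in S \cap T$ rules out the disjoint-outer-parts branch of non-overlap, forcing containment and contradicting minimality. You are slightly more explicit than the paper in spelling out why the disjoint case cannot occur, but the substance is identical.
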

\begin{proof}
  Note that $(V, V)$ contains $u$ in its inner part, and thus there
  is a biset in $\mL \cup \set{(V, V)}$ that owns $u$.
  Suppose for contradiction that two distinct bisets $\hX$ and $\hY$
  of $\mL \cup \set{(V, V)}$ own $u$.
  Note that the set $X \cap Y$ is non-empty, since $u \in X \cap
  Y$. Since $\hX$ and $\hY$ do not overlap, we must have $\hX
  \subseteq \hY$ or $\hY \subseteq \hX$.  Therefore one of $\hX, \hY$
  does not own $u$, which is a contradiction.
\end{proof}

\begin{prop} \label{prop:minimal-violated-own}
  Let $\hC$ be a minimal violated biset. Then
  all vertices of $C$ are owned by the same biset in $\mL^+$.
\end{prop}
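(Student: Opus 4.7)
The plan is to use Lemma~\ref{lem:minimal-violated-crossing} applied to $\hC$ against each witness biset of $\mL$. Every biset $\hS \in \mL$ is an $F$-witness biset, so $h(\hS) = 1$; that is, $\hS$ is violated with respect to $h$. The key intermediate claim I would first establish is: \emph{for every biset $\hS \in \mL$ whose inner part $S$ meets $C$, we have $\hC \subseteq \hS$; in particular $C \subseteq S$.}

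To prove the claim, apply Lemma~\ref{lem:minimal-violated-crossing} to the minimal violated biset $\hC$ and the violated biset $\hS$. It yields that $\hC$ and $\hS$ are non-overlapping. Since $S \cap C \neq \emptyset$ by hypothesis and $S \subseteq S'$, the non-overlap alternative ``$S' \cap C = \emptyset$ and $S \cap C' = \emptyset$'' is ruled out, so one biset is contained in the other. Revisiting the proof of Lemma~\ref{lem:minimal-violated-crossing}: the biuncrossable alternative $h(\hC \setminus \hS) = h(\hS \setminus \hC) = 1$ together with minimality of $\hC$ forces $\hC \setminus \hS = \hC$, i.e.\ $C \cap S' = \emptyset$, contradicting $S \cap C \neq \emptyset$. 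Hence the other alternative $h(\hC \cap \hS) = 1$ must hold, and by minimality of $\hC$ we get $\hC \cap \hS = \hC$, i.e.\ $\hC \subseteq \hS$.

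With the key claim in hand, the proposition follows from the definition of owner and laminarity. Take $u, v \in C$ and let $\hS_u, \hS_v \in \mL^+$ be their respective owners. Suppose first that both $\hS_u, \hS_v \in \mL$. Since $u \in S_u \cap C$, the claim gives $C \subseteq S_u$, and in particular $v \in S_u$. Because $\mL^+$ is laminar and both $\hS_u$ and $\hS_v$ contain $v$ in their inner parts, they must be comparable; minimality of $\hS_v$ as the owner of $v$ then forces $\hS_v \subseteq \hS_u$. Symmetrically $\hS_u \subseteq \hS_v$, so $\hS_u = \hS_v$. If instead one owner is the artificial root, say $\hS_u = (V, V)$, meaning no biset of $\mL$ has $u$ in its inner part, and $\hS_v \in \mL$, then applying the claim to $\hS_v$ (possible because $v \in S_v \cap C$) gives $C \subseteq S_v$, hence $u \in S_v$, contradicting the definition of $\hS_u$. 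Thus $\hS_v = (V, V) = \hS_u$ as well.

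The main obstacle is executing the biuncrossability dichotomy of Lemma~\ref{lem:minimal-violated-crossing} carefully enough to conclude $\hC \subseteq \hS$ (rather than the reverse containment or the disjoint alternative). Once this containment is extracted for every witness biset whose inner part meets $C$, the owner uniqueness reduces to routine use of the minimality of owners and laminarity of $\mL^+$.
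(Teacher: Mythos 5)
Your proof is correct and follows essentially the same route as the paper: both invoke Lemma~\ref{lem:minimal-violated-crossing} to get non-overlap of $\hC$ with every witness biset, extract the containment $\hC \subseteq \hS$ whenever $S$ meets $C$, and then conclude via laminarity of $\mL^+$. The only cosmetic difference is that you re-expand the biuncrossability case analysis from inside that lemma's proof to pin down the direction of containment, whereas the paper simply invokes the lemma's conclusion together with minimality of $\hC$ and takes the minimal biset of $\mL^+$ containing $\hC$.
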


\begin{proof}
  Let $\hC$ be a minimal violated biset of $h$. If $\hC \in \mL^+$ we
  are done. Consider any $\hY \in \mL$; it is a violated biset.
  By Lemma~\ref{lem:minimal-violated-crossing}, $\hC$ and
  $\hY$ do not overlap. Thus either
  $C \subseteq Y$ or $C \cap Y$ is empty.
  Consider the minimal biset $\hY \in \mL^+$ such that $\hC \subset
  \hY$ (since $\hC \subset (V,V)$ such a biset exists).
  Then $\hY$ owns all vertices of $C$.
\end{proof}

It is convenient to abuse the notation and say that the node
$\nu_{\hS}$ of $\mT$ owns $u$ if $\hS$ owns $u$. Additionally, we say
that $\hS$ owns $\hC$ if it owns the inner part of $\hC$.
Consider an edge $e = uv$ in $M$, and assume that $\hS, \hT \in \mL^+$
own $u$ and $v$, respectively. The lemma below shows that
either $\nu_{\hS}$ is an ancestor of $\nu_{\hT}$ or vice-versa.
It is possible for $\hT$ to be a proper ancestor of $\hS$ (that is,
there is another biset $\hY$ in the family such that
$\hS \subset \hY \subset \hT$) or vice-versa.

\begin{lemma} \label{lem:pair-witness} Let $e = uv$
  be an edge of $M$. Let $\hS_e$ be the $F$-witness biset of $e$ in
  $\mL$ and suppose $\hS,\hT\in \mL^+$ own $u$ and $v$, respectively.
  \begin{itemize}
  \item Then $\hS_e = \hS$ and $\hS \subset \hT$, or $\hS_e = \hT$ and
    $\hT \subset \hS$.
  \item Suppose, in addition, $u$ is not contained in the boundary of
    any bisets in $\mL^+$. If $\hT \subset \hS$ then $\hT$ is a child
    of $\hS$, that is, there is no $\hY \in \mL^+$ such that $\hT
    \subset \hY \subset \hS$.
  \end{itemize}
\end{lemma}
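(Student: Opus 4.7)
The plan is to use two ingredients: laminarity of $\mL^+$, and the defining property of witness bisets (every $\hY \in \mL$ is $\hS_{e'}$ for some $e' \in M$, so $\delta_F(\hY)$ is the singleton $\{e'\}$). Because $\delta_F(\hS_e) = \{e\}$, the edge $e=uv$ crosses $\hS_e$, so without loss of generality $u \in S_e$ and $v \in V \setminus S_e'$. Since $\hS_e \in \mL$ contains $u$ in its inner part and $\hS$ is the minimal biset in $\mL^+$ owning $u$, laminarity forces $\hS \subseteq \hS_e$; in particular $\hS \neq (V,V)$, so $\hS \in \mL$.

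For the first bullet I would next compare $\hT$ with $\hS_e$ via the two clauses of the non-overlap definition. The containment $\hT \subseteq \hS_e$ fails because $v \in T$ but $v \notin S_e$. In the disjoint-inner-parts alternative, $T \cap S_e' = \emptyset$ and $T' \cap S_e = \emptyset$; the latter gives $u \notin T$, which together with $v \in T$ shows $e \in \delta_F(\hT)$. Then $\hT \neq (V,V)$ (else $u \in T$), so $\hT \in \mL$ is itself a witness biset, and $\delta_F(\hT) = \{e\}$ forces $\hT = \hS_e$; substituting this into $T \cap S_e' = \emptyset$ collapses to $S_e = \emptyset$, contradicting $u \in S_e$. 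Hence $\hS_e \subseteq \hT$, giving the chain $\hS \subseteq \hS_e \subseteq \hT$. Using $S \subseteq S_e$ and $S' \subseteq S_e'$, we still have $u \in S$ and $v \in V \setminus S'$, so $e \in \delta_F(\hS)$; the witness property applied to $\hS \in \mL$ then forces $\hS = \hS_e$, and $v \in T \setminus S$ makes $\hS \subset \hT$ strict. The second alternative of the lemma ($\hS_e = \hT$ and $\hT \subset \hS$) is the symmetric case obtained by swapping $u$ and $v$ in the WLOG choice.

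For the second bullet, suppose $\hT \subset \hS$; by the first bullet $\hS_e = \hT$, and the edge $e$ crosses $\hT$ with $v \in T$ and $u \in V \setminus T'$. Assume for contradiction that some $\hY \in \mL^+$ satisfies $\hT \subset \hY \subset \hS$ (both strict). Then $v \in T \subseteq Y$. If $u \in Y$, then $\hY \subset \hS$ would be a biset in $\mL^+$ containing $u$ in its inner part, contradicting that $\hS$ owns $u$. The added hypothesis excludes $u \in \bd(\hY)$, so $u \in V \setminus Y'$. Hence $e$ crosses $\hY$, so $\hY \neq (V,V)$ and $\hY \in \mL$; the witness property then gives $\hY = \hS_e = \hT$, contradicting $\hT \subset \hY$.

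The main technical subtlety lies in the first bullet, specifically in ruling out the disjoint-inner-parts alternative between $\hT$ and $\hS_e$. This step is delicate because it requires a second application of the witness property, now to $\hT$, and the contradiction only appears after noting that the hypotheses collapse $S_e$ to the empty set.
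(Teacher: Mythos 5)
Your proof is correct and uses essentially the same ingredients as the paper's: the uniqueness of the witness biset (exactly one edge of $M$ crosses each member of $\mL$) combined with laminarity for the first bullet, and the boundary hypothesis ruling out $u \in \bd(\hY)$ for an interposed $\hY$ in the second. The only difference is a mild reordering in the first bullet: the paper first shows that $\hS_e$ owns the endpoint of $e$ inside its inner part, giving $\hS_e \in \{\hS,\hT\}$ directly, whereas you derive the chain $\hS \subseteq \hS_e \subseteq \hT$ and only at the end identify $\hS = \hS_e$.
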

\begin{proof}
  We consider the first part.
  Let $w$ be the endpoint of $e$ that is contained in the inner part
  of $\hS_e$. Suppose for contradiction that $\hS_e$ does not own
  $w$. Then there exists a biset $\hX \subset \hS_e$ in $\mL$ such
  that $w\in X$.  Note that $e \in \delta_F(\hX)$. However, $\hX$ is
  an $F$-witness biset for an edge in $M \setminus \set{e}$, which is
  a contradiction. Therefore, $\hS_e$ owns $w$. Thus $\hS_e = \hS$ or
  $\hS_e = \hT$.  Without loss of generality let us assume that $\hS_e = \hS$ and that
  $u \in S$. To complete the proof, we need to show that
  $\hS \subset \hT$. Note that we may assume that $\hT \neq (V, V)$;
  otherwise, $\hS \subset \hT$ trivially holds.  Moreover,
  $\hS \neq \hT$, since $v$ is in $V \setminus S'$ and $v\in T$.
  Since $\hS$ is an $F$-witness biset for $e$, it follows that $\hT$
  is an $F$-witness biset for an edge of $M \setminus \set{e}$.
  Therefore $e \notin \delta_F(\hT)$ and, since $v \in T$, we must
  have $u \in T'$. Since $\hS$ and $\hT$ do not overlap and
  $u \in S \cap T'$, either $\hS \subset \hT$ or $\hT \subset
  \hS$. However, since $v \in T \setminus S'$, we cannot have
  $\hT \subset \hS$. Thus, $\hS \subset \hT$.

  We now consider the second part where we assume that $u$ is not in
  the boundary of any biset of $\mL^+$. If $\hT \subset \hS$ then from
  the preceding part we have $\hS_e = \hT$ and $v \in T$.  Suppose for
  contradiction that there exists a biset
  $\hY  \in \mL \setminus \set{\hS, \hT}$ such that
  $\hT \subset \hY \subset \hS$. Since $\hY$ is an $F$-witness biset
  for an edge of $M \setminus \set{e}$, we must have
  $e \notin \delta_F(\hY)$.  Therefore, since $u\in S\setminus T$ and
  $v\in T$, $u\in Y'$; since $u$ is owned by $\hS$ and not by $\hY$ it
  implies that $u \in \bd(\hY)$.  However, by assumption
  $u$ is not in the boundary of any
  biset of $\mL^+$ which implies there is no such $\hY$.
\end{proof}

The following lemma is an important one that underlies the analysis.

\begin{lemma} \label{lem:pair-non-witness-edge-mapping} Let $e= uv$ be
  an edge of $F \setminus M$ such that $e\in \delta_F(\hC)$ for a
  minimal biset $\hC$.  Let $u$ be the endpoint of $e$ that is in
  $V \setminus C'$. Let $\hS$ be the biset of $\mL^+$ that owns
  $\hC$. If $u$ is not contained in the boundary of any biset of
  $\mL^+$, then $\hS$ owns $u$.
\end{lemma}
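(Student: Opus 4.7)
The plan is to prove the lemma by contradiction: assuming $\hS$ does not own $u$, I derive an edge in $F\setminus M$ that would cross a witness biset in $\mL$, contradicting the fact that each biset in $\mL$ is an $F$-witness for a unique edge of $M$. Since $\hS$ owns every vertex of $C$ (by Proposition~\ref{prop:minimal-violated-own}), we have $v\in S$. The failure of $\hS$ to own $u$ splits into two possibilities: either $u\notin S$ altogether, or $u\in S$ but some strictly smaller biset $\hY\in\mL^+$ also contains $u$ in its inner part. I will treat these cases separately.

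For the first case ($u\notin S$), the assumption that $u$ is not in the boundary of any biset of $\mL^+$ immediately rules out $u\in\bd(\hS)$, so $u\in V\setminus S'$. This forces $\hS\neq(V,V)$, hence $\hS\in\mL$. But then $e=uv$ has $v\in S$ and $u\in V\setminus S'$, so $e\in\delta_F(\hS)$; since $\hS$ is an $F$-witness biset, $\delta_F(\hS)=\{e^*\}$ for some $e^*\in M$, forcing $e\in M$ and contradicting $e\in F\setminus M$.

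For the second case, let $\hY\in\mL^+$ with $\hY\subsetneq\hS$ and $u\in Y$. Then $\hY\neq(V,V)$ so $\hY\in\mL$ and hence $\hY$ is a violated biset. I will use Lemma~\ref{lem:minimal-violated-crossing} on the minimal violated biset $\hC$ and the violated biset $\hY$: the three non-overlapping alternatives reduce to a single possibility. The alternative $\hC\subseteq\hY$ is ruled out because $\hS$ was defined as the minimal biset of $\mL^+$ whose inner part contains $C$, and $\hY\subsetneq\hS$ would contain $C$ too; the alternative $\hY\subseteq\hC$ is ruled out because $u\in Y$ but $u\in V\setminus C'\subseteq V\setminus C$. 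Hence $Y'\cap C=\emptyset$ and $Y\cap C'=\emptyset$. The first of these gives $v\notin Y'$ (as $v\in C$), so together with $u\in Y$ we get $e\in\delta_F(\hY)$. Since $\hY$ is an $F$-witness biset for some edge of $M$, this again forces $e\in M$, the desired contradiction.

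The main subtle point — and the one I expect to be the only real obstacle — is justifying the reduction to the third non-overlapping alternative in the second case; everything else is a direct application of the definitions of ownership and of $F$-witness biset. The boundary-avoidance hypothesis on $u$ is used only in Case~1 (to exclude $u\in\bd(\hS)$), which is why the conclusion would fail without it: a vertex $u$ lying in the boundary of $\hS$ could legitimately be owned by a biset strictly larger than $\hS$, while still leaving the edge $e$ invisible to $\hS$ itself.
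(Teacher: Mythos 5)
Your proof is correct and follows essentially the same route as the paper's: both establish $u\in S$ from the boundary hypothesis together with the witness property of $\hS$, then invoke Lemma~\ref{lem:minimal-violated-crossing} for non-overlap of $\hC$ with the smaller witness biset $\hY$ (the paper's $\hT$) containing $u$, and rule out the alternatives to reach a contradiction. You make the three non-overlap alternatives fully explicit (the paper leaves the $\hY\subseteq\hC$ subcase implicit) and close by violating the witness property of $\hY$ rather than the minimality of $\hS$, but these are only cosmetic rearrangements of the same argument.
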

\begin{proof}
  First we claim that $u \in S$. If $\hS = (V, V)$, the claim holds
  trivially.  Therefore we may assume that $\hS\in \mL$. Since
  $v \in C$ and $C \subset S$ we have $v \in S$. If $u \not \in S'$
  then $e \in \delta(\hS)$ but $\hS$ is an $F$-witness biset for some
  $e' \in M$ and $e \neq e'$ which is a contradiction.

  Now suppose for contradiction that $\hS$ does not own $u$.  Then
  there is a biset $\hT \in \mL$ such that $\hT\subset \hS$, and
  $u \in T$.  Since $\hT$ is an $F$-witness biset for an edge in $M$
  and $u \in T$, we must have $v\in T'$ which implies that
  $C\cap T' \neq \emptyset$. By
  Lemma~\ref{lem:minimal-violated-crossing}, $\hC$ and $\hT$ are
  non-overlapping, and therefore we must have $C \subseteq T$ which
  contradicts the fact that $\hS$ owns $\hC$. Hence $\hS$ owns $u$.
\end{proof}

\begin{prop} \label{prop:pair-witness-leaf}
  If $\nu_{\hS}$ is a leaf of $\mT$, then $\hS$ owns a minimal violated biset.
\end{prop}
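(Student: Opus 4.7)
The plan is to exploit the fact that being a leaf of $\mT$ means no other biset of $\mL^+$ sits strictly below $\hS$, so $\hS$ owns every vertex in its inner part. Then I just need to locate a minimal violated biset whose inner part lies inside $S$.

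First I would settle the ownership claim. If $\nu_{\hS}$ is a leaf of $\mT$, then there is no biset $\hY \in \mL$ with $\hY \subsetneq \hS$. Hence for any $u \in S$, the minimal biset of $\mL^+$ whose inner part contains $u$ must be $\hS$ itself (the only other candidates in $\mL^+$ either strictly contain $\hS$ or do not overlap with $\hS$, by laminarity). By Proposition~\ref{prop:own-unique} this minimal biset is unique and equals $\hS$, so $\hS$ owns every vertex of $S$. The degenerate case $\hS=(V,V)$ occurs only when $\mL=\emptyset$, and then $(V,V)$ trivially owns every vertex of $V$.

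Next I would produce a minimal violated biset contained in $\hS$. Since $\hS \in \mL$ is an $F$-witness biset of some edge of $M$, we have $h(\hS)=1$, so $\hS$ itself is a violated biset. Let $\hC$ be any violated biset that is $\subseteq$-minimal among the violated bisets contained in $\hS$; such a $\hC$ exists because $\subseteq$ is well-founded on the finite collection of bisets inside $\hS$. I claim $\hC$ is globally a minimal violated biset of $h$: if some violated $\hC'$ satisfied $\hC' \subsetneq \hC$, then $\hC' \subsetneq \hS$ would contradict the choice of $\hC$. (One could alternatively invoke Lemma~\ref{lem:minimal-violated-crossing} on $\hC$ and $\hS$ to argue non-overlap, but the direct argument is cleaner.) When $\hS=(V,V)$ the same argument runs starting from any existing minimal violated biset.

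Finally, since $C \subseteq S$ and $\hS$ owns every vertex of $S$, $\hS$ owns every vertex of $C$, which by definition means $\hS$ owns the minimal violated biset $\hC$. The only step that required care was distinguishing the leaf property of $\mT$ from the existence of minimal violated bisets inside $\hS$; both followed from combining the laminar structure of $\mL^+$ with the fact that witness bisets themselves have $h$-value $1$.
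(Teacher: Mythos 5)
Your proof is correct and follows essentially the same route as the paper's: observe that $\hS$, being a witness biset, satisfies $h(\hS)=1$ and hence is violated, extract a minimal violated biset $\hC\subseteq\hS$, and use the leaf property (no biset of $\mL^+$ strictly below $\hS$) to conclude $\hS$ owns $C$. Your version just makes explicit the well-foundedness and ownership arguments that the paper compresses into a single sentence.
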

\begin{proof}
  Since $\hS$ is a violated biset with respect to $h$, there exists a
  minimal violated biset $\hC \in \mC$ such that $\hC \subseteq
  \hS$. Moreover, $\hS$ is a minimal biset of $\mL^+$; thus, $\hS$ owns $\hC$.
\end{proof}

\noindent
\textbf{Bijection between the edges of $M$ and $\mT$.}
We define the following bijection between the edges of $\mT$
and the edges of $M$. Let $e$ be an edge of $M$ and let $\hS_e$ be
the witness biset for $e$. The node $\nu_{\hS_e}$ has a parent $\nu_{\hT}$
in $\mT$, and we associate $e$ with the edge
$(\nu_{\hT}, \nu_{\hS_e})$ in $\mT$. We say that the edge $e$
\emph{corresponds} to the edge $(\nu_{\hT}, \nu_{\hS_e})$.

\begin{prop} \label{prop:deg-bijection}
  Let $\hC$ be a minimal violated biset that is owned by $\nu_{\hS}\in\mT$.
  Then each edge $e\in \delta_{M}(\hC)$ whose endpoints are not in the
  boundary of any violated bisets, corresponds to an edge of
  $\mT$ incident to $\nu_{\hS}$.
\end{prop}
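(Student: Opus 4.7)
The plan is to apply Lemma \ref{lemma:pair-witness} directly after identifying who owns the two endpoints of $e$. Let $e = uv$ with $u \in C$ and $v \in V \setminus C'$ (this is the definition of $e \in \delta_M(\hC)$). Since $\nu_{\hS}$ owns $\hC$, meaning $\hS$ owns every vertex of $C$ (Proposition \ref{prop:minimal-violated-own}), the biset owning $u$ is exactly $\hS$. Let $\hT \in \mL^+$ be the unique biset (guaranteed by Proposition \ref{prop:own-unique}) that owns $v$.

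Next I would invoke the first bullet of Lemma \ref{lemma:pair-witness} applied to $e$ with the pair $(\hS,\hT)$. This yields two cases: either (i) $\hS_e = \hS$ and $\hS \subset \hT$, or (ii) $\hS_e = \hT$ and $\hT \subset \hS$. By the hypothesis of the proposition both $u$ and $v$ are outside the boundary of every violated biset, hence outside the boundary of every biset in $\mL^+$ (since every biset in $\mL^+ \setminus \{(V,V)\}$ is an $F$-witness biset and thus satisfies $h=1$, i.e.\ is violated, while $(V,V)$ has empty boundary). Therefore the second bullet of Lemma \ref{lemma:pair-witness} is available in both directions, once using ``$u$ not on any boundary'' and once using ``$v$ not on any boundary.''

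In case (i), I apply the second bullet with the roles of $u,v$ swapped using that $v$ is not on any boundary to conclude that $\hS$ is a child of $\hT$ in $\mT$; then the edge of $\mT$ associated with $e$ (by the bijection, namely from $\nu_{\hS_e}$ to its parent) is precisely $(\nu_{\hT},\nu_{\hS})$, which is incident to $\nu_{\hS}$. In case (ii), I use that $u$ is not on any boundary to obtain directly that $\hT$ is a child of $\hS$ in $\mT$, so the associated edge is $(\nu_{\hS},\nu_{\hT}=\nu_{\hS_e})$, again incident to $\nu_{\hS}$. Either way the claimed incidence holds.

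I do not foresee a real obstacle here: the statement is essentially a bookkeeping corollary of Lemma \ref{lemma:pair-witness} and Proposition \ref{prop:minimal-violated-own}. The only point that requires a moment of care is verifying that the no-boundary hypothesis of the proposition legitimately unlocks the second bullet of Lemma \ref{lemma:pair-witness} in whichever of the two cases occurs — in particular that $(V,V)$'s empty boundary handles the degenerate case when one of $\hS,\hT$ is the root of $\mT$.
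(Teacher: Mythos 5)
Your proof is correct and takes essentially the same route as the paper: both identify the owners of $u$ and $v$, invoke \lemref{pair-witness} to determine which owner is the witness biset $\hS_e$, and conclude that the associated tree edge is incident to $\nu_{\hS}$. One small redundancy worth noting: in your case~(i), once $\hS_e = \hS$ the tree edge associated with $e$ runs from $\nu_{\hS}$ to its parent by the definition of the bijection and is therefore automatically incident to $\nu_{\hS}$, so the second bullet of \lemref{pair-witness} (and the no-boundary hypothesis applied to $v$) is not actually needed there --- the paper's proof correspondingly invokes the second bullet only in the case $\hS_e = \hT \subset \hS$.
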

\begin{proof}
  Let $e=uv$ be an edge in $\delta_M(\hC)$ and let $u \in C$ and
  $v \not \in C'$. Since $\hS$ owns $\hC$, $u$ is owned by $\hS$.  If
  $v \in V \setminus S'$ then $\{e\} = \delta_M(\hS)$ which implies that
  $\hS = \hS_e$ and in this case $e$ corresponds to
  $(\nu_{\hS},\nu_{\hT})$ where $\hT$ is the parent of
  $\hS$. Otherwise $v \in S'$ and since $e \in M$ it follows
  that $\{e \} = \delta_M(\hT)$ for some descendent of $\hS$.
  Since $u$ is not on the boundary of any violated biset by assumption,
  from the second part of Lemma~\ref{lem:pair-witness},
  $\hT$ is a child of $\hS$ which implies that $e$ corresponds
  to $(\nu_{\hT},\nu_{\hS})$ which is incident to $\nu_{\hS}$.
\end{proof}

\noindent
\textbf{Counting argument for regular vertices.}
Let $\mLB = \set{\hS_e \sep \text{$e$ is a blue edge in $F$}}$
be a laminar family of $F$-witness bisets of the blue edges in $F$
that is guaranteed by Lemma~\ref{lem:pair-laminar-witness-family}.
Let $\mTB$ be the tree associated with $\mLB^+$.

Recall that each regular vertex is incident to at least one blue edge
of $F$. Additionally, recall that $F$ contains all the red edges of
$K[Q']$. Therefore, for each regular vertex $u$, there is a red edge
in $F$ that is incident to $u$. Moreover, by
Lemma~\ref{lem:aug-func-bd}, no violated biset
contains a critical vertex in its boundary.

We charge each regular vertex $u$ as follows. Let $\hC \in \mC$ be a
minimal violated biset for which there exists a red edge
$wu \in \delta_G(\hC)$ such that $w \in C$ and $u \in V \setminus
C'$. Moreover, let $e = uv$ be a blue edge of $F$ and let
$\hS_e \in \mLB$ be the $F$-witness biset for $e$. Suppose that $\hS$
and $\hT$ be the bisets that own $u$ and $v$, respectively.

By Lemma~\ref{lem:pair-non-witness-edge-mapping} and the fact that $u$
is not on the boundary of any violated biset, $\hS$ owns $\hC$.
Additionally, by Lemma~\ref{lem:pair-witness}, we have
either $\hS \subset \hT$ or $\hT \subset \hS$, and we consider each of
these cases separately.  Suppose that $\hS \subset \hT$. It follows
from Lemma~\ref{lem:pair-witness} that $\hS = \hS_e$. We
charge $u$ to $\hC$.  We refer to such a charge as a \emph{parent
  charge}.  Next, suppose that $\hT \subset \hS$. By
Lemma~\ref{lem:pair-witness} and the fact that $u$ is not on
the boundary of any violated biset, $\nu_{\hT}$ is a child of
$\nu_{\hS}$. Since, by Proposition~\ref{prop:pair-witness-leaf}, each
leaf of $\mTB$ owns a biset of $\mC$, there is a descendant of
$\nu_{\hT}$ (possibly $\nu_{\hT}$ itself) that owns a biset of
$\mC$. Let $\nu_{\hX}$ be the closest such descendant, i.e., a
descendant whose distance to $\nu_{\hT}$ is minimized. (If there are
several descendants whose distance to $\nu_{\hT}$ is minimum, we pick
one of them arbitrarily.) We charge $u$ to one of the bisets of $\mC$
that $\nu_{\hX}$ owns. We refer to this charge as a \emph{subtree
  charge}, since $u$ is charged in a subtree rooted at a child of the
node $\nu_{\hS}$ that owns $u$.

\begin{lemma}\label{lem:parent-charge}
	There is at most one parent charge to each biset $\hC \in
	\mC$.
\end{lemma}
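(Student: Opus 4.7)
The plan is to show that the data witnessing a parent charge is uniquely determined by the biset $\hC$ being charged, so no two distinct regular vertices can parent-charge the same $\hC$. Recall from the definition of a parent charge that if a regular vertex $u$ parent-charges $\hC$, then there exists a blue edge $e = uv \in F$ whose $F$-witness biset $\hS_e \in \mLB$ coincides with the biset $\hS \in \mLB^+$ that owns $u$, and moreover this same $\hS$ owns $\hC$.

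First I would argue that $\hS$ is determined by $\hC$ alone. By Proposition~\ref{prop:minimal-violated-own}, all vertices in the inner part $C$ of $\hC$ are owned by a common biset of $\mLB^+$, and by Proposition~\ref{prop:own-unique} this biset is unique. Hence there is at most one candidate $\hS^\ast \in \mLB^+$ that can play the role of $\hS$ in a parent charge to $\hC$.

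Next I would argue that $\hS^\ast$ in turn determines the blue edge $e$, and thereby $u$. Since $\hS^\ast = \hS_e$ is an $F$-witness biset for $e$, it satisfies $\delta_F(\hS^\ast) = \set{e}$, which pins down $e$ as the unique edge of $F$ crossing $\hS^\ast$. Because $e$ crosses $\hS^\ast$, exactly one endpoint of $e$ lies in the inner part of $\hS^\ast$; by the first clause of Lemma~\ref{lem:pair-witness}, in the parent-charge case $\hS \subset \hT$ this endpoint must be $u$ (the other endpoint $v$ is owned by the proper ancestor $\hT$ of $\hS^\ast$ and lies outside the outer part of $\hS^\ast$).

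Combining these steps, the triple $(\hS, e, u)$ associated with a parent charge to $\hC$ is a function of $\hC$, so at most one regular vertex can parent-charge a given $\hC \in \mC$. I do not anticipate any real obstacle: the argument is essentially a chain of deterministic lookups through the ownership structure on the laminar tree $\mTB$, and all of the structural facts that power the lookups --- uniqueness of the owning biset, the $F$-witness property $\delta_F(\hS_e) = \set{e}$, and Lemma~\ref{lem:pair-witness} --- have already been established.
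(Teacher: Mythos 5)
Your proof is correct and takes essentially the same route as the paper's: both identify $\hS$ as the unique biset of $\mLB^+$ that owns $\hC$, then use the $F$-witness property $\delta_F(\hS)=\{e\}$ to pin down the blue edge and hence the charging vertex $u$ as the endpoint of $e$ in the inner part of $\hS$. The paper is just more terse, stating "since $\hS$ is an $F$-witness biset for exactly one edge, there is at most one parent charge," while you spell out the chain $\hC \mapsto \hS \mapsto e \mapsto u$ explicitly.
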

\begin{proof}
	Let $\hS $ be the biset of $\mLB^+$
	that owns $\hC$. Suppose that $\hC$ incurs a parent charge from a
	vertex $u$. Then $u$ is in $S$ and there is a blue edge $e = uv \in F$
	such that $\hS =\hS_e$. Since $\hS$ is an $F$-witness biset for exactly one
	edge, there is at most one parent charge to $\hC$.
\end{proof}

\begin{lemma}\label{lem:subtree-charge}
	There is at most one subtree charge to each biset $\hC \in \mC$.
\end{lemma}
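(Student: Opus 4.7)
The plan is to derive a contradiction under the assumption that two distinct regular vertices $u_1$ and $u_2$ both subtree-charge the same biset $\hC \in \mC$; the argument will exploit the choice of the \emph{closest} descendant in the charging rule. For each $i \in \set{1,2}$, the subtree-charge setup produces: the biset $\hS_i \in \mLB^+$ that owns $u_i$; a blue edge $e_i = u_i v_i \in F$ with $F$-witness biset $\hT_i = \hS_{e_i}$, which by Lemma~\ref{lemma:pair-witness} is a child of $\nu_{\hS_i}$ in $\mTB$; and a node $\nu_{\hX_i}$ of $\mTB$ that is the closest descendant of $\nu_{\hT_i}$ owning a biset of $\mC$, with $\hC$ being such an owned biset. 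By Proposition~\ref{prop:own-unique} each biset of $\mC$ has a unique owner, so $\nu_{\hX_1} = \nu_{\hX_2}$; denote the common node by $\nu_{\hX}$. Since $\nu_{\hT_1}$ and $\nu_{\hT_2}$ are both ancestors of $\nu_{\hX}$, they lie on the unique root-to-$\nu_{\hX}$ path in $\mTB$ and are therefore comparable.

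First I would dispose of the easy case $\nu_{\hT_1} = \nu_{\hT_2} = \nu_{\hT}$. Since an $F$-witness biset in the laminar family witnesses a unique edge, $e_1 = e_2$; the subtree-charging setup forces $u_i \in S_i \setminus T'$ and $v_i \in T$, so $u_i$ is the unique endpoint of this edge in $V \setminus T'$. This gives $u_1 = u_2$, contradicting the assumption.

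The interesting case is when, without loss of generality, $\nu_{\hT_1}$ is a strict ancestor of $\nu_{\hT_2}$. Because $u_2$ is a critical vertex there exists a biset $\hC_2^* \in \mC$ and a red edge $w_2 u_2 \in \delta_F(\hC_2^*)$ with $w_2 \in C_2^*$ and $u_2 \in V \setminus (C_2^*)'$. Since $u_2 \in Q \subseteq V \setminus X$, Lemma~\ref{lemma:aug-func-bd} implies $u_2$ lies in the boundary of no violated biset, hence in no biset of $\mLB^+$, so Lemma~\ref{lemma:pair-non-witness-edge-mapping} applies to the red edge $w_2 u_2$ and yields that $\hS_2$ itself owns $\hC_2^*$. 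Now $\nu_{\hS_2}$ is the parent of $\nu_{\hT_2}$ in $\mTB$, and since $\nu_{\hT_2}$ is a strict descendant of $\nu_{\hT_1}$, $\nu_{\hS_2}$ is either $\nu_{\hT_1}$ itself or a strict descendant of $\nu_{\hT_1}$ on the tree path to $\nu_{\hT_2}$. In the former sub-case $\nu_{\hT_1}$ owns a biset of $\mC$, so its closest $\mC$-owning descendant is $\nu_{\hT_1}$, not $\nu_{\hX}$. In the latter sub-case $\nu_{\hS_2}$ owns a biset of $\mC$ and sits strictly between $\nu_{\hT_1}$ and $\nu_{\hT_2}$, so it is a descendant of $\nu_{\hT_1}$ whose distance to $\nu_{\hT_1}$ is strictly less than the distance from $\nu_{\hT_1}$ to $\nu_{\hT_2}$, and thus strictly less than the distance from $\nu_{\hT_1}$ to $\nu_{\hX}$. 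Both sub-cases contradict the defining property of $\nu_{\hX}$, completing the proof.

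The step most in need of care is the deduction that $\hS_2$ owns $\hC_2^*$: it hinges on Lemma~\ref{lemma:pair-non-witness-edge-mapping}, whose hypothesis that $u_2$ lies on the boundary of no biset in $\mLB^+$ is precisely what the $u_2 \notin X$ property and Lemma~\ref{lemma:aug-func-bd} supply. Once this is secured, the tree-distance argument is essentially forced by the fact that the charging rule selects the \emph{closest} descendant of $\nu_{\hT}$ owning a biset of $\mC$; selecting the closest (rather than an arbitrary) descendant is exactly what prevents a second subtree charge from reaching $\hC$ via an ancestral $\nu_{\hT_1}$.
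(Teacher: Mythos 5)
Your proof is correct and takes essentially the same approach as the paper's: both arguments funnel the two hypothetical subtree charges through the unique owner $\nu_{\hX}$ of $\hC$, use the laminar-tree structure to make $\nu_{\hT_1}$ and $\nu_{\hT_2}$ comparable, and derive a contradiction from the ``closest $\mC$-owning descendant'' charging rule together with the fact (via Lemma~\ref{lem:pair-non-witness-edge-mapping}) that each $\nu_{\hS_i}$ itself owns a biset of $\mC$. Your bookkeeping is a bit cleaner than the paper's — you explicitly dispose of the degenerate case $\hT_1 = \hT_2$ and obtain the contradiction directly from distance minimality rather than routing through the intermediate claim $\hS_1 = \hS_2$ — but it is the same argument.
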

\begin{proof}
	Let $\hX$ be the biset of $\mLB^+$
	that owns $\hC$. Suppose that there is a descendant charge to
	$\hC$ corresponding to a vertex $u_1$. Then there is a blue edge
	$e_1 = u_1v_1$ whose endpoints are owned by $\hS_1$
	and $\hT_1$ (respectively) and $\nu_{\hT_1}$ is a child of $\nu_{\hS_1}$.
	By the way we choose $\hX$, $\nu_{\hX}$ and $\nu_{\hS_1}$ are the only
	nodes on the path in $\mTB$ from $\nu_{\hS_1}$ to $\nu_{\hX}$ that
	own a biset of $\mC$. Moreover, $\hT_1$ is the witness biset of $e_1$.

	Suppose for contradiction that there is another descendant charge to
	$\hC$ from $u_2 \neq u_1$. Let $\hS_2$ and
	$\hT_2$ be the bisets of $\mL^+$ that own
	$u_2$ and $v_2$ (respectively).  By the same argument as above,
	$\nu_{\hS_2}$ is a child of $\nu_{\hT_2}$ and there is no node in the path from
	$\nu_{\hS_2}$ to $\nu_{\hX}$ other than its endpoints that owns a biset of $\mC$.
	Moreover, $\hT_2$ is the witness biset of $e_2=u_2v_2$ (which is different from $e_1$).


	Since $\hT_1$ and $\hT_2$ are distinct bisets of $\mLB^+$ that both contain
	$\hX$, either $\hT_1 \subset \hT_2$ or $\hT_2 \subset \hT_1$. Moreover,
	since for each $i\in \set{1,2}$, $\hX \subset \hS_{i}$, $\hS_i \not\subset \hT_{2-i}$
	and $\nu_{T_i}$ is a child of $\nu_{S_i}$, we must have $\hS_1 = \hS_2$.
	Thus $\nu_{\hT_1}$ and $\nu_{\hT_2}$ are children of $\hS$ which contradicts the
	fact that either $\hT_1 \subset \hT_2$ or $\hT_2 \subset \hT_1$.
	Hence there is at most one subtree charge to each minimal violated biset $\hC$.
\end{proof}

\begin{proofof}{Lemma~\ref{lem:pair-regular}}
	By Lemma~\ref{lem:parent-charge} and~\ref{lem:subtree-charge},
	each biset of $\mC$ is charged at most twice and thus the
	number of regular vertices is at most $2\card{\mC}$.
\end{proofof}

\medskip\noindent \textbf{Counting argument for special vertices.}
Recall that $F'$ is an edge-minimal cover of $h'$. Moreover, a
critical vertex $v$ is special only if there is a red edge $e$
incident to $v$ such that $e \in \delta_{F'}(\hC)$ for
a minimal violated biset $\hC$ where $v\in V\setminus C'$. Thus the
total number of special vertices is upper bounded by
$\sum_{\hC \in \mC} \card{\delta_{F'}(\hC)}$.  Next we adopt the
argument of Jain \etal \cite{JainMVW02} to show that, for any
edge-minimal cover $F'$, $\sum_{\hC \in \mC} \card{\delta_{F'}(\hC)}$
is at most $2\card{\mC}$. Let $\mLR$ be the laminar family of witness
bisets of red edges with respect to $F'$ and let $\mTR$ be the tree
representation of $\mLR^+$. Let $\mA$ denotes the set of vertices in
$\mTR$ that own a minimal violated biset. By
Proposition~\ref{prop:deg-bijection},
$\sum_{\hC\in\mC}\card{\delta_{F'}(\hC)} \leq \sum_{\nu\in \mA}
\deg(\nu)$.  Here $\deg(\nu)$ refers to the degree of node $\nu$
in $\mTR$. Note that by Proposition~\ref{prop:pair-witness-leaf},
all leaf vertices are in $\mA$ and there is at most one node in
$V(\mTR)\setminus\mA$ with degree less than $2$; root node.  Hence,
\begin{align*}
\sum_{\hC\in\mC}\card{\delta_{F'}(\hC)} &\leq \sum_{\nu\in \mA} \deg(\nu) \\
	&\leq  \sum_{\nu\in V(\mTR)} \deg(\nu) - \sum_{\nu\in V(\mTR)\setminus \mA} \deg(\nu)\\
	&\leq 2(\card{V(\mTR)}-1) - 2(\card{V(\mTR)}-\card{\mA}-1)\\
	&\leq 2\card{\mA} \leq 2\card{\mC}.
\end{align*}

Thus we can upper bound the number of special vertices
by $2\card{\mC}$ which proves Lemma~\ref{lem:pair-special}. We remark
that some of the regular vertices are counted in this step as well,
but this can only help us.

\section{Algorithm for $\set{0, 1, 2}$ VC-SNDP}
\label{sec:node-wt-vc-sndp}

In this section, we prove the following theorem.
family of graphs.

\begin{theorem} \label{thm:vc-sndp}
  There is an $O(1)$-approximation for node-weighted
  \prob{VC-SNDP} when the requirements are in $\set{0, 1, 2}$
  and the input graph belongs to a proper minor-closed family of graphs.
\end{theorem}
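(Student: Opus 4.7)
The plan is to apply the two-phase augmentation framework of Section~\ref{sec:node-wt-elem-sndp}, specialized to vertex connectivity. In phase~1 I handle the sub-requirement $r_1(uv)=\min\{1,r(uv)\}$; this is exactly a $\{0,1\}$-proper function on the pairs with $r\ge 1$, i.e.\ a node-weighted Steiner forest instance, so the Demaine--Hajiaghayi--Klein primal-dual algorithm \cite{DemaineHK14} produces a constant-factor approximation $H_1$ in any proper minor-closed family. In phase~2 I augment $H_1$ to cover the remaining pairs with $r(uv)=2$. By Menger's theorem for vertex connectivity, the residual requirement is captured by a biset function $h$ with $h(\hS)=1$ iff $r_v(\hS)=2$ and $|\bd(\hS)|+|\delta_{H_1}(\hS)|=1$, defined on the crossing bifamily $\mPV$ of bisets whose boundary does not contain a terminal. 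The structural feature that makes the $\{0,1,2\}$ case tractable is that every violated biset of $h$ has $|\bd(\hS)|\le 1$, which forces the single blocker already provided by the phase-1 subgraph to sit inside the current selection and lets the counting analysis of Section~\ref{sec:main-result} go through.

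The main technical work is to verify that $h$ fits the hypotheses used in the proof of Theorem~\ref{thm:main}. First, $h$ is $\{0,1\}$-biuncrossable: $r_v$ is skew-bisupermodular and bimaximal on $\mPV$ by the vertex-connectivity analog of Proposition~\ref{prop:elem-sndp-requirement} (a standard calculation, see \cite{FleischerJW06,Nutov12}), $|\bd|$ and $|\delta_{H_1}|$ are bisubmodular by Proposition~\ref{prop:boundary-bisubmodular} and Lemma~\ref{lem:delta-bisubmod}, so Proposition~\ref{prop:difference-bisup} gives that $r_v-|\bd|-|\delta_{H_1}|$ is skew-bisupermodular, and truncating at one and performing the same $\{0,1\}$-case analysis as for $h_\ell$ in Section~\ref{sec:node-wt-elem-sndp} yields biuncrossability of $h$. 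Second, I need the analogs of Lemmas~\ref{lem:aug-func-bd} and~\ref{lem:aug-func-minimal}: for any $X$ extending the vertex set of $H_1$ and any minimal violated biset $\hC$ of $h$ with respect to $H_X=(V,E[X])$, the boundary satisfies $\bd(\hC)\subseteq X$, the outer part satisfies $C'\subseteq X$, and $G[C]$ is connected.

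These three structural properties follow by the same shrinking and splitting arguments as in Lemmas~\ref{lem:aug-func-bd} and~\ref{lem:aug-func-minimal}. Any boundary vertex $u\notin X$ is isolated in $H_X$ and is not a terminal (terminals have zero weight and are placed in $X$ at the outset), so removing $u$ from $\bd(\hC)$ or from $C'$ preserves $r_v$ and the violation of $h$ while producing a strictly smaller biset, contradicting minimality. Connectivity of $G[C]$ follows by applying bimaximality of $r_v$ to a hypothetical disconnection $C=C_1\cup C_2$ with no $G$-edges between the parts via the bisets $\hC_i=(C_i,C_i\cup\bd(\hC))$. The main obstacle I foresee is pinning down precisely which crossing bifamily to use for VC-SNDP so that $r_v$ is simultaneously skew-bisupermodular and bimaximal even when boundary vertices can be arbitrary graph cut-vertices; the bound $|\bd|\le 1$ forced by phase~1 keeps this case analysis compact. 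Once these three structural lemmas are in place, Theorem~\ref{thm:main} applies to $h$ and yields an $O(1)$-approximation for phase~2; combining with the phase-1 bound via Theorem~\ref{thm:aug-approx} delivers the $O(1)$-approximation in proper minor-closed families, and tracking the constants (the DHK bound for phase~1 and the factor $10$ from Lemma~\ref{lem:elem-counting-lemma} for phase~2) yields the $13$-approximation for planar graphs.
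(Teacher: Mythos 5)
Your high-level plan matches the paper exactly: phase~1 runs the Demaine--Hajiaghayi--Klein node-weighted Steiner forest algorithm, and phase~2 augments via the primal-dual framework of Section~\ref{sec:main-result} applied to a residual biset function $h$, with the three structural properties ($\bd(\hC)\subseteq X$, $C'\subseteq X$, $G[C]$ connected) established by the same shrinking/splitting arguments as Lemmas~\ref{lem:aug-func-bd} and~\ref{lem:aug-func-minimal}. The final constant arithmetic ($3+10=13$ for planar) is also the paper's. However, there is one genuine error in how you establish biuncrossability of $h$.

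You claim that ``$r_v$ is skew-bisupermodular and bimaximal on $\mPV$ by the vertex-connectivity analog of Proposition~\ref{prop:elem-sndp-requirement}'' and then deduce via Proposition~\ref{prop:difference-bisup} that $r_v - |\bd| - |\delta_{H_1}|$ is skew-bisupermodular, reducing to the truncation argument from Section~\ref{sec:node-wt-elem-sndp}. This is not correct: for vertex connectivity the requirement function $\rV$ is only \emph{biuncrossable}, not skew-bisupermodular (Lemma~5.1 in~\cite{FleischerJW06}, restated as Proposition~\ref{prop:vc-supermodular}). The skew-bisupermodularity of $\rE$ in the element case depends essentially on restricting boundaries to non-reliable nodes; no such restriction is available here, and this is precisely the long-recognized obstacle that makes \prob{VC-SNDP} harder than \prob{Elem-SNDP}. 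Since biuncrossability of $\rV$ is a conditional statement (it only applies when both $\rV(\hS)>0$ and $\rV(\hT)>0$), Proposition~\ref{prop:difference-bisup} does not apply and the truncation argument does not go through as stated.

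The paper closes the gap with an additional ingredient your proposal does not supply. In the proof of Lemma~\ref{lem:vertex-uncrossable}: after invoking biuncrossability of $\rV$ on two bisets with $h=1$ and concluding, say, $\rV(\hS\cap\hT)=\rV(\hS\cup\hT)=2$, one uses bisubmodularity of $|\delta_F|$ and $|\bd|$ to get $|\delta_F(\hS\cap\hT)|+|\bd(\hS\cap\hT)|+|\delta_F(\hS\cup\hT)|+|\bd(\hS\cup\hT)|\le 2$; and then, crucially, one observes that because $F$ already $1$-connects every terminal pair with nonzero requirement, any biset $\hZ$ with $\rV(\hZ)\ge 1$ satisfies $|\delta_F(\hZ)|+|\bd(\hZ)|\ge 1$. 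Combining these two inequalities forces $|\delta_F(\hS\cap\hT)|+|\bd(\hS\cap\hT)|=|\delta_F(\hS\cup\hT)|+|\bd(\hS\cup\hT)|=1$, hence $h(\hS\cap\hT)=h(\hS\cup\hT)=1$. Your write-up correctly senses that the upper bound $|\bd(\hS)|\le 1$ ``forces the single blocker already provided by the phase-1 subgraph to sit inside the current selection,'' but the place where that fact is actually load-bearing is here, in the biuncrossability proof, not only in the counting argument. One small further discrepancy worth noting: you define $\mPV$ as bisets whose boundary contains no terminal; the paper takes $\mPV$ to be all bisets over $V$ (the terminal-free boundary restriction is what is available for $\mPE$ in element connectivity, not for vertex connectivity), and the bimaximality proof goes through on all bisets.
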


\noindent
We construct a solution in two stages. In the first stage we use an
algorithm for node-weighted Steiner forest to find a set
$X_1 \subseteq V$ of nodes such that $G[X]$ connects each pair $uv$
with $r(u,v) \ge 1$. A constant factor approximation for this in
proper minor-closed families of graphs follows from prior work that we
already discussed \cite{DemaineHK14,Moldenhauer13}.  Letting $\opt$
denote the weight of an optimum solution for the initial instance we
see that $w(X_1) = O(1) \opt$. Let $F$ be the edge set of the graph
$E[X_1]$.  In the second stage, we augment $X_1$ to $2$-connect pairs
$(s,t)$ with connectivity requirement $2$ that . For the second stage,
as with \prob{Elem-SNDP}, we define a $\{0,1\}$-biuncrossable function
$h$ and a graph $G' = (V, E \setminus F)$.  Let $\mPV$ be the
collection of all bisets over $V$.  The requirement function
$\rV: \mPV \rightarrow \set{0,1,2}$ for each biset $\hS$ is defined as
the maximum connectivity requirement over all pair of vertices that
are separated by $\hS$.  Let $h: \mPV \rightarrow \set{0, 1}$ be the
function such that $h(\hS) = 1$ iff $\rV(\hS)=2$ and
$\card{\delta_F(\hS)} +\card{\bd(\hS)} =1$.  By Menger's theorem on
vertex connectivity, a feasible cover of $\hV$ together with $F$ is a
feasible solution for the \prob{VC-SNDP} instance.
For the second stage we are only interested in those pairs $(s,t)$
such that $r(s,t) = 2$ while $s$ and $t$ are only $1$-connected in
$G[X_1]$. We call a vertex $u$ a terminal for the second stage
if it participates in such a pair.

\begin{proposition}[Lemma 5.1 in~\cite{FleischerJW06}]\label{prop:vc-supermodular}
  The function $\rV$ is biuncrossable.
\end{proposition}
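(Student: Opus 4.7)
The plan is to prove biuncrossability of $\rV$ via a direct case analysis on the positions of the realizing terminal pairs relative to $\hS$ and $\hT$, which is the standard argument used by Fleischer, Jain, and Williamson. Pick terminals $(s_S, s'_S)$ with $s_S \in S$, $s'_S \notin S'$, and $r(s_S, s'_S) = \rV(\hS)$, and define $(s_T, s'_T)$ analogously for $\hT$. Expanding the biset operations, for a pair $(u, v)$ with $u \in S$ and $v \notin S'$ one obtains the following separation rules: $\hS \cap \hT$ separates $(u, v)$ iff $u \in T$; $\hS \cup \hT$ separates $(u, v)$ iff $v \notin T'$; $\hS \setminus \hT$ separates $(u, v)$ iff $u \notin T'$; and $\hT \setminus \hS$ separates $(u, v)$ iff $v \in T$. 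Analogous rules hold for $(s_T, s'_T)$ with the roles of $\hS$ and $\hT$ swapped.

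The consequence I would use is that a pair separated by $\hS$ and $\hT$ with the ``same orientation'' (its $\hS$-inner endpoint also lies in $T$ and its $\hS$-outer endpoint lies outside $T'$) is simultaneously separated by $\hS \cap \hT$ and $\hS \cup \hT$, while an ``opposite orientation'' pair is simultaneously separated by $\hS \setminus \hT$ and $\hT \setminus \hS$. Each non-degenerately placed realizing pair therefore contributes its full requirement to both bisets on exactly one of the two sides of the biuncrossability inequality. I would then split on $\rV(\hS) + \rV(\hT) \in \set{2, 3, 4}$: when the sum is $2$ or $3$, a single same-orientation contribution from either realizing pair already produces the desired inequality on that side; when the sum equals $4$, so that $\rV(\hS) = \rV(\hT) = 2$, the argument must show that the two realizing pairs can be chosen so their contributions fall on the same side, which follows from an orientation-compatibility argument between the two pairs.

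The main obstacle is the degenerate boundary case in which an endpoint of a realizing pair lies in $\bd(\hT)$ or $\bd(\hS)$, because then that pair need not be separated by any of the four derived bisets. This is resolved by exploiting the augmentation-framework restriction (analogous to Lemma~\ref{lem:aug-func-bd} in the element-connectivity setting) that the bisets of interest have boundaries disjoint from the requirement-$2$ terminals; with this invariant the orientation dichotomy applies to the realizing pair, and a parallel argument on the other realizing pair closes the $\rV(\hS) = \rV(\hT) = 2$ case and completes the proof.
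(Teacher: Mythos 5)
Your separation rules and orientation dichotomy are correct, but the resolution you propose for the degenerate boundary case does not work, and that case is a real gap. The proposition asserts biuncrossability of $\rV$ on all of $\mPV$, the family of \emph{all} bisets over $V$; in the vertex-connectivity setting a boundary $\bd(\hT)$ is allowed to contain terminals. The invariant you invoke (that the relevant boundaries avoid the requirement-$2$ terminals, analogous to \lemref{aug-func-bd-vc}) is a property of the \emph{violated} bisets of the residual function $h$ relative to the first-stage solution, not a property of an arbitrary biset on which $\rV$ is positive, so it cannot be used to close a proof of this proposition as stated.

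In fact the degenerate case genuinely defeats the inequality over $\mPV$: take $V = \{a,b,c,d\}$ with $r(a,b) = r(c,d) = 2$ and all other requirements zero, and set $\hS = (\{a\}, \{a,c\})$, $\hT = (\{c\}, \{a,c\})$. Then $\rV(\hS) = \rV(\hT) = 2$, while $\hS \cap \hT$, $\hS \setminus \hT$, and $\hT \setminus \hS$ each have empty inner part (hence $\rV = 0$) and $\rV(\hS \cup \hT) = 2$, so both candidate sides of the biuncrossability inequality are at most $2 < 4$. The realizing pairs $(a,b)$ and $(c,d)$ are exactly your double-degenerate configuration, since $a \in \bd(\hT)$ and $c \in \bd(\hS)$. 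The paper gives no proof, only the citation to Lemma~5.1 of Fleischer, Jain and Williamson; that lemma is established for bifamilies, such as $\mPE$, whose boundaries contain no terminals --- precisely the hypothesis that makes your degenerate case vanish --- and the proposition as written over $\mPV$ needs that or a comparable restriction to be correct. (A smaller point: your sum-$4$ case does not actually need an ``orientation-compatibility argument'' between the two realizing pairs; a single non-degenerate realizing pair of requirement $2$ already forces $\rV \geq 2$ on both derived bisets that separate it, yielding $4$ on one side.)
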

\begin{proposition}\label{prop:vc-bimaximal}
  The funtion $\rV$ is bimaximal.
\end{proposition}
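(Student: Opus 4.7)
The plan is to mimic exactly the proof of Proposition~\ref{prop:elem-sndp-requirement} (bimaximality of $\rE$), since the only relevant structural fact is that $\rV(\hS)$ is defined as a max of $r(uv)$ over pairs separated by $\hS$, and this definition is formally identical to the one used for $\rE$.

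First I would fix two bisets $\hS, \hT \in \mPV$ whose inner parts are disjoint, and choose a terminal pair $(s,t)$ realizing $\rV(\hS \cup \hT)$, that is, with $s \in S \cup T$, $t \in V \setminus (S' \cup T')$, and $\rV(\hS \cup \hT) = r(s,t)$. Without loss of generality assume $s \in S$ (the case $s \in T$ is symmetric by swapping roles of $\hS$ and $\hT$). Then I would observe that since $V \setminus (S' \cup T') \subseteq V \setminus S'$, the vertex $t$ lies in $V \setminus S'$, so the pair $(s,t)$ is separated by $\hS$ itself. Hence
\begin{equation*}
\rV(\hS \cup \hT) = r(s,t) \leq \rV(\hS) \leq \max\{\rV(\hS), \rV(\hT)\},
\end{equation*}
which is exactly the bimaximality inequality~\eqref{eqn:bimaximal}.

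There is essentially no obstacle here: the argument does not use the specific value set $\{0,1,2\}$ and does not use vertex-connectivity-specific facts; it only uses that $\rV$ is defined as a ``maximum requirement over separated pairs.'' Since $\mPV$ is closed under union (it is just the set of all bisets on $V$), the biset $\hS \cup \hT$ lies in $\mPV$ and the displayed chain of inequalities is well-defined. So the entire proof is two or three lines, and the structure is identical to that already spelled out for $\rE$ in Proposition~\ref{prop:elem-sndp-requirement}.
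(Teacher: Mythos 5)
Your proof is correct and is essentially identical to the one in the paper: both choose a pair $(s,t)$ realizing $\rV(\hS\cup\hT)$, assume WLOG $s\in S$, note $t\in V\setminus(S'\cup T')\subseteq V\setminus S'$ so $\hS$ separates $(s,t)$, and conclude $\rV(\hS\cup\hT)\le\rV(\hS)\le\max\{\rV(\hS),\rV(\hT)\}$. (The paper's proof, like yours, never actually uses the disjoint-inner-parts hypothesis, which is fine since the inequality holds regardless.)
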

\begin{proof}
Let $\hS, \hT \in \mPV$ and let $(s, t)$ be a pair
of terminals that have the maximum connectivity requirement among all
terminal pairs separated by $\hS \cup \hT$, i.e., $\rV(\hS \cup \hT) =
r(s, t)$. Since $s \in S \cup T$, we have $s \in S$ or $s \in T$;
without loss of generality, assume $s \in S$. Since $t \in V \setminus (S' \cup T')$,
the pair $(s, t)$ is separated by $\hS$ and thus $\rV(\hS \cup \hT) \leq
\rV(\hS) \leq \max\set{\rV(\hS), \rV(\hT)}$.
\end{proof}
\begin{lemma}\label{lem:vertex-uncrossable}
	The function $h$ is biuncrossable.
\end{lemma}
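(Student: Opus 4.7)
My plan is to reduce biuncrossability of $h$ to the biuncrossability of $\rV$ (Proposition~\ref{prop:vc-supermodular}) combined with the bisubmodularity of $\card{\delta_F(\cdot)} + \card{\bd(\cdot)}$. Fix $\hS, \hT \in \mPV$ with $h(\hS) = h(\hT) = 1$, so $\rV(\hS) = \rV(\hT) = 2$ and $\card{\delta_F(\hS)} + \card{\bd(\hS)} = \card{\delta_F(\hT)} + \card{\bd(\hT)} = 1$. Since $\rV$ is biuncrossable and bounded by $2$, the inequality $\rV(\hS) + \rV(\hT) = 4$ forces one of the two pairs---either $\{\hS \cap \hT, \hS \cup \hT\}$ or $\{\hS \setminus \hT, \hT \setminus \hS\}$---to have both members equal to $2$ under $\rV$.

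Next I would use the first-stage guarantee that $F$ is a node-weighted Steiner forest connecting every pair $(u,v)$ with $r(u,v) \geq 1$. By Menger's theorem (for vertex connectivity, $k=1$), any biset $\hR \in \mPV$ with $\rV(\hR) \geq 1$ separates some such pair and therefore satisfies $\card{\delta_F(\hR)} + \card{\bd(\hR)} \geq 1$. In the chosen pair both members have $\rV = 2$, so each of the two quantities $\card{\delta_F} + \card{\bd}$ is at least $1$, giving a lower bound of $2$ on their sum.

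For the matching upper bound I would invoke Proposition~\ref{prop:boundary-bisubmodular} and Lemma~\ref{lem:delta-bisubmod}, which say that both $\card{\bd(\cdot)}$ and $\card{\delta_F(\cdot)}$ are bisubmodular on $\mPV$; hence so is their sum, and it satisfies the inequality corresponding to whichever of the two uncrossing operations was selected. Since $\card{\delta_F(\hS)} + \card{\bd(\hS)} + \card{\delta_F(\hT)} + \card{\bd(\hT)} = 2$, bisubmodularity yields
\[
\bigl(\card{\delta_F(\hA)} + \card{\bd(\hA)}\bigr) + \bigl(\card{\delta_F(\hB)} + \card{\bd(\hB)}\bigr) \leq 2,
\]
where $(\hA,\hB)$ denotes the chosen pair. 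Combined with the lower bound of $1$ on each summand, both summands equal exactly $1$.

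Combining the two facts, each of $\hA,\hB$ has $\rV = 2$ and $\card{\delta_F} + \card{\bd} = 1$, so $h(\hA) = h(\hB) = 1$, and therefore $h(\hA) + h(\hB) = 2 = h(\hS) + h(\hT)$, establishing the biuncrossable inequality in the appropriate direction. There is no serious obstacle here; the only thing to be slightly careful about is that $\hS \cap \hT$ etc.\ remain in $\mPV$, which is automatic because $\mPV$ is the family of \emph{all} bisets over $V$ and is therefore trivially a crossing bifamily closed under the three operations.
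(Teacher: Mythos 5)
Your proof is correct and matches the paper's own argument essentially step for step: use biuncrossability of $\rV$ plus the cap of $2$ to force both members of the chosen pair to have $\rV=2$, use the fact that $F$ connects all pairs with positive requirement (equivalently Menger for $k=1$) to lower bound $\card{\delta_F}+\card{\bd}$ by $1$ on those bisets, and use bisubmodularity of $\card{\delta_F(\cdot)}+\card{\bd(\cdot)}$ for the matching upper bound. Your write-up is slightly more explicit in invoking Menger's theorem and in handling both uncrossing cases uniformly via the placeholder pair $(\hA,\hB)$, but it is the same proof.
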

\begin{proof}
By Lemma~\ref{lem:delta-bisubmod}, $\card{\delta_F(.)}$ is bisubmodular and by Proposition~\ref{prop:boundary-bisubmodular}, $\card{\bd(.)}$ is bisubmodular. If $h(\hS) = h(\hT) =1$ then
$\rV(\hS) = \rV(\hT) = 2$, and $\card{\delta_{F}(\hS)} + \card{\bd(\hS)} = \card{\delta_{F}(\hT)} + \card{\bd(\hT)} = 1$.

Since $\rV$ is biuncrossable (Proposition~\ref{prop:vc-supermodular}),
\begin{align*}
\rV(\hS) + \rV(\hT) \leq \max\set{\rV(\hS \cap \hT) + \rV(\hS \cup \hT), \rV(\hS \setminus \hT) + \rV(\hT \setminus \hS)}.
\end{align*}
WLOG, assume that $\rV(\hS) + \rV(\hT) \leq \rV(\hS \cap \hT) + \rV(\hS \cup \hT)$ which by the upper bound of $2$ on the connectivity requirements implies that $\rV(\hS \cap \hT) = \rV(\hS \cup \hT) = 2$.

Moreover, since $\card{\delta_F(.)}$ and $\card{\bd(.)}$ are both bisubmodular, $\card{\delta_F(\hS)} + \card{\bd(\hS)} + \card{\delta_F(\hT)} + \card{\bd(\hT)} \geq \card{\delta_F(\hS\cap\hT)} + \card{\bd(\hS\cap\hT)} + \card{\delta_F(\hS\cup\hT)} + \card{\bd(\hS\cup\hT)}$. Since the edge set $F$ connects all pair of terminals with non zero connectivity requirements, both $\card{\delta_F(\hS\cap\hT)} + \card{\bd(\hS\cap\hT)},  \card{\delta_F(\hS\cup\hT)} + \card{\bd(\hS\cup\hT)}$ are at least $1$. Hence, $\card{\delta_F(\hS\cap\hT)} + \card{\bd(\hS\cap\hT)} = \card{\delta_F(\hS\cup\hT)} + \card{\bd(\hS\cup\hT)} = 1$.

The other case holds similarly and thus $h$ is biuncrossable.
\end{proof}

Next, we prove an analogues of Lemmas~\ref{lem:aug-func-bd} and
\ref{lem:aug-func-minimal}
which show that $h$ satisfies the key properties that
allowed us to use and analyze the primal-dual algorithm
from Section~\ref{sec:main-result}.

\begin{lemma}
  \label{lem:aug-func-bd-vc}
  For any $X \supset X_1$ let $H'_X = (V, E_{G'}[X])$ be a subgraph
  of $G'$. Suppose $\hC$ is a violated biset
  of $H'_X$ with respect to $h$. Then
  $\bd(\hC) \subseteq X$.
\end{lemma}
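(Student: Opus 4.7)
The plan is to mirror the argument used in the Elem-SNDP setting (Lemma~\ref{lem:aug-func-bd}), replacing the previous-phase cover $H_{\ell-1}$ by the first-stage Steiner forest cover $F = E[X_1]$. Suppose, for contradiction, that $\hC$ is violated with respect to $H'_X$ and that some $u \in \bd(\hC)$ lies outside $X$. Since $X_1 \subseteq X$, we have $u \notin X_1$, and because $F$ consists only of edges with both endpoints in $X_1$, the vertex $u$ is incident to no edge of $F$.

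The next step is to ``shrink'' the boundary of $\hC$ by removing $u$. Define $\hC_1 = (C, C' \setminus \set{u})$; since $u \in \bd(\hC) = C' \setminus C$, we have $u \notin C$, so $\hC_1$ is a valid biset with $\bd(\hC_1) = \bd(\hC) \setminus \set{u}$ and $\card{\bd(\hC_1)} = \card{\bd(\hC)} - 1$. Because $u$ has no edge in $F$, every edge of $F$ that crosses $\hC_1$ also crosses $\hC$ and vice versa, so $\delta_F(\hC_1) = \delta_F(\hC)$. Moreover, since $h(\hC) = 1$ forces $\rV(\hC) = 2$, there is a pair $(s,t)$ with $r(s,t) = 2$, $s \in C$, $t \in V \setminus C'$; as $t \notin C' \setminus \set{u}$, this same pair is separated by $\hC_1$, so $\rV(\hC_1) = 2$.

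Combining these observations with $\card{\delta_F(\hC)} + \card{\bd(\hC)} = 1$ (from $h(\hC) = 1$) yields $\card{\delta_F(\hC_1)} + \card{\bd(\hC_1)} = 0$. But $\rV(\hC_1) = 2 \geq 1$ means there is a pair $(s,t)$ with $r(s,t) \geq 1$ separated by $\hC_1$, and by Menger's theorem for vertex connectivity applied to the graph $(V, F)$, the vanishing cut-size $\card{\delta_F(\hC_1)} + \card{\bd(\hC_1)} = 0$ implies $s$ and $t$ are not even $1$-vertex-connected in $F$. This contradicts the fact that $X_1$ was chosen in the first stage precisely to connect all pairs with $r \geq 1$, completing the proof.

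The only genuinely new ingredient relative to Lemma~\ref{lem:aug-func-bd} is ensuring that $\rV(\hC_1) = 2$ is preserved when we shrink the outer part; this is immediate from the bimaximality/monotonicity observation that removing a vertex from $C'$ can only enlarge the set of pairs separated by the biset. I do not anticipate a serious obstacle here, since all the pieces---$u$ being isolated in $F$, the sum $\card{\delta_F} + \card{\bd}$ decreasing by exactly one, and $F$ covering all $r \geq 1$ pairs---are direct analogues of the Elem-SNDP case.
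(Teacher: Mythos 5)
Your proof is correct and follows essentially the same route as the paper's: shrink the outer part by removing the out-of-$X$ boundary vertex $u$, observe that $\delta_F$ is unchanged while $|\bd(\cdot)|$ drops by one, and derive that the resulting biset certifies that $X_1$ fails to $1$-connect some required pair. The only cosmetic difference is how $\rV(\hC_1)=2$ is justified (you note that shrinking the outer part can only enlarge the set of separated pairs, while the paper invokes that $u$ is not a terminal); both are valid.
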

\begin{proof}
  We use $H'$ in place of $H'_X$ for ease of notation.  Suppose $\hC$
  is a violated biset with respect to $h$ in $H'$ and there is a
  vertex $u \in \bd(\hC)$ such that $u \not \in X$.  Consider the
  biset $\hC_1 = (C, \bd(\hC) \setminus \{u\})$. Note that $u$ is not
  a terminal and hence $\rV(\hC_1) = \rV(\hC) = 2$.  Since $\hC$ is a
  violated biset in $H'$ we have
  $\card{\bd(\hC)} + \delta_{H'}(\hC) = 1$ and since
  $\card{\bd(\hC)} \ge 1$ (since $u$ is in the boundary) we have
  $\delta_{H'}(\hC) = \emptyset$. Consider the bisets $\hC$ and
  $\hC_1$ in the graph $H_1 = (V, E[X_1])$. Since $\rV(\hC) = 2$
  and $\hC$ is violated in $H'$,
  we have $\card{\bd(\hC)} + \card{\delta_{H_1}(\hC)} = 1$ but then
  $\card{\bd(\hC_1)} + \card{\delta_{H_1}(\hC_1)} = 0$ since
  $u$ has no edges incident to it in $H_1$. Since $\rV(\hC_1) = 2$
  this implies that $H_1$ is not a feasible solution to $1$-connect
  the terminals in the first stage, a contradiction.
\end{proof}

\begin{lemma}
  \label{lem:aug-func-minimal-vc}
  For any $X \supset X_1$ let $H'_X = (V, E_{G'}[X])$ be a subgraph
  of $G'$. Suppose $\hC$ is a \emph{minimal} violated biset
  of $H'_X$ with respect to $h$. Then the following properties
  hold.
  \begin{itemize}
  \item $C' \subseteq X$.
  \item $G[C]$ is a connected subgraph of $G$.
  \end{itemize}
\end{lemma}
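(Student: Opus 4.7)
The plan is to mirror the strategy used for Lemma~\ref{lem:aug-func-minimal}, adapting it to the biset function $h$ defined for $\{0,1,2\}$-\prob{VC-SNDP}. Throughout, write $H'$ for $H'_X$ and use that $\hC$ being violated means $h(\hC)=1$ and $\delta_{H'}(\hC)=\emptyset$; equivalently, $\rV(\hC)=2$ and $\card{\delta_F(\hC)}+\card{\bd(\hC)}=1$. All terminals (for stage two) lie in $X_1\subseteq X$, and by Lemma~\ref{lem:aug-func-bd-vc} we already have $\bd(\hC)\subseteq X$.

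For the first claim, suppose some $u\in C'$ satisfies $u\notin X$. Lemma~\ref{lem:aug-func-bd-vc} forces $u\in C$. Since $u\notin X_1$, vertex $u$ is not a terminal, and it is isolated in both $H'$ (whose edges lie inside $X$) and in the stage-one graph $(V,F)$ (whose edges lie inside $X_1$). Form the biset $\hC_1=(C\setminus\{u\},\,C'\setminus\{u\})$. Because $u$ is not a terminal, $\rV(\hC_1)=\rV(\hC)=2$; because $u$ is isolated in $F$ and $\bd(\hC_1)=\bd(\hC)$, we get $\card{\delta_F(\hC_1)}+\card{\bd(\hC_1)}=\card{\delta_F(\hC)}+\card{\bd(\hC)}=1$, so $h(\hC_1)=1$. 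Since $u$ is also isolated in $H'$, $\delta_{H'}(\hC_1)=\delta_{H'}(\hC)=\emptyset$, making $\hC_1$ violated. But $\hC_1\subsetneq\hC$, contradicting minimality.

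For the second claim, suppose $G[C]$ is disconnected and split $C=C_1\sqcup C_2$ so that no edge of $G$ joins $C_1$ to $C_2$. Define $\hC_i=(C_i,\,C_i\cup\bd(\hC))$ for $i=1,2$; their inner parts are disjoint and $\hC_1\cup\hC_2=\hC$. By bimaximality of $\rV$ (Proposition~\ref{prop:vc-bimaximal}), $\rV(\hC)\le\max\{\rV(\hC_1),\rV(\hC_2)\}$, so WLOG $\rV(\hC_1)\ge 2$. Since $F\subseteq E(G)$ and $H'\subseteq G$ carry no edges between $C_1$ and $C_2$, every edge in $\delta_F(\hC_1)$ (resp.\ $\delta_{H'}(\hC_1)$) must go to $V\setminus C'$, giving $\delta_F(\hC_1)\subseteq\delta_F(\hC)$ and $\delta_{H'}(\hC_1)\subseteq\delta_{H'}(\hC)=\emptyset$. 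Combined with $\bd(\hC_1)=\bd(\hC)$, this yields $\card{\delta_F(\hC_1)}+\card{\bd(\hC_1)}\le 1$.

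The main obstacle is arguing the reverse inequality $\card{\delta_F(\hC_1)}+\card{\bd(\hC_1)}\ge 1$, so that $h(\hC_1)=1$ and $\hC_1$ is a strictly smaller violated biset contradicting minimality of $\hC$. For this I would invoke the first-stage guarantee: since $\rV(\hC_1)\ge 1$, the biset $\hC_1$ separates a pair of stage-two terminals with requirement at least $1$, and $G[X_1]$ connects every such pair; Menger's theorem (applied with $F=E[X_1]$) therefore ensures at least one element of $F$ or of $\bd(\hC_1)$ crosses $\hC_1$. This completes the contradiction and the proof. The only delicate point is keeping track of $F$ versus $H'$: $F$ supplies the first-stage connectivity that makes $h$ well-behaved, while $\delta_{H'}(\hC)=\emptyset$ is what gives the violated-biset hypothesis; both are needed simultaneously when transferring violation from $\hC$ to $\hC_1$.
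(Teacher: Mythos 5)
Your proof is correct and follows essentially the same approach as the paper's: you remove a node outside $X$ from the biset for the first claim, and split a disconnected $C$ into $C_1, C_2$ and invoke bimaximality of $\rV$ for the second. In fact your argument is somewhat more careful than the paper's terse version, since you explicitly verify both directions of the equality $\card{\delta_F(\hC_1)}+\card{\bd(\hC_1)}=1$ (using the first-stage Steiner forest guarantee for the lower bound), a step the paper leaves implicit.
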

\begin{proof}
  For ease of notation we let $H'$ denote the graph $H'_X$.  Since
  $\hC$ is a violated biset in $H'$ we have $h(\hC) = 1$ and
  $|\delta_{H'}(\hC)| = 0$. From the definition of $h$, $\rV(\hC) = 2$
  and $\card{\bd(\hC)} +\card{\delta_{G[X]}(\hC)} = 1$.  Suppose there
  is a vertex $u \in C'$ such that $u \not \in X$. By Lemma~\ref{lem:aug-func-bd-vc}, $u\in C$. 
  First, $u$ is not a
  terminal since all terminals are in $X_1$ (and hence in $X$). Second
  $u$ is an isolated vertex in $H'$ since the only edges in $H'$ are
  between nodes in $X$. Consider the biset $\hC_1 = (C -u, C'-u)$
  obtained from $C$ by removing $u$.  Since $u$ is not a terminal we
  have $\rV(\hC) = \rV(\hC_1)$.  And since $u$ is isolated in $H'$ we
  have $\delta_{H'}(\hC_1) = \delta_{H'}(\hC)$, and
  $\bd(\hC_1) \subseteq \bd(\hC)$.  These facts imply that
  $h\hC_1) = 1$ and $\hC_1$ is a violated biset in $H'$. This
  contradicts minimality of $\hC$. Therefore $C' \subseteq X$.

  We now prove that $G[C]$ is connected. For sake of contradiction
  suppose it is not. Let $C_1, C_2$ be two non-empty sets that
  partition $C$ such that there is no edge between $C_1$ and $C_2$ in
  $G$; such a partition exists if $G[C]$ is not connected.  Note that
  $E_{H'}(C_1,C_2) = \emptyset$ since $H'$ is a subgraph of $G$.
  Define $\hC_1=(C_1, C_1 \cup \bd(\hC))$ and
  $\hC_2=(C_2, C_2 \cup \bd(\hC))$.  Since $\rV$ is bimaximal,
  $\rV(\hC) \leq \max\set{\rV(\hC_1), \rV(\hC_2)}$.  Thus, without loss of
  generality we can assume that $\rV(\hC_1) \geq \rV(\hC) \geq 2$.
  Since $E_{H'}(C_1,C_2) = \emptyset$ we have $\delta_{H'}(\hC_1)
  \subseteq \delta_{H'}(\hC)$. Since $\bd(\hC_1) = \bd(\hC)$ and
  $\hC$ was a violated biset it follows that $\hC_1$ is also
  a violated biset with respect to $h$ in $H'$. This contradicts
  the minimality of $\hC$.
\end{proof}

For any $X \supseteq X_1$ finding minimal violated bisets of $h$
with respect to the graph $H'_X$ can be easily done in polynomial time
via maxflow algorithms.

The function $h$ satisfies the same properties as those that arise in
the augmentation framework for \prob{Elem-SNDP} and hence we can apply
the primal-dual algorithm and analysis as captured by
Theorem~\ref{thm:main}. The algorithm outputs a node set $Q$ such that
$G[Q]$ covers $h$ and $w(Q \setminus X) = O(1)\opt$.  Since
$w(X) = O(1) \opt$ we have that $w(Q \cup X) = O(1) \opt$.  For planar
graphs we can obtain a concrete upper bound of $13 \opt$ using the
$3$-approximation for the first stage and a $10$-approximation for the
second stage.

\section{Concluding Remarks}
\label{sec:conclusions}
We obtained approximation algorithms for node-weighted network design
in planar and minor-closed families of graphs when the connectivity
requirements are larger than one. We built upon the insights from
\cite{DemaineHK14} as well as prior work via the augmentation
framework for \prob{SNDP}. The analysis of the primal-dual algorithm
that we present is probably not tight and it would be interesting to
obtain the tightest bound one can prove for the algorithm. It may be
possible to borrow ideas from \cite{BermanY12} and alter the algorithm
to obtain improved approximation ratios. For general \prob{VC-SNDP}
we obtain an improvement over the general graph case via our algorithm
for \prob{EC-SNDP} and a black-box reduction of \cite{ChuzhoyK12}.
For two important special cases of \prob{VC-SNDP}, namely
\prob{Rooted-VC-SNDP} and \prob{Subset-VC-SNDP},
$O(k \log k)$-approximations are known in the edge-weighted case
\cite{Nutov12,Bundit15} and the node-weighted case requires an
additional $O(\log n)$-factor. It would be interesting to
show that this additional factor is unnecessary
in planar graphs --- we note that the
results in \cite{Nutov12,Bundit15} are based on the augmentation
framework and hence some of our ideas  may be applicable.

Finally, it is an interesting question whether there is an $O(1)$-approximation 
for node-weighted \prob{EC-SNDP} and other network
design problems in planar graphs. Is the dependence on $k$ necessary?
Recall that for general graphs we expect a dependence on $k$ via the
hardness reduction from the \prob{$k$-Densest-Subgraph} problem
\cite{Nutov10}.  However, \prob{$k$-Densest-Subgraph} is
constant-factor approximable in planar graphs.

\bibliographystyle{plain}
\bibliography{nw-pc-sndp}
\appendix
\section{Omitted Proofs}
\subsection{Omitted Proofs of Section~\ref{sec:prelim}}
\label{subsec:omitted-preliminaries}

\begin{proofof}{Proposition~\ref{prop:boundary-bisubmodular}}
The proof follows from the following propositions.
\end{proofof}
\begin{prop}
For any two bisets $\hS$ and $\hT$,
$\card{\bd(\hS)} + \card{\bd(\hT)} \geq \card{\bd(\hS\cap\hT)} + \card{\bd(\hS\cup\hT)}.$
\end{prop}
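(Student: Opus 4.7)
The plan is to establish the inequality by a vertex-by-vertex indicator-function computation; in fact I expect to show that equality holds. For each $v \in V$, I would record a 4-bit pattern $(a, a', b, b') \in \{0,1\}^4$ indicating whether $v$ belongs to $S$, $S'$, $T$, $T'$ respectively, subject to the constraints $a \le a'$ and $b \le b'$ that encode $S \subseteq S'$ and $T \subseteq T'$. Since $\bd(\hS) = S' \setminus S$ and $\bd(\hT) = T' \setminus T$, the contributions of $v$ to $\card{\bd(\hS)}$ and $\card{\bd(\hT)}$ are exactly $a' - a$ and $b' - b$, so its total contribution to the left-hand side is $a' + b' - a - b$.

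Next I would use the definitions $\hS \cap \hT = (S \cap T, S' \cap T')$ and $\hS \cup \hT = (S \cup T, S' \cup T')$ together with the Boolean identities $[v \in X \cap Y] = [v \in X]\cdot[v \in Y]$ and $[v \in X \cup Y] = [v \in X] + [v \in Y] - [v \in X][v \in Y]$ to compute the contributions of $v$ to $\card{\bd(\hS \cap \hT)}$ and $\card{\bd(\hS \cup \hT)}$ as $a'b' - ab$ and $(a' + b' - a'b') - (a + b - ab)$ respectively. Adding these two expressions telescopes to $a' + b' - a - b$, exactly matching the contribution on the left. Summing over all $v \in V$ then yields the pointwise equality $\card{\bd(\hS)} + \card{\bd(\hT)} = \card{\bd(\hS \cap \hT)} + \card{\bd(\hS \cup \hT)}$, which is strictly stronger than the stated inequality.

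There is no substantive obstacle here. The only minor bookkeeping point is to verify that the set differences defining $\bd(\hS \cap \hT)$ and $\bd(\hS \cup \hT)$ are well-defined, i.e., that $S \cap T \subseteq S' \cap T'$ and $S \cup T \subseteq S' \cup T'$; both follow immediately from $S \subseteq S'$ and $T \subseteq T'$. This ensures that the arithmetic quantities $a'b' - ab$ and $(a' + b' - a'b') - (a + b - ab)$ are nonnegative and coincide with the indicator that $v$ lies in the respective boundary, making the per-vertex identity rigorous.
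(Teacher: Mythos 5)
Your proof is correct, and it actually establishes a \emph{stronger} statement than the paper claims: the inequality is an equality. The paper's own proof enumerates the three ways a vertex $v$ can contribute to the right-hand side ($v$ in exactly one of the two boundaries $\bd(\hS\cap\hT)$, $\bd(\hS\cup\hT)$, or in both) and checks case by case that the contribution to the left-hand side is at least as large. Your approach replaces this case split with a single algebraic identity on the indicator variables $a=[v\in S]$, $a'=[v\in S']$, $b=[v\in T]$, $b'=[v\in T']$: since $S\subseteq S'$ and $T\subseteq T'$ force the set differences defining the four boundaries to be genuine differences of nested sets, each boundary membership becomes a plain subtraction of indicators, and then $(a'b'-ab)+\bigl((a'+b'-a'b')-(a+b-ab)\bigr)$ telescopes to $(a'-a)+(b'-b)$. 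This gives the per-vertex identity outright, so the total equality follows by summation, with no case distinctions. Both routes exploit the same per-vertex decomposition, but yours is cleaner and reveals that $\card{\bd(\cdot)}$ is in fact \emph{modular} (not merely submodular) under intersection and union of bisets. Note, however, that the analogous statement for the difference operation (the other half of bisubmodularity, proved in the paper's companion proposition) is genuinely an inequality and not an equality, so the stronger conclusion here does not carry over there.
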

\begin{proof}
Consider a vertex $v$ that contributes to RHS. Then one of the following cases holds:
\begin{itemize}
	\vspace{-.1in}
	\item {$v \in \bd(\hS\cap\hT) \setminus \bd(\hS\cup\hT)$.}
	This implies that $v\in S' \cap T'$ and at least one of $S$
	and $T$ does not contain $v$. Thus $v\in \bd(\hS) \cup \bd(\hT)$.

	\vspace{-.1in}
	\item {$v \in \bd(\hS\cup\hT)\setminus \bd(\hS\cap\hT)$.}
	This implies that $v\notin S\cup T$ and at least one of $S'$
	and $T'$ contains $v$. Thus $v\in \bd(\hS) \cup \bd(\hT)$.

	\vspace{-.1in}
	\item {$v \in \bd(\hS\cap\hT) \cap \bd(\hS\cup\hT)$.}
	This implies that $v\in S'\cap T'$ and $v\notin S\cup T$. Thus $v\in\bd(\hS) \cap \bd(\hT)$.
\end{itemize}
Hence the contribution of each vertex to LHS is not less than its contribution to RHS and the statement holds.
\end{proof}

\begin{prop}
For any two bisets $\hS$ and $\hT$,
$\card{\bd(\hS)} + \card{\bd(\hT)} \geq \card{\bd(\hS\setminus\hT)} + \card{\bd(\hT\setminus\hS)}.$
\end{prop}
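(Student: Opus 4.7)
The plan is to verify the inequality element-by-element on the ground set $V$. Since both sides count vertices (with multiplicity one, as they are cardinalities of boundaries), it suffices to show that for every $v \in V$, the number of boundaries on the left-hand side that contain $v$ is at least the number of boundaries on the right-hand side that contain $v$. Each such count for a single vertex lies in $\set{0,1,2}$, so the verification is entirely finite.

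To set up the case analysis, I will classify the status of $v$ with respect to $\hS$ into one of three mutually exclusive cases: (i) $v \in S$, (ii) $v \in S' \setminus S$ (that is, $v$ lies in the boundary of $\hS$), or (iii) $v \notin S'$. I will classify $v$'s status with respect to $\hT$ analogously, giving nine combined cases. For each of the nine cases, I will use the defining identities
\[
\bd(\hS \setminus \hT) \;=\; (S' \setminus T) \setminus (S \setminus T'),
\qquad
\bd(\hT \setminus \hS) \;=\; (T' \setminus S) \setminus (T \setminus S'),
\]
which follow from the definition $\hS \setminus \hT = (S \setminus T',\, S' \setminus T)$, to compute whether $v$ is in each of the two right-hand-side boundaries. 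Unpacked, $v \in \bd(\hS \setminus \hT)$ iff $v \in S'$, $v \notin T$, and either $v \notin S$ or $v \in T'$; an analogous criterion holds for $\bd(\hT \setminus \hS)$.

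Running through the cases gives the following pattern: whenever $v$ lies entirely inside or entirely outside both of the ``closures'' $S'$ and $T'$, it contributes $0$ to both sides. Whenever $v$ is in the boundary of exactly one of $\hS, \hT$ and in the interior of the other or outside, it contributes exactly $1$ to both sides. The only slightly delicate case is when $v$ lies in both boundaries $S' \setminus S$ and $T' \setminus T$: here $v$ contributes $2$ to the left, and one checks directly from the unpacked criterion above that $v$ lies in both $\bd(\hS \setminus \hT)$ and $\bd(\hT \setminus \hS)$, contributing $2$ to the right. Thus in every case the per-vertex left contribution dominates the per-vertex right contribution (in fact equals it), and summing over $v \in V$ yields the claimed inequality.

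The main obstacle is purely bookkeeping: the definitions of the inner and outer parts of $\hS \setminus \hT$ are asymmetric ($S \setminus T'$ vs.\ $S' \setminus T$), so one must be careful to apply the right-hand set operation to the outer part and the left-hand to the inner part when determining boundary membership. Once this is set up correctly, the nine-case table is routine, and together with the preceding proposition it completes the proof that $\card{\bd(\cdot)}$ is bisubmodular (Proposition~\ref{prop:boundary-bisubmodular}).
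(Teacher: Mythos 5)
Your proposal is correct and takes essentially the same approach as the paper: a per-vertex counting argument showing that each $v\in V$ contributes at least as much to $\card{\bd(\hS)}+\card{\bd(\hT)}$ as to $\card{\bd(\hS\setminus\hT)}+\card{\bd(\hT\setminus\hS)}$. The only (minor) difference is that the paper organizes the case analysis by which right-hand boundaries $v$ belongs to, whereas you organize it by $v$'s position relative to $\hS$ and $\hT$; your version has the small bonus of showing the per-vertex contributions are in fact always \emph{equal}, so this particular inequality is an identity.
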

\begin{proof}
Consider a vertex $v$ that contributes to RHS. Then it would be one the following cases:
\begin{itemize}
	\vspace{-.1in}
	\item {$v \in \bd(\hS\setminus\hT) \setminus \bd(\hT\setminus\hS)$.}
	This implies that $v\in S'\setminus T$. If we have also $v\notin S$ then
	$v\in \bd(\hS)$. Otherwise; since $v\notin (S\setminus T')$, we must
	have $v\in T'$. Thus $v\in \bd(\hT)$.

	\vspace{-.1in}
	\item {$v \in \bd(\hT\setminus \hS) \setminus \bd(\hS\setminus\hT)$.}
	Similar to the above case either $v\in \bd(\hS)$ or $v\in \bd(\hT)$.

	\vspace{-.1in}
	\item {$v \in \bd(\hT\setminus \hS) \cap \bd(\hS\setminus\hT)$.}
	This implies that $v\in T'\setminus S$ and $v\in S'\setminus T$.
	Thus $v\in \bd(\hS) \cap \bd(\hT)$.
\end{itemize}
Hence the contribution of a vertex to LHS is at least the contribution of the vertex to RHS and the statement holds.
\end{proof}

\subsection{Proof of Lemma~\ref{lem:pair-laminar-witness-family}}
Our first observation is that, if $M$ is a set of non-redundant edges, we
can pick a witness biset for each edge of $M$.





\begin{lemma} [Lemma $4.1$ in~\cite{FleischerJW06}] \label{lem:pair-uncrossing}
	Let $\hS_{e_1}, \hS_{e_2}$ be $F$-witness bisets for $e_1$
	and $e_2$, respectively. Then one of the following holds:
	\begin{enumerate}[(i)]
		\item The bisets $\hS_{e_1} \cap \hS_{e_2}$ and $\hS_{e_1}
		\cup \hS_{e_2}$ are $F$-witness bisets for distinct edges in
		$\set{e_1, e_2}$.
		\item The bisets $\hS_{e_1} \setminus \hS_{e_2}$ and $\hS_{e_2} \setminus
		\hS_{e_1}$ are $F$-witness bisets for distinct edges in
		$\set{e_1, e_2}$.
	\end{enumerate}
\end{lemma}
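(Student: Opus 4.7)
The plan is to derive the lemma from two ingredients already available: the biuncrossability of $h$ applied to $\hS_{e_1}, \hS_{e_2}$ (both of which have $h$-value $1$), and the bisubmodularity of the edge-cut function $|\delta_F(\cdot)|$ (Lemma~\ref{lem:delta-bisubmod}). Biuncrossability immediately forces one of the two outcomes at the level of $h$-values, and bisubmodularity then lets me argue at the level of cuts that each uncrossed biset witnesses exactly one distinct edge in $\{e_1,e_2\}$.

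First I would invoke the definition of $F$-witness biset to record $h(\hS_{e_1}) = h(\hS_{e_2}) = 1$ and $\delta_F(\hS_{e_1}) = \{e_1\}$, $\delta_F(\hS_{e_2}) = \{e_2\}$. Since $h$ is $\{0,1\}$-biuncrossable, one of the inequalities in the definition of biuncrossable must hold; because $h$ is $\{0,1\}$-valued and the right-hand side is $2$, this forces either $h(\hS_{e_1}\cap \hS_{e_2}) = h(\hS_{e_1}\cup\hS_{e_2}) = 1$ (call this case (i)) or $h(\hS_{e_1}\setminus \hS_{e_2}) = h(\hS_{e_2}\setminus\hS_{e_1}) = 1$ (case (ii)). In particular, in the respective cases each of the two new bisets is violated, so has at least one edge of $F$ in its cut.

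Next I would apply Lemma~\ref{lem:delta-bisubmod}: in case (i),
\[
2 = |\delta_F(\hS_{e_1})| + |\delta_F(\hS_{e_2})| \;\ge\; |\delta_F(\hS_{e_1}\cap \hS_{e_2})| + |\delta_F(\hS_{e_1}\cup \hS_{e_2})| \;\ge\; 1+1 = 2,
\]
and analogously in case (ii). Thus both inequalities are tight, and both new cuts have size exactly one. Moreover, because bisubmodularity is proved edge-by-edge (each edge's contribution to the left side dominates its contribution to the right side), equality of the sums forces equality edge-by-edge. Any edge not in $\{e_1,e_2\}$ contributes $0$ to the LHS, hence $0$ to the RHS as well; so the two new cuts together contain only edges from $\{e_1,e_2\}$, each appearing a total of once across the two cuts. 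Since each cut has size exactly one and the total multiplicity is two split as $1{+}1$, the two cuts must be $\{e_1\}$ and $\{e_2\}$ in some order; these are the required witness bisets for distinct edges of $\{e_1,e_2\}$.

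The only step that requires mild care is checking that bisubmodularity of $|\delta_F(\cdot)|$ holds separately for the intersection/union inequality and the difference inequality (both versions are asserted in Lemma~\ref{lem:delta-bisubmod}), so that the edge-by-edge equality argument can be applied in whichever case biuncrossability selects. Once this is in hand, the uncrossed bisets automatically inherit the witness property, and the proof is complete.
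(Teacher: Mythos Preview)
The paper does not actually prove this lemma; it is quoted verbatim from \cite{FleischerJW06} (as the bracketed attribution indicates) and used as a black box in the proof of Lemma~\ref{lem:pair-laminar-witness-family}. So there is no in-paper proof to compare against.

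That said, your argument is correct and is essentially the standard one. A minor point worth tightening: when you invoke ``bisubmodularity is proved edge-by-edge'' to conclude that only $e_1,e_2$ can appear in the uncrossed cuts, you are appealing to a property slightly stronger than the bare statement of Lemma~\ref{lem:delta-bisubmod}. The clean way to phrase it is that for each single edge $e$ the function $|\delta_{\{e\}}(\cdot)|$ is itself bisubmodular, and $|\delta_F| = \sum_{e\in F}|\delta_{\{e\}}|$; tightness in the summed inequality therefore forces tightness term-by-term. With that spelled out, the rest of your pigeonhole (each of $e_1,e_2$ lands in exactly one of the two new cuts, each of which has size exactly one) goes through cleanly and yields the required distinct witnesses.
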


Using the following lemma, we can show that, if we replace two
overlapping witness bisets with the witness bisets guaranteed by
Lemma~\ref{lem:pair-uncrossing}, the number of pairs of overlapping bisets
decreases and thus we are making progress towards a laminar family of witness bisets.

\begin{lemma}[Lemma 4.3 in~\cite{FleischerJW06}] \label{lem:pair-uncrossing-progress}
	Let $\hS_1$, $\hS_2$ be two overlapping bisets and $\hT$ be an arbitrary biset.
	Then the number of
	pairs of $\set{(\hS_1, \hS_2), (\hS_1, \hT), (\hS_2, \hT)}$ that overlap
	is at least the number of pairs of $\set{(\hS_1 \cap \hS_2, \hS_1 \cup
	\hS_2), (\hS_1 \cap \hS_2, \hT), (\hS_1 \cup \hS_2, \hT)}$ that overlap.
	Similarly, the number of pairs of $\set{(\hS_1, \hS_2), (\hS_1,
	\hT), (\hS_2, \hT)}$ that overlap is at least the number of pairs
	of $\set{(\hS_1 \setminus \hS_2, \hS_2 \setminus \hS_1), (\hS_1 \setminus \hS_2, \hT), (\hS_2 \setminus \hS_1,
	\hT)}$ that overlap.
\end{lemma}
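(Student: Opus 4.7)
The plan is to prove both parts by the same two-step template: first, observe that the two ``uncrossed'' bisets are automatically non-overlapping, eliminating the single overlap between $\hS_1$ and $\hS_2$; second, show that the third biset $\hT$ cannot gain overlaps as a result of the uncrossing. Concretely, I will prove the following stronger auxiliary claim, which then implies each inequality: \emph{if $\hS_1$ and $\hS_2$ overlap and $\hT$ is non-overlapping with both $\hS_1$ and $\hS_2$, then $\hT$ is non-overlapping with each of $\hS_1 \cap \hS_2$, $\hS_1 \cup \hS_2$, $\hS_1 \setminus \hS_2$ and $\hS_2 \setminus \hS_1$.} Granted this claim, the inequality in the lemma is immediate: the original pair count is $1 + \#\{\text{overlapping pairs among }(\hS_i, \hT)\}$, and after uncrossing the $(\hS_1\cap\hS_2, \hS_1\cup\hS_2)$ pair (resp.\ the $(\hS_1\setminus\hS_2,\hS_2\setminus\hS_1)$ pair) contributes $0$, while the number of $\hT$-overlaps can only stay the same or decrease by the claim (applied contrapositively).

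For step one I would verify directly from \defref{Non-overlapping Bisets}: the pair $(\hS_1\cap\hS_2,\hS_1\cup\hS_2)$ satisfies condition (i) since $\hS_1\cap\hS_2\subseteq\hS_1\cup\hS_2$, and the pair $(\hS_1\setminus\hS_2,\hS_2\setminus\hS_1)$ satisfies condition (ii): the inner and outer parts of $\hS_1\setminus\hS_2$ lie respectively in $V\setminus S_2'$ and $V\setminus S_2$, so intersecting with the outer/inner parts of $\hS_2\setminus\hS_1$ (which are contained in $S_2'$ and $S_2$) yields $\emptyset$.

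Step two is the main obstacle, and I would proceed by a case analysis on the type of non-overlap between $\hT$ and each of $\hS_1,\hS_2$. Each $\hS_i$ either (a) satisfies $\hS_i\subseteq\hT$, (b) satisfies $\hT\subseteq\hS_i$, or (c) is boundary-disjoint from $\hT$, i.e.\ $S_i'\cap T=\emptyset=S_i\cap T'$. This gives nine combined cases, several of which are ruled out by the hypothesis that $\hS_1$ and $\hS_2$ overlap. For instance, (a,b) forces $\hS_1\subseteq\hS_2$, a contradiction; and the mixed case (a,c) — say $\hS_1\subseteq\hT$ and $\hS_2$ boundary-disjoint from $\hT$ — forces $S_1'\cap S_2\subseteq T'\cap S_2=\emptyset$ and $S_1\cap S_2'\subseteq T\cap S_2'=\emptyset$, making $\hS_1,\hS_2$ boundary-disjoint, again contradicting the overlap hypothesis. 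The tricky surviving cases are the symmetric ones (a,a), (b,b), (c,c) and the mixed (b,c) case, which must be checked by direct computation. For example in (c,c) we have $(S_1\cup S_2)'\cap T=(S_1'\cap T)\cup(S_2'\cap T)=\emptyset$ and $(S_1\cup S_2)\cap T'=\emptyset$, so $\hT$ is boundary-disjoint from $\hS_1\cup\hS_2$, and analogously from $\hS_1\cap\hS_2$, $\hS_1\setminus\hS_2$, $\hS_2\setminus\hS_1$; in (a,a) both $\hS_1\cap\hS_2$ and $\hS_1\cup\hS_2$ remain $\subseteq\hT$, and the differences $\hS_1\setminus\hS_2,\hS_2\setminus\hS_1$ are $\subseteq\hS_1,\hS_2\subseteq\hT$; case (b,b) is dual.

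The main obstacle will simply be the bookkeeping: for bisets we must track both the inner part and the boundary separately, so what is essentially the standard uncrossing argument for families of sets requires care to ensure that the boundary conditions behave correctly under $\cap,\cup,\setminus$. I would present the argument by first listing the nine cases in a table, striking out those ruled out by the overlap hypothesis, and then verifying the survivors in uniform notation, which should keep the proof compact and free of repetition across the two halves of the lemma.
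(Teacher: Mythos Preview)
Your approach is correct and coincides with the paper's own (commented-out) proof: both reduce the lemma to the auxiliary claim that if $\hT$ is non-overlapping with both $\hS_1$ and $\hS_2$ then it is non-overlapping with each of $\hS_1\cap\hS_2$, $\hS_1\cup\hS_2$, $\hS_1\setminus\hS_2$, $\hS_2\setminus\hS_1$, and then verify this by the same case analysis on the type of non-overlap.

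One small wording issue in your reduction step: the auxiliary claim only establishes the implication $a=0\Rightarrow b=0$ (where $a,b$ count the $\hT$-overlaps before and after uncrossing), not the stronger $b\le a$ that your phrase ``can only stay the same or decrease'' suggests. This is harmless for the lemma, since when $a\ge 1$ the inequality $1+a\ge b$ holds automatically from $b\le 2$; but you should state the reduction that way rather than appealing to monotonicity you have not proved.
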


\begin{proofof}{Lemma~\ref{lem:pair-laminar-witness-family}}
	Let $\mF$ be the initial family of $F$-witness bisets of edges in
	$M$. If no two bisets in $\mF$ overlap, $\mF$ is the desired
	family. Otherwise, let $\hS_{e_1}$ and $\hS_{e_2}$ be two
	bisets that overlap. By Lemma~\ref{lem:pair-uncrossing} and
	Lemma~\ref{lem:pair-uncrossing-progress}, we can replace
	$\hS_{e_1}$ and $\hS_{e_2}$ with either $\hS_{e_1} \cap \hS_{e_2}$
	and $\hS_{e_1} \cup \hS_{e_2}$ or by $\hS_{e_1}\setminus \hS_{e_2}$ and
	$\hS_{e_2} \setminus \hS_{e_1}$. The resulting family is an $F$-witness family
	of $M$ that has fewer overlapping bisets. Thus we can repeat this
	process until we get a non-overlapping $F$-witness family.
\end{proofof}

\end{document}